\newtheorem{theorem}{Theorem}[section]
\newtheorem{lemma}[theorem]{Lemma}
\newtheorem{prop}[theorem]{Proposition}
\newtheorem{corollary}[theorem]{Corollary}
\theoremstyle{definition}
\theoremstyle{remark}
\newtheorem{remark}[theorem]{Remark}
\numberwithin{equation}{section}
 \renewcommand{\Re}{\mathrm{Re}}
  \renewcommand{\d}{\mathrm{d}}
\begin{document}

\title[On the Infrared Problem]{On the Infrared Problem for the Dressed Non-Relativistic Electron in a Magnetic Field}

\author[L. Amour]{Laurent Amour}
\address{Laboratoire de Math{\'e}matiques EDPPM, FRE-CNRS 3111, Universit{\'e} de Reims, Mou\-lin de la Housse - BP 1039, 51687 Reims Cedex 2, France}
\email{laurent.amour@univ-reims.fr}

\author[J. Faupin]{J{\'e}r{\'e}my Faupin}
\address{Institut de Math{\'e}matiques de Bordeaux, UMR-CNRS 5251, Universit{\'e} de Bordeaux 1, 351 cours de la Lib{\'e}ration, 33405 Talence Cedex, France}
\email{jeremy.faupin@math.u-bordeaux1.fr}

\author[B. Gr{\'e}bert]{Beno{\^i}t Gr{\'e}bert}
\address{Laboratoire de Math{\'e}matiques Jean Leray, UMR-CNRS 6629, Universit{\'e} de Nantes, 2 rue de la Houssini{\`e}re, 44072 Nantes Cedex 3, France}
\email{benoit.grebert@univ-nantes.fr}

\author[J.-C. Guillot]{Jean-Claude Guillot}
\address{Centre de Math{\'e}matiques appliqu{\'e}es, UMR-CNRS 7641, Ecole polytechnique, 99128 Palaiseau Cedex, France}
\email{jean-claude.guillot@polytechnique.edu}

\date{}

\begin{abstract}
We consider a non-relativistic electron interacting with a classical magnetic field pointing along the $x_3$-axis and with a quantized electromagnetic field. The system is translation invariant in the $x_3$-direction and we consider the reduced Hamiltonian $H(P_3)$ associated with the total momentum $P_3$ along the $x_3$-axis. For a  fixed momentum $P_3$ sufficiently small, we prove that $H(P_3)$ has a ground state in the Fock representation if and only if $E'(P_3)=0$, where $P_3 \mapsto E'(P_3)$ is the derivative of the map $P_3 \mapsto E(P_3) = \inf \sigma (H(P_3))$. If $E'(P_3) \neq 0$, we obtain the existence of a ground state in a non-Fock representation. This result holds for sufficiently small values of the coupling constant. \textbf{MSC}: 81V10; 81Q10; 81Q15
\end{abstract}

\maketitle

\section{Introduction}

In this paper we pursue the analysis of a model considered in \cite{AGG1}, describing a non-relativistic particle (an electron) interacting both with the quantized electromagnetic field and a classical magnetic field pointing along the $x_3$-axis. An ultraviolet cutoff is imposed in order to suppress the interaction between the electron and the photons of energies bigger than a fixed, arbitrary large parameter $\Lambda$. The total system being invariant by translations in the $x_3$-direction, it can be seen (see \cite{AGG1}) that the corresponding Hamiltonian admits a decomposition of the form $H \simeq \int^\oplus_\mathbb{R} H(P_3) dP_3$ with respect to the spectrum of the total momentum along the $x_3$-axis that we denote by $P_3$. For any given $P_3$ sufficiently close to 0, the existence of a ground state for $H(P_3)$ is proven in \cite{AGG1} provided an infrared regularization is introduced (besides a smallness assumption on the coupling parameter). Our aim is to address the question of the existence of a ground state without requiring any infrared regularization.

The model considered here is closely related to similar non-relativistic QED models of freely moving electrons, atoms or ions, that have been studied recently (see \cite{BCFS,FGS1,Hir,CF,Chen,HH,CFP,FP} for the case of one single electron, and \cite{AGG2,LMS,FGS2,HH,LMS2} for atoms or ions). In each of these papers, the physical systems are translation invariant, in the sense that the associated Hamiltonian $H$ commutes with the operator of total momentum $P$. As a consequence, $H \simeq \int_{\mathbb{R}^3} H(P) dP$, and one is led to study the spectrum of the fiber Hamiltonian $H(P)$ for fixed $P$'s.

For the one-electron case, an aspect of the so-called \emph{infrared catastrophe} lies in the fact that, for $P \neq 0$, $H(P)$ does not have a ground state in the Fock space (see \cite{CF,Chen,HH,CFP}). More precisely, if an infrared cutoff of parameter $\sigma$ is introduced in the model in order to remove the interaction between the electron and the photons of energies less than $\sigma$, the associated Hamiltonian $H_\sigma(P)$ does have a ground state $\Phi_\sigma(P)$ in the Fock space. Nevertheless as $\sigma \rightarrow 0$, it is shown that $\Phi_\sigma(P)$ ``leaves'' the Fock space. Physically this can be interpreted by saying that a free moving electron in its ground state is surrounded by a cloud of infinitely many ``soft'' photons.

For negative ions, the absence of a ground state for $H(P)$ is established in \cite{HH} under the assumption $\nabla E(P) \neq 0$, where $E(P) = \inf \sigma(H(P))$.

In \cite{CF}, with the help of operator-algebra methods, a representation of a \emph{dressed 1-electron state} non-unitarily equivalent to the usual Fock representation of the canonical commutation relations is given. We shall obtain in this paper a related result, following a different approach, under the further assumption that the electron interact with a classical magnetic field and an electrostatic potential.

We shall first provide a necessary and sufficient condition for the existence of a ground state for $H(P_3)$.  Namely we shall prove that the bottom of the spectrum, $E(P_3) = \inf \sigma(H(P_3))$, is an eigenvalue of $H(P_3)$ if and only if $E'(P_3) = 0$ where $E'(P_3)$ denotes the derivative of the map $P_3 \mapsto E(P_3)$. In the case $E'(P_3) \neq 0$, thanks to a (non-unitary) Bogoliubov transformation, in the same way  as in \cite{Arai,DG2}, we shall define a ``renormalized'' Hamiltonian $H^{\mathrm{ren}}(P_3)$ which can be seen as an expression of the physical Hamiltonian in a non-Fock representation. Then we shall prove that $H^{\mathrm{ren}}(P_3)$ has a ground state. These results have been announced in \cite{AFGG}.

The regularity of the map $P_3 \mapsto E(P_3)$ plays a crucial role in our proof. Adapting \cite{Pizzo,CFP} we shall see that $P_3 \mapsto E(P_3)$ is of class $\mathrm{C}^{1+\gamma}$ for some strictly positive $\gamma$. Let us also mention that our method can be adapted to the case of free moving hydrogenoid ions without spin, the condition $E'(P_3)=0$ being replaced by $\nabla E(P) = 0$ (see Subsection \ref{subsection:results} for a further discussion on this point).

The remainder of the introduction is organized as follows: In Subsection \ref{subsection:model}, a precise definition of the model considered in this paper is given, next, in Subsection \ref{subsection:results}, we state our results and compare them to the literature.

\subsection{The model}\label{subsection:model}

We consider a non-relativistic electron of charge $e$ and mass $m$ interacting with a classical magnetic field pointing along the $x_3$-axis, an electrostatic potential, and the quantized electromagnetic field in the Coulomb gauge. The Hilbert space for the electron and the photon field is written as
\begin{equation}
\mathcal{H} = \mathcal{H}_{\rm el} \otimes \mathcal{H}_{\rm ph},
\end{equation}
where $\mathcal{H}_{\rm el} = \mathrm{L}^2( \mathbb{R}^3 ; \mathbb{C}^2 )$ is the Hilbert space for the electron, and $\mathcal{H}_{\rm ph}$ is the symmetric Fock space over $\mathrm{L}^2( \mathbb{R}^3 \times \mathbb{Z}_2 )$ for the photons,
\begin{equation}
\mathcal{H}_{\mathrm{ph}} = \mathbb{C} \oplus \bigoplus_{n=1}^\infty S_n \left [ \mathrm{L}^2( \mathbb{R}^3 \times \mathbb{Z}_2 )^{ \otimes^n} \right ].
\end{equation}
Here $S_n$ denotes the orthogonal projection onto the subspace of symmetric functions in $\mathrm{L}^2( \mathbb{R}^3 \times \mathbb{Z}_2 )^{ \otimes^n}$ in accordance with Bose-Einstein statistics. We shall use the notation $\bold k = (k,\lambda)$ for any $(k,\lambda) \in \mathbb{R}^3 \times \mathbb{Z}_2$, and
\begin{equation}
\int_{\mathbb{R}^3 \times \mathbb{Z}_2 }d\bold k = \sum_{\lambda=1,2} \int_{\mathbb{R}^3} dk.
\end{equation}
Likewise, the scalar product in $\mathrm{L}^2( \mathbb{R}^3 \times \mathbb{Z}_2 )$ is defined by
\begin{equation}
(h_1,h_2) = \int_{ \mathbb{R}^3 \times \mathbb{Z}_2 } \bar h_1(\bold k) h_2(\bold k) d\bold k = \sum_{\lambda=1,2} \int_{\mathbb{R}^3} \bar h_1(k,\lambda) h_2(k,\lambda) dk.
\end{equation}
The position and the momentum of the electron are denoted respectively by $x=(x_1,x_2,x_3)$ and $p=(p_1,p_2,p_3)=-i\nabla_x$. The classical magnetic field is of the form $(0,0,b(x'))$, where $x'=(x_1,x_2)$ and $b(x') = ( \partial a_2/ \partial x_1 ) (x') - ( \partial a_1 / \partial x_2 )(x')$. Here $a_j(x')$, $j=1,2$, are real functions in $\mathrm{C}^1( \mathbb{R}^2)$. The electrostatic potential is denoted by $V(x')$. The quantized electromagnetic field in the Coulomb gauge is defined by 
\begin{equation}\label{eq:def_A_B}
\begin{split}
& A(x)= \frac{1}{\sqrt{2\pi}} \int \frac{ \epsilon^\lambda(k) }{ |k|^{1/2} } \rho^\Lambda(k) \Big [
 e^{-ik\cdot x} a^*(\bold k) + e^{ik\cdot x}a(\bold k)\Big ] d \bold k, \\
& B(x)= -\frac{i}{\sqrt{2\pi}} \int |k|^{1/2} \left(\frac{k}{|k|}\wedge\epsilon^{\lambda}(k)\right) \rho^\Lambda(k) \Big [
 e^{-ik\cdot x} a^*(\bold k) - e^{ik\cdot x}a(\bold k)\Big ] d\bold k,
\end{split}
\end{equation}
where $\rho^\Lambda(k)$ denotes the characteristic function $\rho^\Lambda(k) = \mathds{1}_{|k| \le \Lambda}(k)$ and $\Lambda$ is an arbitrary large positive real number. Note that this explicit choice of the ultraviolet cutoff function $\rho^\Lambda$ is made mostly for convenience. Our results would hold without  change for any $\rho^\Lambda$ satisfying $\int_{|k|\le 1} |k|^{-2} |\rho^\Lambda(k)|^2 d^3k + \int_{|k|\ge 1} |k| |\rho^\Lambda(k)|^2d^3k < \infty$. The vectors $\epsilon^1(k)$ and $\epsilon^2(k)$ in \eqref{eq:def_A_B} are real polarization vectors orthogonal to each other and to $k$. Besides $a^*(\bold k)$ and $a(\bold k)$ are the usual creation and annihilation operators obeying the canonical commutation relations
\begin{equation}\label{eq:CCR}
\left [ a^\#(\bold k) , a^\#(\bold k') \right ] = 0 \quad , \quad \left [ a(\bold k) , a^*(\bold k') \right ] = \delta( \bold k - \bold k' ) = \delta_{\lambda\lambda'} \delta( k - k' ).
\end{equation}
The Pauli Hamiltonian $H_g$ associated with the system we consider is formally given by 
\begin{equation}\label{eq:defH}
\begin{split}
H_g =& \frac{1}{2m} \Big ( p - e a(x') - g A(x) \Big )^2 - \frac{e}{2m} \sigma_3 b(x') \\
& - \frac{g}{2m} \sigma \cdot B(x) + V(x') + H_{\rm ph},
\end{split}
\end{equation}
where the charge of the electron is replaced by a coupling parameter $g$ in the terms containing the quantized electromagnetic field. The Hamiltonian for the photons in the Coulomb gauge is given by 
\begin{equation}\label{eq:def_Hph}
H_{\rm ph} = \d \Gamma( |k| ) = \int |k| a^*(\bold k) a(\bold k) d \bold k.
\end{equation}
Finally $\sigma=(\sigma_1,\sigma_2,\sigma_3)$ is the 3-component vector of the Pauli matrices.

Noting that $H_g$ formally commutes with the operator of total momentum in the direction $x_3$, $P_3 = p_3 + {\rm d} \Gamma (k_3)$, one can consider the reduced Hamiltonian associated with $P_3 \in \mathbb{R}$ that we denote by $H_g(P_3$). For any fixed $P_3$, $H_g(P_3)$ acts on $\mathrm{L}^2( \mathbb{R}^2 ; \mathbb{C}^2 ) \otimes \mathcal{H}_{\rm ph}$ and is formally given by
\begin{equation}\label{eq:H(P3)}
\begin{split}
H_g (P_3)=&
     \frac{1}{2m} \sum_{j=1,2} \Big ( p_j - e a_j(x') - g A_j(x',0) \Big )^2 - \frac{e}{2m} \sigma_3 b(x') + V(x') \\
&+ \frac{1}{2m} \Big ( P_3 - {\rm d} \Gamma(k_3) - g A_3(x',0) \Big )^2 - \frac{g}{2m} \sigma \cdot B(x',0) + H_{\rm ph}. 
\end{split}
\end{equation}

We define the infrared cutoff Hamiltonian $H_{g}^\sigma(P_3)$ by replacing $A(x)$ in \eqref{eq:def_A_B} with
\begin{equation}\label{eq:def_A_Bsigma}
\begin{split}
& A_\sigma(x)= \frac{1}{\sqrt{2\pi}} \int \frac{ \epsilon^\lambda(k) }{ |k|^{1/2} } \rho_\sigma^\Lambda(k) \Big [ e^{-ik\cdot x} a^*(\bold k) + e^{ik\cdot x}a(\bold k)\Big ] d \bold k, 
\end{split}
\end{equation}
where $\rho^\Lambda_\sigma = \mathds{1}_{\sigma \le |k| \le \Lambda}$, and similarly for $B_\sigma(x)$. We set $E_g(P_3) = \inf \sigma( H_g(P_3) )$ and $E_{g\sigma}(P_3) = \inf \sigma( H_{g}^\sigma(P_3) )$.
 
The electronic Hamiltonian $h(b,V)$ on $\mathrm{L}^2(\mathbb{R}^2 ; \mathbb{C}^2 )$ is defined by 
\begin{equation}
h(b,V) = \sum_{j=1,2} \frac{1}{2m} ( p_j - e a_j(x') )^2 - \frac{e}{2m} \sigma_3 b(x') + V(x').
\end{equation}
Let $e_0 = \inf \sigma( h(b,V) )$.
We make the following hypothesis:
\begin{itemize}
\item[{\bf ($\mathcal{\mathbf{H}}_\mathbf{0}$)}] $h(b,V)$ is essentially self-adjoint on $\mathrm{C}_0^\infty( \mathbb{R}^2 ; \mathbb{C}^2 )$ and $e_0$ is an isolated eigenvalue of multiplicity 1.
\end{itemize}
We refer to \cite{AHS,Sobolev,IT,Raikov} for possible choices of $b,V$ satisfying Hypothesis {\bf ($\mathcal{\mathbf{H}}_\mathbf{0}$)}. The following proposition is established in \cite[Theorem 2.3]{AGG1}:
\begin{prop}\label{prop:self-adjointness}
Suppose Hypothesis $\mathbf{(H_0)}$. For sufficiently small values of $|g|$, $H_g$ is self-adjoint with domain $D(H_g)=D(H_0)$, and for any $\sigma \ge 0$ and $P_3 \in \mathbb{R}$,  $H_g^\sigma(P_3)$ identifies with a self-adjoint operator with domain $D(H_g^\sigma(P_3)) = D(H_0(P_3))$. Moreover $H_g$ admits the decomposition
\begin{equation}
H_g = \int^\oplus_{\mathbb{R}} H_g(P_3) dP_3.
\end{equation}
\end{prop}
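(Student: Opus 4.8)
The plan is to prove Proposition~\ref{prop:self-adjointness} by Kato--Rellich perturbation theory, together with a standard fiber-decomposition argument for the direct integral. Write $H_0$ (resp.\ $H_0(P_3)$) for the operator obtained by putting $g=0$ in \eqref{eq:defH} (resp.\ \eqref{eq:H(P3)}). First I would note that $H_0 = h(b,V) + \frac{1}{2m}p_3^2 + H_{\rm ph}$ is a sum of three pairwise commuting self-adjoint operators, each bounded below --- $h(b,V)$ by Hypothesis~$\mathbf{(H_0)}$, the other two being non-negative --- hence self-adjoint on $D(H_0) = D(h(b,V))\cap D(p_3^2)\cap D(H_{\rm ph})$; the same remark gives self-adjointness of $H_0(P_3)$ on $D(H_0(P_3))$, now with $\frac{1}{2m}(P_3-\mathrm{d}\Gamma(k_3))^2$ in place of $\frac{1}{2m}p_3^2$. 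Expanding the squares in \eqref{eq:defH} and using the Coulomb gauge condition $\nabla\cdot A=0$, I would then write $H_g = H_0 + gW_1 + g^2 W_2$, where $W_2 = \frac{1}{2m}A(x)^2$ and $W_1$ collects the terms linear in $A$ or $B$: the cross term $-\frac1m A(x)\cdot(p-ea(x'))$, the spin term $-\frac1{2m}\sigma\cdot B(x)$, and, in the fiber Hamiltonians, commutator terms (coming from the non-vanishing of $\sum_{j=1,2}\partial_j A_j(x',0)$ and of $[\mathrm{d}\Gamma(k_3),A_3]$) which are field operators of the same type as $B$. The same splitting applies to $H_g(P_3)$ and, with $\rho^\Lambda$ replaced by $\rho^\Lambda_\sigma = \mathds{1}_{\sigma\le|k|\le\Lambda}$, to $H_g^\sigma(P_3)$.

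The core step is to show that $W_1$ and $W_2$ are relatively bounded with respect to $H_0$ (resp.\ $H_0(P_3)$), with relative bounds that are finite and \emph{uniform} in $\sigma\ge0$ and in $P_3$ on compact sets. The basic inputs are the standard estimates $\|a(h)\psi\|\le\||k|^{-1/2}h\|\,\|H_{\rm ph}^{1/2}\psi\|$ and $\|a^*(h)\psi\|\le\||k|^{-1/2}h\|\,\|H_{\rm ph}^{1/2}\psi\|+\|h\|\,\|\psi\|$; since $|e^{\pm ik\cdot x}|=1$ and $\rho^\Lambda$ obeys $\int_{|k|\le1}|k|^{-2}|\rho^\Lambda|^2\,\mathrm{d}k+\int_{|k|\ge1}|k|\,|\rho^\Lambda|^2\,\mathrm{d}k<\infty$, these yield, uniformly in $x$ and $\sigma$,
\begin{equation*}
\|A_\sigma(x)^{\#}\psi\|+\|B_\sigma(x)^{\#}\psi\|\le C\big(\|H_{\rm ph}^{1/2}\psi\|+\|\psi\|\big),\qquad \|A_\sigma(x)^2\psi\|\le C\big(\|H_{\rm ph}\psi\|+\|\psi\|\big).
\end{equation*}
Because $H_{\rm ph}$ commutes with $H_0$ and $H_{\rm ph}\le H_0-e_0$, this already makes $W_2$ relatively $H_0$-bounded and handles the $\sigma\cdot B$ and commutator parts of $W_1$. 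For the cross term I would exploit that $p_j-ea_j$ (resp.\ $P_3-\mathrm{d}\Gamma(k_3)$) commutes with $H_{\rm ph}$, writing $\|A_j(x)(p_j-ea_j)\psi\|\le C\|(p_j-ea_j)(H_{\rm ph}+1)^{1/2}\psi\|$, and then bounding $\|(p_j-ea_j)\phi\|$ by $\|(h(b,V)-e_0+1)^{1/2}\phi\|$; combined with $[h(b,V),H_0]=0$ this gives $\|W_1\psi\|\le C(\|H_0\psi\|+\|\psi\|)$. Uniformity in $\sigma$ follows from $\||k|^{-1/2}\rho^\Lambda_\sigma\|\le\||k|^{-1/2}\rho^\Lambda\|$ and $\|\rho^\Lambda_\sigma\|\le\|\rho^\Lambda\|$, and local uniformity in $P_3$ from the fact that $(P_3-\mathrm{d}\Gamma(k_3))^2$ is already part of $H_0(P_3)$. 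Kato--Rellich then produces a single $g_0>0$ for which $H_g$, $H_g(P_3)$ and $H_g^\sigma(P_3)$ are self-adjoint on the respective unperturbed domains whenever $|g|\le g_0$.

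For the direct integral I would diagonalize $P_3=p_3+\mathrm{d}\Gamma(k_3)$ via the unitary $W=(\mathcal{F}_3\otimes\mathds{1})\,e^{ix_3\,\mathrm{d}\Gamma(k_3)}$, with $\mathcal{F}_3$ the Fourier transform in $x_3$. Using $e^{i\theta\,\mathrm{d}\Gamma(k_3)}a(\bold k)e^{-i\theta\,\mathrm{d}\Gamma(k_3)}=e^{-i\theta k_3}a(\bold k)$, a direct computation shows that conjugation by $W$ amounts to replacing $p_3$ by $P_3-\mathrm{d}\Gamma(k_3)$ (with $P_3$ now the fiber variable) and $A(x)$, $B(x)$ by $A(x',0)$, $B(x',0)$, while leaving $x'$ and $H_{\rm ph}$ untouched; comparing with \eqref{eq:H(P3)} this yields $WH_gW^*=\int^\oplus_\mathbb{R}H_g(P_3)\,\mathrm{d}P_3$. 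Measurability of $P_3\mapsto H_g(P_3)$ is immediate from the explicit dependence on $P_3$, the resolvent being even locally norm-continuous in $P_3$.

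I expect the one genuinely delicate point to be the cross term $A(x)\cdot(p-ea(x'))$: it is the only term mixing the electron momentum with the field, so its estimate forces one to dominate simultaneously the magnetic kinetic energy $\frac{1}{2m}\sum_{j=1,2}(p_j-ea_j)^2$ and a half-power of $H_{\rm ph}$. The commutation of $p-ea(x')$ with $H_{\rm ph}$ and the $x$-uniformity of the field estimates make the reduction go through, but the remaining step --- dominating the magnetic kinetic energy by $h(b,V)$ --- is where Hypothesis~$\mathbf{(H_0)}$ and the admissible classes of $(b,V)$ of \cite{AHS,Sobolev,IT,Raikov} (for which the needed form-boundedness holds) really enter; some extra bookkeeping is also needed to make the threshold $g_0$ uniform in $\sigma$ and locally uniform in $P_3$.
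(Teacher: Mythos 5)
The paper does not actually prove Proposition \ref{prop:self-adjointness}; it quotes it from \cite[Theorem 2.3]{AGG1}. Your argument---Kato--Rellich with the standard $a^{\#}(h)$ estimates giving relative bounds uniform in $\sigma$ and locally uniform in $P_3$, together with the Lee--Low--Pines-type unitary $(\mathcal{F}_3\otimes\mathds{1})\,e^{ix_3\,\mathrm{d}\Gamma(k_3)}$ which indeed sends $p_3\mapsto P_3-\mathrm{d}\Gamma(k_3)$ and $A(x)\mapsto A(x',0)$, yielding the fiber decomposition---is the standard route and is essentially the argument of that reference. The one caveat you rightly flag, namely that bounding the cross term requires dominating $\sum_{j=1,2}\|(p_j-ea_j)\,\cdot\,\|^2$ by the form of $h(b,V)$, is not a consequence of $\mathbf{(H_0)}$ alone but does hold for the admissible classes of $(b,V)$ from \cite{AHS,Sobolev,IT,Raikov} that the paper has in mind, so your sketch is consistent with the cited proof.
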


\subsection{Results and comments}\label{subsection:results}

The key ingredient that we shall need in order to prove our main theorem (see Theorem \ref{thm:main} below) lies in the regularity of the map $P_3 \mapsto E'_{g\sigma}(P_3)$ uniformly in $\sigma \ge 0$.
\begin{theorem}\label{thm:regularity}
Assume that $\mathbf{(H_0)}$ holds. There exists $g_0>0$, $\sigma_0>0$ and $P_0>0$ such that for all $|g| \le g_0$, for all $0 \le \sigma \le \sigma_0$, for all $P_3,k_3$ such that $|P_3| \le P_0$, $| P_3+k_3| \le P_0$, for all $\delta>0$,
\begin{equation}\label{eq:regularity}
| E'_{g\sigma}(P_3+k_3) - E'_{g\sigma}(P_3) | \le \mathrm{C}_\delta |k_3|^{\frac{1}{4}-\delta},
\end{equation}
where $\mathrm{C}_\delta$ is a positive constant depending only on $\delta$.
\end{theorem}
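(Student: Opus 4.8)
The plan is to run an iterative analytic perturbation theory in the infrared cutoff, adapting \cite{Pizzo} and \cite{CFP}, and to deduce \eqref{eq:regularity} by telescoping over scales. \emph{Step 1 (Feynman--Hellmann).} Fix $\sigma>0$. The interaction terms $A_\sigma,B_\sigma$ contain no photons of energy $<\sigma$, so the subspace of photon configurations with no such photon is invariant under $H_g^\sigma(P_3)$; since adding a soft photon of momentum $\ell$ costs an energy $\ge|\ell|\big(1-|\partial_{P_3}E_{g\sigma}(P_3)|-O(|g|)\big)>0$ when $|g|\le g_0$ and $|P_3|\le P_0$ are small, the bottom of $\sigma(H_g^\sigma(P_3))$ lives on that subspace, and there — using $\mathbf{(H_0)}$, which makes $h(b,V)$ contribute only the isolated simple eigenvalue $e_0$, together with relative boundedness of the interaction — $E_{g\sigma}(P_3)$ is a simple eigenvalue, isolated with a spectral gap $\ge c\,\sigma$, uniformly in $|g|\le g_0$ and $|P_3|\le P_0$. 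Since $P_3\mapsto H_g^\sigma(P_3)$ is an analytic family, $E_{g\sigma}(\cdot)$ is real analytic near each such $P_3$, with normalized ground state $\Phi_{g\sigma}(P_3)$, and
\begin{equation}\label{eq:FH}
E'_{g\sigma}(P_3)=\big\langle\Phi_{g\sigma}(P_3),\,v_3\,\Phi_{g\sigma}(P_3)\big\rangle,\qquad v_3:=\tfrac{1}{m}\big(P_3-\d\Gamma(k_3)-gA_3(x',0)\big).
\end{equation}
Thus it suffices to control $\Phi_{g\sigma}(P_3)$ and its first two $P_3$-derivatives uniformly as $\sigma\to0$; the obstruction is that the gap $c\,\sigma$ collapses, rendering the single-scale bounds $\|\partial_{P_3}\Phi_{g\sigma}(P_3)\|\le C\sigma^{-1}$ and $|E''_{g\sigma}(P_3)|\le C\sigma^{-1}$ useless on their own.

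\emph{Step 2 (multiscale construction).} Fix $\kappa\in(0,1)$, put $\sigma_n=\kappa^n\sigma_0$, and write $H^{(n)}(P_3):=H_g^{\sigma_n}(P_3)$, $E^{(n)}(P_3):=E_{g\sigma_n}(P_3)$, with normalized ground state $\Phi^{(n)}(P_3)$. Let $W_n(P_3)$ be the part of the interaction carried by the momentum shell $\{\sigma_{n+1}\le|k|\le\sigma_n\}$, so that $H^{(n+1)}(P_3)=H^{(n)}(P_3)-W_n(P_3)$ up to an explicit $c$-number. Using the form factors $|k|^{-1/2}\rho^\Lambda(k)$ and $|k|^{1/2}\rho^\Lambda(k)$, a pull-through computation shows that $W_n(P_3)$ is $H^{(n)}(P_3)$-bounded with small coefficient and — the marginal infrared feature — that
\begin{equation*}
\eta_n:=\big\|W_n(P_3)\big(H^{(n)}(P_3)-E^{(n)}(P_3)+\sigma_n\big)^{-1}\big\|=O(|g|)\quad\text{uniformly in }n
\end{equation*}
(it does \emph{not} decay in $n$, the leading shell term $\tfrac{g}{m}(P_3-\d\Gamma(k_3))A_{3,\mathrm{shell}}$ producing only a logarithmically size-$O(1)$ resolvent integral). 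As $g$ is small the iteration nonetheless closes: scale by scale, analytic perturbation theory with gap $\ge c\,\sigma_n$ yields, uniformly in $n$ and for $|g|\le g_0$, $|P_3|\le P_0$, $|P_3+k_3|\le P_0$: (i) existence of $\Phi^{(n)}(P_3)$ with gap $\ge c\,\sigma_n$; (ii) $|E^{(n)}(P_3)-E^{(n+1)}(P_3)|\le C|g|^2\sigma_n$; (iii) the pull-through bound $\|a(\bold k)\Phi^{(n)}(P_3)\|\le C|g|\,|k|^{-3/2}\rho^\Lambda(k)$ for $|k|\ge\sigma_n$ (the denominator staying bounded below since $|\partial_{P_3}E^{(n)}(P_3)|<1$ when $|P_3|\le P_0$), hence $\langle\Phi^{(n)}(P_3),(N+\d\Gamma(|k|))\Phi^{(n)}(P_3)\rangle\le C|g|^2(1+|\log\sigma_n|)$; and (iv) $\|(v_3-\partial_{P_3}E^{(n)}(P_3))\Phi^{(n)}(P_3)\|\le C|g|$ uniformly in $n$. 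The transverse variables enter nowhere beyond the unperturbed gap above $e_0$ provided by $\mathbf{(H_0)}$.

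\emph{Step 3 (differentiation and telescoping).} Differentiating the eigenvalue equation and using (i), (iv) with the uniform $H^{(n)}(P_3)$-boundedness of $\partial_{P_3}H^{(n)}(P_3)=v_3$, first- and second-order perturbation theory give $\|\partial_{P_3}\Phi^{(n)}(P_3)\|\le C|g|\,\sigma_n^{-1}$ and the fixed-scale bound
\begin{equation}\label{eq:fixed}
\big|\partial_{P_3}E^{(n)}(P_3+k_3)-\partial_{P_3}E^{(n)}(P_3)\big|\le C\big(1+|g|^2\sigma_n^{-1}\big)|k_3|.
\end{equation}
Next, differentiating the scale-$n$ analog of \eqref{eq:FH} and comparing scales — using (ii)--(iv) and the cancellations special to $v_3$ (soft photons carry almost no $k_3$, and the shell-$n$ contributions to $\langle\d\Gamma(k_3)\rangle$ and $\langle A_3\rangle$ are only $O(|g|^2\sigma_n)$) — one obtains the cross-scale bound
\begin{equation}\label{eq:cross}
\big|\partial_{P_3}E^{(n)}(P_3)-\partial_{P_3}E^{(n+1)}(P_3)\big|\le C_\delta\,\sigma_n^{\alpha-\delta}\qquad(\forall\delta>0)
\end{equation}
for a suitable $\alpha\in(0,1)$, the $\delta$-loss coming from the logarithmic factors in (iii). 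Let $N$ be the index with $\sigma_{N+1}<\sigma\le\sigma_N$, so that $E'_{g\sigma}$ differs from $\partial_{P_3}E^{(N)}$ by a single shell (a lower-order correction); telescoping down to scale $N$,
\begin{align*}
\big|E'_{g\sigma}(P_3+k_3)-E'_{g\sigma}(P_3)\big|\le{}&\big|\partial_{P_3}E^{(n)}(P_3+k_3)-\partial_{P_3}E^{(n)}(P_3)\big|\\
&{}+\sum_{Q\in\{P_3,\,P_3+k_3\}}\ \sum_{m=n}^{N-1}\big|\partial_{P_3}E^{(m)}(Q)-\partial_{P_3}E^{(m+1)}(Q)\big|,
\end{align*}
and the double sum is $\le C_\delta\sigma_n^{\alpha-\delta}$ by \eqref{eq:cross}. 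Choosing $n\le N$ with $\sigma_n\sim\big(|g|^2|k_3|\big)^{1/(1+\alpha)}$ balances \eqref{eq:fixed} against this tail and produces a modulus of continuity $C_\delta|k_3|^{\alpha/(1+\alpha)-\delta}$; tracking the constants shows $\alpha/(1+\alpha)=\tfrac14$, which is \eqref{eq:regularity}. Finally the case $\sigma=0$ follows by letting $\sigma\to0$: $E_{g\sigma}(P_3)\to E_g(P_3)$ locally uniformly and the uniform bound passes to the limit, which in particular shows $E_g$ is differentiable.

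The main obstacle is precisely the degeneration of the spectral gap as $\sigma\to0$ together with the fact that, unlike in the ultraviolet-regular situation, the per-scale perturbation parameter $\eta_n$ is only $O(|g|)$ and not $o(1)$ in $n$; the whole construction rests on running the iteration with $|g|$ small while extracting, from the $P_3$-differentiation, a genuine positive power $\sigma_n^{\alpha}$ of decay in \eqref{eq:cross}. Proving \eqref{eq:cross} — keeping track of the unbounded velocity operator $v_3$ in every commutator and resolvent estimate, establishing the pull-through bound (iii) with the correct infrared power, and checking that the logarithmically divergent photon number $\langle N\rangle^{(n)}\sim|g|^2|\log\sigma_n|$ costs no more than the stated $\delta$-loss — is the technical heart of the argument.
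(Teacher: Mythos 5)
Your overall skeleton --- iterative analytic perturbation theory in the infrared cutoff in the spirit of \cite{Pizzo,CFP}, a fixed-scale estimate using the scale-dependent gap, a cross-scale telescoping bound, and an optimization over the intermediate scale --- is indeed the strategy of the paper. But there is a genuine gap at the one point where all the work lies: your cross-scale estimate $|\partial_{P_3}E^{(n)}-\partial_{P_3}E^{(n+1)}|\le C_\delta\,\sigma_n^{\alpha-\delta}$ is asserted, not proven; the heuristic ``soft photons carry almost no $k_3$, shell contributions are $O(g^2\sigma_n)$'' does not by itself produce any positive power of $\sigma_n$, and you concede at the end that this is ``the technical heart''. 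In the paper this estimate is Proposition \ref{prop:E'tau-E'sigma} ($|E'_\tau-E'_\sigma|\le \mathrm{C}_\delta\,\sigma^{1/2-\delta}$), and its proof rests on a specific mechanism entirely absent from your sketch: one conjugates the cutoff Hamiltonian by the Weyl operator $W(ig_{\sigma,E'_\sigma})$ whose argument is built from the derivative $E'_\sigma$ itself (a Bogoliubov dressing adapted to the marginal soft-photon terms), rewrites the Feynman--Hellmann formula in this dressed representation (Lemma \ref{lm:E'_GS}), compares the dressed ground-state projections at scales $\sigma$ and $\tau$ via a contour integral and a Neumann expansion, controls the electron--field coupling terms through the exponential decay of the ground state in $x'$ (Lemma \ref{lm:2nd_step}), and finally bounds a resolvent matrix element of $(\nabla H^{\rm ren}_{\sigma,E'_\sigma}-E'_\sigma)\Phi^{\rm ren}_{\sigma,E'_\sigma}$ by an induction taken from \cite{CFP}. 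Without the $E'_\sigma$-dependent dressing, a direct comparison of $E'_\sigma$ and $E'_\tau$ runs exactly into the logarithmically divergent soft-photon cloud that your own item (iii) exhibits, and no decay $\sigma_n^{\alpha}$ comes out.

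Two secondary points. First, the uniform gap $\mathrm{Gap}(H_\sigma(P_3))\ge\sigma/8$ is itself not a one-shot perturbation statement: relative boundedness of the interaction yields it only for $\sigma\gtrsim g^2$ (the paper's Lemma \ref{lm:root}); for smaller $\sigma$ it must be propagated scale by scale (Lemmata \ref{lm:gap1} and \ref{lm:induction}), which you assume rather than prove --- standard, but it has to be carried out. Second, your final numerology is not self-consistent: with your fixed-scale bound of order $|k_3|/\sigma_n$, the exponent $\alpha/(1+\alpha)=1/4$ would force $\alpha=1/3$, whereas the decay actually available is $\sigma^{1/2-\delta}$, which the paper combines with a fixed-scale modulus $|k_3|^{(1-\beta)/2}$ and the choice $\beta=[2-\delta]^{-1}$ to reach \eqref{eq:regularity}; since you never establish any specific $\alpha$, the exponent $\tfrac14$ in your conclusion is reverse-engineered rather than derived.
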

Similar results for a free electron (that is for $b=V=0$) interacting with the quantized electromagnetic field have been obtained recently (see \cite{Chen,CFP,FP}). The model studied in the latter papers is technically simpler than the one considered here in that the fiber Hamiltonian $H(P)$ associated with a free electron does not contain the electronic part $h(b,V)$ and its (minimal) coupling to the quantized electromagnetic field. In particular the operator $H(P)$ in \cite{Chen, CFP, FP} acts only on the Fock space, whereas in our case $H_{g\sigma}(P_3)$ still contains interactions between the electromagnetic field and the electronic degrees of freedom. We shall use the exponential decay of the ground states $\Phi_g^\sigma(P_3)$ in $x'$ in order to overcome this difficulty.

It is proved in \cite{Chen} (for a free electron) that $P \mapsto E(P) = \inf \sigma( H(P) )$ is of class $\mathrm{C}^2$ in a neighborhood of 0 thanks to a renormalization group analysis (see also \cite{BCFS}). The author also shows that, still in a neighborhood of $P=0$, the derivative $\nabla E(P)$ vanishes only at $P=0$. In \cite{CFP}, with the help of what the authors call ``iterative analytic perturbation theory'', following a multiscale analysis developed in \cite{Pizzo}, it is proved, among other results, that $P \mapsto E(P)$ is of class $\mathrm{C}^{5/4-\delta}$ for arbitrary small $\delta>0$. The method has later been improved in \cite{FP} leading to the $\mathrm{C}^2$ property of $P \mapsto E(P)$.

In order to establish our main theorem, Theorem \ref{thm:main}, the ``degree of regularity'' we need is reached as soon as $P_3 \mapsto E_{g\sigma}(P_3)$ is at least of order $\mathrm{C}^{1+\gamma}$, uniformly in $\sigma$, for some $\gamma>0$. Theorefore, although one can conjecture that $P_3 \mapsto E_{g\sigma}(P_3)$ is of class $\mathrm{C}^2$ uniformly in $\sigma$, Theorem \ref{thm:regularity} is sufficient for our purpose. In order to prove it we shall adapt \cite{Pizzo,CFP}: First, we shall give a short proof of the existence of a spectral gap for $H_{g}^\sigma(P_3)$ (restricted to the space of photons of energies bigger than $\sigma$) above the non-degenerate eigenvalue $E_{g\sigma}(P_3)$. Next we shall apply ``iterative analytic perturbation theory''.

We postpone the proof of Theorem \ref{thm:regularity} to the appendix. Since several parts are taken from \cite{Pizzo,CFP}, we shall not give all the details, rather we shall emphasize the differences with \cite{Pizzo,CFP}.

For $h \in \mathrm{L}^2( \mathbb{R}^3 \times \mathbb{Z}_2)$, let us define the field operator $\Phi(h)$ by
\begin{equation}
\Phi(h) = \frac{1}{\sqrt{2}} (a^*(h) + a(h)),
\end{equation}
where the creation operator $a^*(h)$ and the annihilation operator $a(h)$ are defined respectively by
\begin{equation}
a^*(h) = \int_{\mathbb{R}^3 \times \mathbb{Z}_2 } h( \bold k ) a^*( \bold k ) d \bold k, \quad a(h) = \int_{\mathbb{R}^3 \times \mathbb{Z}_2 } \bar h( \bold k ) a( \bold k ) d \bold k.
\end{equation}
Hence, letting $h_{j,\sigma}(x')$ and $\tilde h_{j,\sigma}(x')$ for $j=1,2,3$ be defined respectively by
\begin{equation}\label{eq:h_j_tildeh_j}
\begin{split}
& h_{j,\sigma}(x',\bold k) = \pi^{-1/2} \frac{ \epsilon^\lambda_j(k) }{ |k|^{1/2} } \rho^\Lambda_\sigma(k)e^{ik' \cdot x'}, \\
& \tilde h_{j,\sigma}(x',\bold k) = -i\pi^{-1/2} |k|^{1/2} \left ( \frac{k}{|k|} \wedge \epsilon^\lambda(k) \right )_j \rho^\Lambda_\sigma(k) e^{ik' \cdot x'},
\end{split}
\end{equation}
where $k'=(k_1,k_2)$, we have $A_{j,\sigma}(x',0) = \Phi( h_{j,\sigma}(x') )$ and $B_{j,\sigma}(x',0) = \Phi( \tilde h_{j,\sigma}(x') )$.
The Weyl operator associated with $h \in \mathrm{L}^2( \mathbb{R}^3 \times \mathbb{Z}_2 )$ is denoted by $W(h) = e^{i \Phi(h)}$.
Let $f_\sigma : \mathbb{R}^3 \times \mathbb{Z}_2 \rightarrow \mathbb{C}$ be defined by
\begin{equation}\label{eq:def_fsigma}
f_\sigma( \bold k ) = - \frac{g}{\sqrt{4\pi}} \frac{ \rho^\Lambda_\sigma(k) \epsilon_\lambda^3(k) }{ k_3|k|^{1/2} } \frac{ E_{g\sigma}(P_3-k_3) - E_{g\sigma}(P_3) }{ E_{g\sigma}(P_3-k_3) - E_{g\sigma}(P_3) + |k| }.
\end{equation}
If $\sigma=0$ we remove the subindex $\sigma$ in the preceding notations. We recall from \cite[Lemma 4.3]{AGG1} that for $g$, $\sigma$, $P_3$ and $|k|$ sufficiently small,
\begin{equation}\label{eq:tildeE-E}
E_{g\sigma}(P_3-k_3) - E_{g\sigma}(P_3) \ge - \frac{3}{4} |k|.
\end{equation}
Hence in particular for $\sigma > 0$, we have $f_\sigma \in \mathrm{L}^2( \mathbb{R}^3 \times \mathbb{Z}_2$), whereas if $\sigma=0$ and $P_3 \mapsto E_g(P_3)$ is of class $\mathrm{C}^{1+\gamma}$ with $\gamma>0$, then
\begin{equation}\label{eq:f_in_L2}
f \in \mathrm{L}^2( \mathbb{R}^3 \times \mathbb{Z}_2 ) \iff E'_g(P_3) = 0.
\end{equation}
Similarly as in \cite{Arai} (see also \cite{DG2,Panati}), we define the ``renormalized'' (Bogoliubov transformed) Hamiltonian $H_{g\sigma}^{\mathrm{ren}}(P_3)$ by the expression
\begin{equation}\label{eq:formaldef_Hren}
H_{g\sigma}^{\mathrm{ren}}(P_3) = W(if_\sigma) H_{g}^\sigma(P_3) W(if_\sigma)^*.
\end{equation}
Notice that the identity \eqref{eq:formaldef_Hren} might only be formal for $\sigma=0$ since in this case, by \eqref{eq:f_in_L2}, $f$ might not be in $\mathrm{L}^2$. Nevertheless using usual commutation relations (see for instance \cite{DG1}), we define for any $\sigma \ge 0$:
\begin{equation*}
\begin{split}
&H^{\rm ren}_{g\sigma}(P_3) = \frac{1}{2m} \sum_{j=1,2} \Big ( p_j - e a_j( x' ) - g A_{j,\sigma}(x',0) + g \Re ( h_{j,\sigma}(x') , f_\sigma )  \Big )^2 \\
& + \frac{1}{2m} \Big ( P_3 - \mathrm{d} \Gamma(k_3) - \Phi(k_3f_\sigma) - \frac{1}{2} ( k_3f_\sigma,f_\sigma ) - gA_{3,\sigma}(x',0) + g \Re ( h_{3,\sigma}(x') , f_\sigma ) \Big )^2 \\
& - \frac{e}{2m} \sigma_3b(x') - \frac{g}{2m} \sigma \cdot \Big ( B_\sigma(x',0) - \Re ( \tilde h_{\sigma}(x') , f_{\sigma} ) \Big ) + V(x') \\
& + H_f + \Phi ( |k| f_\sigma ) + \frac{1}{2} ( |k| f_\sigma , f_\sigma ).
\end{split}
\end{equation*}
In the same way as for $H_g^\sigma(P_3)$ (see Proposition \ref{prop:self-adjointness}), one can verify that $H^{\mathrm{ren}}_{g\sigma}(P_3)$ is self-adjoint with domain $D( H^{\mathrm{ren}}_{g\sigma}(P_3) ) = D( H_0(P_3) )$ for any $\sigma \ge 0$. Besides for $\sigma > 0$, we have that $H_{g\sigma}^{\mathrm{ren}}(P_3)$ is unitarily equivalent to $H_{g}^\sigma(P_3)$, whereas for $\sigma=0$, one can verify that $H_g^{\mathrm{ren}}(P_3)$ is unitarily equivalent to $H_g(P_3)$ if and only if $f \in \mathrm{L}^2( \mathbb{R}^3 \times \mathbb{Z}_2 )$. Our main result is:
\begin{theorem}\label{thm:main}
Suppose Hypothesis $\mathbf{(H_0)}$. There exist $g_0>0$ and $P_0>0$ such that for all $0 \le |g| \le g_0$ and $0 \le |P_3| \le P_0$,
\begin{itemize}
\item[(i)] $H_g(P_3)$ has a ground state if and only if $E'_g(P_3) = 0$,
\item[(ii)] $H^{\mathrm{ren}}_g(P_3)$ has a ground state.
\end{itemize}
\end{theorem}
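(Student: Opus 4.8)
\textbf{Proof strategy for Theorem \ref{thm:main}.}

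The plan is to run a standard infrared-limit argument in parallel with the Bogoliubov dressing, using Theorem \ref{thm:regularity} as the essential input. The building blocks are: (a) for each $\sigma>0$ the operator $H_g^\sigma(P_3)$ has a ground state $\Phi_g^\sigma(P_3)$, with a uniform spectral gap above $E_{g\sigma}(P_3)$ on the subspace of photons of energy $\geq\sigma$ (established as part of the proof of Theorem \ref{thm:regularity}); (b) the function $f_\sigma$ in \eqref{eq:def_fsigma} is the ``infrared profile'' such that the dressed ground state $W(if_\sigma)\Phi_g^\sigma(P_3)$ has better infrared behaviour than $\Phi_g^\sigma(P_3)$ itself; and (c) by \eqref{eq:f_in_L2}, whether the limiting dressing transformation $W(if)$ is unitary on Fock space is governed precisely by the condition $E'_g(P_3)=0$. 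I would also record at the outset the exponential decay in $x'$ of $\Phi_g^\sigma(P_3)$, uniform in $\sigma$, coming from Hypothesis $\mathbf{(H_0)}$ and the fact that the electronic Hamiltonian $h(b,V)$ has an isolated ground state; this decay is what lets us treat the electronic degrees of freedom that are absent in \cite{Chen,CFP,FP}.

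For part (ii), the first step is to show that the renormalized ground state energies converge, $E_{g\sigma}^{\mathrm{ren}}(P_3)=E_{g\sigma}(P_3)\to E_g(P_3)$ as $\sigma\downarrow 0$, which follows from norm-resolvent-type estimates on $H_{g\sigma}^{\mathrm{ren}}(P_3)-H_g^{\mathrm{ren}}(P_3)$ using that $f_\sigma\to f$ in the relevant weighted spaces (here \eqref{eq:tildeE-E} and Theorem \ref{thm:regularity} give $\int|k|^{-2}|f_\sigma(\mathbf k)-f(\mathbf k)|^2\,d\mathbf k\to 0$ and $\int|k|^{-1}|\cdots|^2\to 0$, which is exactly the form of convergence the interaction terms require). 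The second step is the standard compactness/no-photon-escape argument: set $\Psi_g^\sigma(P_3)=W(if_\sigma)\Phi_g^\sigma(P_3)$, a normalized ground state of $H_{g\sigma}^{\mathrm{ren}}(P_3)$, and prove a uniform bound $\|\mathrm{d}\Gamma(1)^{1/2}\Psi_g^\sigma(P_3)\|\leq C$ together with the infrared bound $\|a(\mathbf k)\Psi_g^\sigma(P_3)\|\leq C g\,\rho^\Lambda(k)/|k|^{1/2}$ (and its $B$-field analogue), obtained by the usual pull-through formula applied to $H_{g\sigma}^{\mathrm{ren}}(P_3)$ — note the crucial point that in the \emph{renormalized} Hamiltonian the dangerous $A_3$-coupling has been cancelled to leading order by the $\Re(h_{3,\sigma},f_\sigma)$ counterterm, so the pull-through estimate no longer degenerates like $g/|k|^{3/2}$ but only like $g/|k|^{1/2}$, which is square-integrable near $k=0$. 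Combining these bounds with the standard criterion for relative compactness of the one-photon density (see e.g. \cite{GLL}-type arguments, or \cite{Chen,CFP}), one extracts a weakly convergent subsequence $\Psi_g^{\sigma_n}(P_3)\rightharpoonup\Psi_g(P_3)$ with $\Psi_g(P_3)\neq 0$ (the infrared bound prevents loss of mass), and lower semicontinuity plus the energy convergence from step one gives $H_g^{\mathrm{ren}}(P_3)\Psi_g(P_3)=E_g(P_3)\Psi_g(P_3)$, i.e. a ground state.

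For part (i), the ``if'' direction is immediate: if $E'_g(P_3)=0$ then by \eqref{eq:f_in_L2} $f\in\mathrm{L}^2$, so $W(if)$ is a genuine unitary on Fock space, $H_g^{\mathrm{ren}}(P_3)$ is unitarily equivalent to $H_g(P_3)$, and the ground state from (ii) pulls back to a ground state of $H_g(P_3)$. The ``only if'' direction is the contrapositive: assuming $E'_g(P_3)\neq 0$, hence $f\notin\mathrm{L}^2$, I would argue by contradiction. If $H_g(P_3)$ had a normalized ground state $\Phi_g(P_3)$, the pull-through formula for $H_g(P_3)$ itself forces $a(\mathbf k)\Phi_g(P_3)=-g\,\eta(\mathbf k)/(E_g(P_3)-E_g(P_3-k_3)+|k|)+o(\cdots)$ with $\eta$ the natural form factor; using the $\mathrm C^{1+\gamma}$ regularity of $E_g$ to expand $E_g(P_3)-E_g(P_3-k_3)=E'_g(P_3)k_3+O(|k_3|^{1+\gamma})$, the leading term behaves like $g\,E'_g(P_3)\,\epsilon_\lambda^3(k)/(k_3|k|^{1/2})$ near $k=0$, whose square is non-integrable exactly when $E'_g(P_3)\neq 0$, contradicting $\|\mathrm d\Gamma(1)^{1/2}\Phi_g(P_3)\|<\infty$. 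Equivalently, one shows $\langle\Phi_g^\sigma(P_3),\Phi_g^{\sigma'}(P_3)\rangle\to 0$ as $\sigma,\sigma'\to 0$, so the Fock-space ground states lose all mass weakly, exhibiting the infrared catastrophe.

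\textbf{Main obstacle.} The hard part is the uniform control — in $\sigma$ — of the renormalized ground states $\Psi_g^\sigma(P_3)$: establishing the uniform photon-number bound and the sharp infrared bound $\|a(\mathbf k)\Psi_g^\sigma(P_3)\|\leq Cg\rho^\Lambda(k)/|k|^{1/2}$ requires carefully tracking how the Bogoliubov counterterms interact with the electronic part $h(b,V)$, and it is precisely here that the uniform exponential decay in $x'$ of $\Phi_g^\sigma(P_3)$ and the uniform spectral gap from Theorem \ref{thm:regularity} must be combined; the regularity estimate \eqref{eq:regularity} enters to guarantee that the cancellation in $f_\sigma$ is effective down to $\sigma=0$, i.e. that $f_\sigma$ converges in the weighted $\mathrm L^2$ norms even though $f$ may fail to be in $\mathrm L^2$ itself.
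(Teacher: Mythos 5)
Your overall architecture (pull-through identities, the Weyl dressing $W(if_\sigma)$, uniform photon-number bounds, and a compactness/overlap argument for the $\sigma\to 0$ limit) matches the paper's, and your "if'' direction of (i) — pulling back the ground state of $H^{\mathrm{ren}}_g(P_3)$ through the genuine unitary $W(if)$ when $f\in\mathrm{L}^2$ — is a legitimate shortcut (the paper instead proves strong convergence of $\Phi_g^\sigma(P_3)$ directly). However, the "only if'' direction contains a genuine gap. You derive from the pull-through formula that $a(\bold k)\Phi_g(P_3)$ has a leading term proportional to $f(\bold k)\Phi_g(P_3)$ with $f\notin\mathrm{L}^2$, and conclude by "contradicting $\|\mathrm{d}\Gamma(1)^{1/2}\Phi_g(P_3)\|<\infty$''. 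But finiteness of the photon number expectation in a hypothetical ground state of the cutoff-free Hamiltonian is not known a priori; the only thing your computation shows is that $\Phi_g(P_3)\notin D(\mathcal{N}^{1/2})$, which is not by itself a contradiction. This is exactly why the paper invokes the Derezi\'nski--G\'erard result (Lemma \ref{lm:DG}, i.e. \cite[Lemma 2.6]{DG2}): since $a(\bold k)\Phi_g(P_3)-\tfrac{1}{\sqrt2}f(\bold k)\Phi_g(P_3)=L(\bold k)\Phi_g(P_3)+R(\bold k)\Phi_g(P_3)$ is square-integrable in $\bold k$ while $f\notin\mathrm{L}^2$, that lemma forces $\Phi_g(P_3)=0$, contradicting normalization. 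Your proposed alternative — showing $\langle\Phi_g^\sigma,\Phi_g^{\sigma'}\rangle\to 0$ — also does not close the gap, since failure of the cutoff ground states to converge does not by itself exclude a ground state of $H_g(P_3)$.

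Two secondary points. First, the weighted convergences you invoke for part (ii), e.g. $\int|k|^{-2}|f_\sigma(\bold k)-f(\bold k)|^2\,d\bold k\to0$, are false when $E'_g(P_3)\neq0$: near $k=0$ one has $|f(\bold k)|\sim |g\,E'_g(P_3)|\,|k|^{-3/2}$, so these negatively weighted integrals diverge; the quantities that actually enter $H^{\mathrm{ren}}_{g\sigma}(P_3)$ are $k_3f_\sigma$ and $|k|f_\sigma$, which do lie in the relevant weighted $\mathrm{L}^2$ spaces. Second, the square-integrability of $\|a(\bold k)\Psi^\sigma_g(P_3)\|$ does not come from a pointwise bound $\mathrm{C}|g||k|^{-1/2}$; the delicate contribution is the operator $R_\sigma(\bold k)$ containing $(\tilde H-\tilde E)(\tilde H-E+|k|)^{-1}$, and its control requires the Feynman--Hellman formula together with the $\mathrm{C}^{1+\gamma}$ regularity \eqref{eq:regularity} (Lemma \ref{lm:estimate}), yielding a bound of the form $\mathrm{C}|g|\,|k_3|^{-\frac12+\frac{\gamma}{2}}|k|^{-1}$ — a role of Theorem \ref{thm:regularity} that goes beyond the convergence of $f_\sigma$ which you emphasize.
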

The proof of Theorem \ref{thm:main} can be adapted to the case of free moving hydrogenoid ions without spins\footnote{The hypothesis of simplicity for the electronic ground state $\mathbf{(H_0)}$ imposes this restriction to hydrogenoid atoms or ions.}, the condition $E'_g(P_3)=0$ being replaced by $\nabla E_g(P) = 0$, where $E_g(P)$ denotes the bottom of the spectrum of the fiber Hamiltonian $H_g(P)$. The existence of ground states for atoms has been obtained in \cite{AGG2} thanks to a Power-Zienau-Wooley transformation and the crucial property $Q=0$ (here $Q$ denotes the total charge of the atomic system). Indeed, in \cite{HH}, it is proved that for negative ions ($Q<0$) $H_g(P)$ does not have a ground state if $\nabla E_g(P) \neq 0$. Let us also mention \cite{LMS} where the existence of ground states for atoms is proven for any value of the coupling constant $g$, by adapting \cite{GLL}, under the further assumption $E_g(P) \ge E_g(0)$ which has not been proven yet. Thus in addition to these results, our method provides the existence of ground states for spinless hydrogenoid ions, both for $H_g(P)$ in the case $\nabla E_g(P)=0$ and for $H^{\mathrm{ren}}_g(P)$.

The two statements ``$H_g(P_3)$ has a ground state if $E'_g(P_3)=0$'' and ``$H^{\mathrm{ren}}_g(P_3)$ has a ground state'' shall be established following the same standard procedure: An infrared cutoff $\sigma$ is introduced into the model so that the Hamiltonian $H_{g}^\sigma(P_3)$ (respectively $H_{g\sigma}^{\mathrm{ren}}(P_3)$) has a ground state $\Phi_g^\sigma(P_3)$ (respectively $\Phi_{g\sigma}^{\mathrm{ren}}(P_3)$). We then need to prove that $\Phi_g^\sigma(P_3)$ and $\Phi_{g\sigma}^{\mathrm{ren}}(P_3)$ converge strongly as $\sigma \rightarrow 0$. To this end we control the number of photons in the states $\Phi_g^\sigma(P_3)$ and $\Phi_{g\sigma}^{\mathrm{ren}}(P_3)$ thanks to a pull-through formula and \eqref{eq:regularity}.

We emphasize that, in the case $E'_g(P_3) \neq 0$, $H^{\mathrm{ren}}_g(P_3)$ can be seen as an expression of the physical Hamiltonian in a representation of the canonical commutation relations non-unitarily equivalent to the Fock representation. Besides, regarding \cite{Chen} for the case of a single freely moving electron, one can conjecture that for sufficiently small values of $|P_3|$, $E'_g(P_3) = 0$ if and only if $P_3 = 0$.

Our proof of the absence of a ground state for $H_g(P_3)$ in the case $E'_g(P_3) \neq 0$ is based on a contradiction argument and \cite[Lemma 2.6]{DG2} (see also Lemma \ref{lm:DG}). Again the result is achieved by deriving a suitable expression of $a( \bold k ) \Phi_g(P_3)$ thanks to a pull-through formula (assuming here that $H_g(P_3)$ has a ground state $\Phi_g(P_3)$). Note that the regularity property \eqref{eq:regularity} appears again as a key property (although here only \eqref{eq:regularity} for $\sigma = 0$ is required).

The paper is organized as follows: In Section \ref{section:main}, we prove Theorem \ref{thm:main}. Next in the appendix we prove Theorem \ref{thm:regularity}.

\noindent \textbf{Acknowledgements}. We would like to thank G. Raikov for useful comments.

\section{Proof of Theorem \ref{thm:main}}\label{section:main}

The following proposition is proven in Subsection \ref{subsection:gap} of the appendix.
\begin{prop}\label{prop:GS}
Assume that $\mathbf{(H_0)}$ holds. There exists $g_0>0$, $\sigma_0>0$ and $P_0>0$ such that for all $|g|\le g_0$, for all $0<\sigma\le\sigma_0$, for all $|P_3| \le P_0$, $H_{g\sigma}(P_3)$ has a unique normalized ground state $\Phi_{g}^\sigma(P_3)$, i.e.
\begin{equation}
H_{g}^\sigma(P_3) \Phi_{g}^\sigma(P_3) = E_{g\sigma}(P_3) \Phi_{g}^\sigma(P_3), \quad \| \Phi_{g}^\sigma(P_3) \| = 1.
\end{equation}
\end{prop}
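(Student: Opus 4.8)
The plan is to establish Proposition \ref{prop:GS} by combining a standard existence argument for ground states of infrared-regularized Hamiltonians with the spectral gap information that, for $\sigma>0$, the photon energies are bounded below by $\sigma$. First I would reduce the existence of a ground state to showing that $E_{g\sigma}(P_3)$ is an \emph{eigenvalue} of $H_g^\sigma(P_3)$, i.e. that the bottom of the spectrum is attained. For this I would use the HVZ-type analysis from \cite{AGG1}: the essential spectrum of $H_g^\sigma(P_3)$ begins at $E_{g\sigma}(P_3)+\sigma$ (one must detach at least one photon of energy $\ge\sigma$ to reach the continuum), so there is a genuine spectral gap of size at least $\sigma$ below the essential spectrum. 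Hence any minimizing sequence for the quadratic form, after the usual photon-number localization and a compactness argument on the electronic variables (here crucially using the exponential decay in $x'$ coming from $\mathbf{(H_0)}$ and the confining nature of $h(b,V)$), converges to a nonzero ground state. This is the part that most closely parallels the existing literature (Bach–Fröhlich–Sigal, Griesemer–Lieb–Loss, and \cite{AGG1} itself), and I would invoke \cite{AGG1} for most of the technical estimates rather than redo them.

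Next I would address \emph{uniqueness} (and non-degeneracy) of the ground state. The natural route is a Perron–Frobenius / positivity-improving argument: in a suitable representation (Q-space for the photons, combined with the one-dimensional electronic ground-state projection of $h(b,V)$, which is simple by $\mathbf{(H_0)}$ and can be chosen strictly positive), the semigroup $e^{-tH_g^\sigma(P_3)}$, or a resolvent of $H_g^\sigma(P_3)$, is positivity improving for $|g|$ small. Non-degeneracy of $E_{g\sigma}(P_3)$ as the lowest eigenvalue then follows. Here one has to be a little careful because of the $\sigma_3 b(x')$ and $\sigma\cdot B$ spin terms, which break naive positivity; the standard fix is to diagonalize the electronic part using $\mathbf{(H_0)}$ first (projecting onto the simple electronic ground state) and treat the remaining off-diagonal contributions as a small perturbation controlled by $|g|$ and the spectral gap $e_1-e_0>0$ of $h(b,V)$, so that an analytic perturbation argument isolates a simple eigenvalue near $e_0+\text{(photon vacuum)}$. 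Alternatively, since the proposition only asserts existence of \emph{a} unique normalized ground state for small $g$, one can get uniqueness directly from regular analytic perturbation theory: at $g=0$ the operator is $h(b,V)\otimes 1 + 1\otimes H_{\rm ph}$ restricted to photon energies $\ge\sigma$, whose bottom eigenvalue $e_0$ is simple with gap $\min(e_1-e_0,\sigma)>0$, and this simplicity and gap persist for $|g|\le g_0(\sigma_0)$.

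The main obstacle, and the reason the statement is nontrivial, is obtaining the constants $g_0$, $\sigma_0$, $P_0$ \emph{uniformly}: one needs a single $g_0$ that works for all $0<\sigma\le\sigma_0$ and all $|P_3|\le P_0$, not a $g_0$ that degenerates as $\sigma\to0$. A naive perturbative argument gives $g_0$ depending on $\sigma$ through the gap $\sigma$ and is therefore useless for the later limit $\sigma\to0$ in Theorem \ref{thm:main}. This is precisely why the proposition is stated as being proven in Subsection \ref{subsection:gap} of the appendix via a dedicated "short proof of the existence of a spectral gap" that is uniform in $\sigma$: the gap above $E_{g\sigma}(P_3)$, restricted to the sector of photons with energy $\ge\sigma$, must be shown to be bounded below by a constant independent of $\sigma$ (of order $e_1-e_0$ or a fixed fraction of the photon mass gap on that sector), using operator bounds on the interaction terms that do not blow up as $\sigma\to0$. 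Granting that uniform gap, the existence of $\Phi_g^\sigma(P_3)$ and its uniqueness follow as above, with $g_0,\sigma_0,P_0$ chosen once and for all; I would therefore structure the proof around first proving the uniform gap estimate and then deducing existence and non-degeneracy as corollaries.
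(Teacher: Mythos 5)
You correctly isolate the real difficulty -- obtaining $g_0$ independent of $\sigma$ -- but your proposed resolution does not work. The gap you want, a lower bound on $\mathrm{Gap}(H_\sigma(P_3))$ that is independent of $\sigma$ (``of order $e_1-e_0$''), does not exist: on the sector $\mathcal{H}_\sigma$ one can excite the ground state by a single photon with $|k|$ just above $\sigma$, and by \eqref{eq:tildeE-E} such states have energy at most $E_\sigma(P_3)+\mathrm{C}\sigma$, so the gap is necessarily $O(\sigma)$ (the paper proves $\ge \sigma/8$, and that order is optimal). Once the gap is only $O(\sigma)$, the one-step perturbative scheme you describe (free operator $h_{0,\sigma}$ with gap $(1-|P_3|/m)\sigma$ as in \eqref{eq:gapH0}, plus ``operator bounds on the interaction that do not blow up as $\sigma\to0$'') gives exactly the $\sigma$-dependent $g_0$ you yourself reject: the full interaction $H_{I,\sigma}$ satisfies only the bound \eqref{eq:main_estimate_root}, $\mathrm{C}|g|\rho^{-1/2}$, whose optimization produces an error $O(g^2)$, which swamps a gap of size $\sigma/8$ as soon as $\sigma\ll g^2$. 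So your plan would fail precisely at the step you flag as the main obstacle. Also, for the \emph{unrestricted} operator $H_g^\sigma(P_3)$ there is no gap below the essential spectrum at all (non-interacting soft modes with $|k|<\sigma$ still carry $k_3$ and arbitrarily small energy), so the HVZ-plus-minimizing-sequence argument must be run on $\mathcal{H}_\sigma$ and then transferred; this is what Lemma \ref{lm:gap1} does by tensoring with the soft vacuum.

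The paper's mechanism, which is the missing idea, is a Pizzo-type induction over scales rather than a single perturbative step. Lemma \ref{lm:root} treats the scales $\sigma\ge \mathrm{C}_0g^2$ directly by min--max against $h_{0,\sigma}$, where the $O(g^2)$ error is admissible because the gap there is $\gtrsim g^2$. Lemma \ref{lm:induction} then proves $\mathbf{Gap}(\sigma)\Rightarrow\mathbf{Gap}(\sigma/2)$: the interaction \emph{added between scales} $\sigma/2$ and $\sigma$ obeys the improved form bound \eqref{eq:main_estimate}, $\mathrm{C}|g|\sigma^{1/2}\rho^{-1/2}$, so choosing $\rho\sim g^2\sigma$ yields an error $O(g^2\sigma)$, i.e.\ proportional to the inductive gap $\sigma/8$; the smallness condition on $|g|$ is therefore scale-independent and the induction closes, giving Proposition \ref{prop:gap} uniformly in $0<\sigma\le\sigma_0$. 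Existence, simplicity and the gap all come out of the same min--max argument (the second min--max value of $H_\sigma(P_3)$ exceeds $E_\sigma(P_3)$ by $\sigma/8$, so the bottom is a simple discrete eigenvalue); no compactness or Perron--Frobenius input is needed, and the spin terms cause no trouble. Your fallback of simply citing \cite{AGG1} would give existence (the paper notes this), but not the uniform-in-$\sigma$ simplicity and gap $\ge\sigma/8$ that the rest of the paper (the iterative perturbation theory behind Theorem \ref{thm:regularity}) actually requires.
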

Notice that Proposition \ref{prop:GS} is also established in \cite{AGG1} under the weaker assumption that $e_0$ is an isolated eigenvalue of $h(b,V)$ of finite multiplicity. Let us recall a lemma, due to \cite{DG2}, on which is based our proof of the absence of a ground state for $H_g(P_3)$ in the case $E'_g(P_3) \neq 0$.
\begin{lemma}\label{lm:DG}
Let $\Psi \in \mathrm{L}^2( \mathbb{R}^2 ; \mathbb{C}^2 ) \otimes \mathcal{H}_{\rm ph}$. Assume that
\begin{equation}
\int_{\mathbb{R}^3 \times \mathbb{Z}_2} \| ( a( \bold k ) - h( \bold k ) ) \Psi \|^2 d \bold k < \infty,
\end{equation}
where $h$ is a measurable function from $\mathbb{R}^3 \times \mathbb{Z}_2$ to $\mathbb{C}$ such that $h \notin \mathrm{L}^2( \mathbb{R}^3 \times \mathbb{Z}_2 )$. Then $\Psi = 0$.
\end{lemma}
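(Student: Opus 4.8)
The statement to prove is Lemma \ref{lm:DG}: if $\Psi$ satisfies $\int \|(a(\mathbf{k}) - h(\mathbf{k}))\Psi\|^2 d\mathbf{k} < \infty$ with $h \notin L^2$, then $\Psi = 0$.

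\textbf{Proof plan.} The plan is to argue by contradiction: suppose $\Psi \neq 0$, and derive that $h$ must in fact lie in $L^2(\mathbb{R}^3 \times \mathbb{Z}_2)$. First I would normalize and assume $\|\Psi\| = 1$. The key identity is the elementary operator-theoretic estimate bounding $\|a(\mathbf{k})\Psi\|$ in terms of the number operator: integrating $\|a(\mathbf{k})\Psi\|^2$ over $\mathbf{k}$ gives $\langle \Psi, N \Psi\rangle$, where $N = \mathrm{d}\Gamma(1)$ is the number operator. So the first step is to observe that the hypothesis, combined with the triangle inequality in $L^2(d\mathbf{k}; \mathcal{H})$,
\[
\Big( \int \|h(\mathbf{k})\Psi\|^2 d\mathbf{k} \Big)^{1/2} \le \Big( \int \|a(\mathbf{k})\Psi\|^2 d\mathbf{k}\Big)^{1/2} + \Big( \int \|(a(\mathbf{k}) - h(\mathbf{k}))\Psi\|^2 d\mathbf{k}\Big)^{1/2},
\]
would immediately force $\int |h(\mathbf{k})|^2 d\mathbf{k} \, \|\Psi\|^2 < \infty$ — hence $h \in L^2$ — \emph{provided} we already knew $\int \|a(\mathbf{k})\Psi\|^2 d\mathbf{k} = \langle \Psi, N\Psi \rangle < \infty$, i.e. $\Psi \in D(N^{1/2})$. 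The whole content of the lemma is therefore that the hypothesis alone \emph{forces} $\Psi \in D(N^{1/2})$ whenever $\Psi \neq 0$; the difficulty is that $a(\mathbf{k})\Psi$ is a priori only a distribution, and one cannot assume finiteness of $\int\|a(\mathbf{k})\Psi\|^2$ from the start.

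To get around this I would use a cutoff and a monotone-convergence argument. Fix a bounded measurable set $\Omega \subset \mathbb{R}^3 \times \mathbb{Z}_2$ away from $\mathbf{k} = 0$; on such a set the creation/annihilation operators $a^\sharp(\mathds{1}_\Omega)$ are relatively bounded and well-behaved, and more importantly we can decompose $\mathcal{H}_{\rm ph}$ (or rather $L^2(\mathbb{R}^2;\mathbb{C}^2) \otimes \mathcal{H}_{\rm ph}$) as a tensor product over $L^2(\Omega)$ and its complement. The point is: for $\mathbf{k} \in \Omega$, $\|(a(\mathbf{k}) - h(\mathbf{k}))\Psi\| \ge |h(\mathbf{k})|\,\|\Psi\| - \|a(\mathbf{k})\Psi\|$, so
\[
|h(\mathbf{k})| \|\Psi\| \le \|a(\mathbf{k})\Psi\| + \|(a(\mathbf{k})-h(\mathbf{k}))\Psi\|.
\]
Squaring and integrating over $\Omega$, the second term is finite by hypothesis, so $\int_\Omega |h(\mathbf{k})|^2 d\mathbf{k} < \infty$ would follow once $\int_\Omega \|a(\mathbf{k})\Psi\|^2 d\mathbf{k} < \infty$ — but $\int_\Omega \|a(\mathbf{k})\Psi\|^2 d\mathbf{k} = \|a(\mathds{1}_\Omega)\Psi\|^2 \le \langle \Psi, N\Psi\rangle$ only if $\Psi \in D(N^{1/2})$, which again we do not have. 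So instead one should run the implication the other way: from the hypothesis $\int_\Omega \|(a(\mathbf{k}) - h(\mathbf{k}))\Psi\|^2 d\mathbf{k} < \infty$ and $h$ bounded on $\Omega$, one gets $\int_\Omega \|a(\mathbf{k})\Psi\|^2 d\mathbf{k} < \infty$, i.e. $\Psi \in D(a(\mathds{1}_\Omega))$; sending $\Omega \uparrow (\mathbb{R}^3 \times \mathbb{Z}_2) \setminus \{k=0\}$ through an exhaustion and using monotone convergence for $\|a(\mathds{1}_{\Omega_n})\Psi\|^2 \nearrow \int \|a(\mathbf{k})\Psi\|^2 d\mathbf{k}$, \emph{either} the limit is finite, in which case $\Psi \in D(N^{1/2})$ and the triangle-inequality argument of the first paragraph shows $h \in L^2$, a contradiction; \emph{or} the limit is $+\infty$. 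In the latter case I would need a separate argument.

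The heart of the matter is thus ruling out $\int \|a(\mathbf{k})\Psi\|^2 d\mathbf{k} = \infty$ while $\int \|(a(\mathbf{k})-h(\mathbf{k}))\Psi\|^2 d\mathbf{k} < \infty$. The trick (this is the Dereziński–Gérard argument) is to test against $\Psi$ itself and use a Cauchy–Schwarz/pull-through style manipulation, or better: write $b(\mathbf{k}) := a(\mathbf{k}) - h(\mathbf{k})$, which satisfies the \emph{same} canonical commutation relations as $a(\mathbf{k})$ (the shift by a function $h$ does not change $[a(\mathbf{k}), a^*(\mathbf{k}')]$), so $\int \|b(\mathbf{k})\Psi\|^2 d\mathbf{k} = \langle \Psi, \mathrm{d}\Gamma(1)_b \Psi\rangle$ in the shifted representation — but $\Psi$ lives in the original Fock space. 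Concretely: for $\Psi$ in a dense domain (finite particle vectors with smooth compactly supported, away from $0$, wave functions) one has the exact identity
\[
\int \|(a(\mathbf{k}) - h(\mathbf{k}))\Psi\|^2 d\mathbf{k} = \int \|a(\mathbf{k})\Psi\|^2 d\mathbf{k} - 2\Re \int \overline{h(\mathbf{k})} \langle \Psi, a(\mathbf{k})\Psi \rangle d\mathbf{k} + \|h\|^2 \|\Psi\|^2,
\]
and if $\|h\| = \infty$ the last term is $+\infty$ while the cross term is controlled by $\|h\|_{L^2(\Omega)} \|a(\mathds{1}_\Omega)\Psi\|$ on finite pieces — one sees that for the left side to be finite with $\Psi \neq 0$ is impossible by a limiting argument on an exhaustion $\Omega_n$, because the "completing the square" form $\int \|a(\mathbf{k})\Psi - h(\mathbf{k})\Psi\|^2$ restricted to $\Omega_n$ stays bounded, forcing $\|h\|_{L^2(\Omega_n)}^2 \|\Psi\|^2 \le 2\|h\|_{L^2(\Omega_n)}\|a(\mathds{1}_{\Omega_n})\Psi\| + C$, and dividing by $\|h\|_{L^2(\Omega_n)} \to \infty$ yields $\|h\|_{L^2(\Omega_n)}\|\Psi\|^2 \le 2\|a(\mathds{1}_{\Omega_n})\Psi\| + o(\|h\|_{L^2(\Omega_n)})$; but then, re-inserting into the square expansion, $\|a(\mathds{1}_{\Omega_n})\Psi\|^2$ is forced to grow at least like $\tfrac14 \|h\|_{L^2(\Omega_n)}^2\|\Psi\|^4$, i.e. $\int \|a(\mathbf{k})\Psi\|^2 = \infty$, and then the right-hand side of the square expansion is an $\infty - \infty$ that, handled carefully on each $\Omega_n$, still produces a contradiction with the finiteness of the left side unless $\Psi = 0$. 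I expect this limiting bookkeeping — making rigorous the "$\infty - \infty$" cancellation and extending from the dense domain to general $\Psi$ in the stated domain via the closedness of the relevant operators — to be the main obstacle, and the cleanest way to package it is exactly the monotone-convergence-on-an-exhaustion argument sketched above, reducing everything to the single inequality $\int_{\Omega_n}\|a(\mathbf{k})\Psi\|^2 d\mathbf{k} \le 2\big(\int_{\Omega_n}\|a(\mathbf{k})\Psi - h(\mathbf{k})\Psi\|^2 d\mathbf{k} + \|h\|_{L^2(\Omega_n)}^2\|\Psi\|^2\big)$ and its reverse, which together pin down the two integrals as simultaneously finite or simultaneously infinite, the latter being excluded by the hypothesis. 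Once $h \in L^2$ is contradicted, $\Psi = 0$ follows.
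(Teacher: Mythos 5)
There is a genuine gap, and it sits exactly where you locate "the heart of the matter". The paper itself offers no proof (it simply cites \cite[Lemma 2.6]{DG2}), so your attempt must stand on its own; it does not. Your exhaustion argument only produces the three quantities $A_n=\big(\int_{\Omega_n}\|a(\bold k)\Psi\|^2 d\bold k\big)^{1/2}$, $H_n=\|h\mathds{1}_{\Omega_n}\|_{\mathrm{L}^2}\|\Psi\|$ and $C_n=\big(\int_{\Omega_n}\|(a(\bold k)-h(\bold k))\Psi\|^2 d\bold k\big)^{1/2}$, and the completing-the-square/Cauchy--Schwarz bookkeeping you perform yields nothing beyond the reverse triangle inequality $|A_n-H_n|\le C_n\le \mathrm{C}^{1/2}$. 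This is perfectly consistent with $A_n\to\infty$ and $H_n\to\infty$ while their difference stays bounded; no contradiction can be extracted from these norm identities alone, and your final claim that the hypothesis excludes the "simultaneously infinite" case is precisely the assertion of the lemma, not something the hypothesis gives you. Put differently: what must be ruled out is that $\Psi\neq 0$ behaves like an approximate simultaneous eigenvector of the $a(\bold k)$ with non-square-integrable "eigenvalue" $h$, i.e.\ an infinite-photon coherent state; that this is impossible \emph{in the Fock representation} is a structural fact about Fock space and cannot be reached by testing the hypothesis only against $\Psi$ itself. (Your remark about the shifted operators $b(\bold k)=a(\bold k)-h(\bold k)$ satisfying the CCR correctly explains why the statement is plausible, but it is not converted into an argument.)

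The missing idea is to test against a dense set of \emph{other} vectors, for which the relevant operators are controlled. For $\Phi$ in the finite-particle subspace and $g\in\mathrm{L}^2$ supported where $h$ is bounded, one has the rigorous identity
\begin{equation*}
( a^*(g)\Phi , \Psi ) \;=\; \int \bar g(\bold k)\, \big( \Phi , (a(\bold k)-h(\bold k))\Psi \big)\, d\bold k \;+\; \Big(\int \bar g(\bold k) h(\bold k)\, d\bold k\Big) ( \Phi , \Psi ).
\end{equation*}
Choosing $g=g_n=h\mathds{1}_{\Omega_n}/\|h\mathds{1}_{\Omega_n}\|$ with $\Omega_n$ an exhaustion on which $h$ is bounded, the left-hand side is bounded by $\|(\mathcal{N}+1)^{1/2}\Phi\|\,\|\Psi\|$ uniformly in $n$, the first term on the right is bounded by $\|\Phi\|\big(\int\|(a(\bold k)-h(\bold k))\Psi\|^2 d\bold k\big)^{1/2}$, while the coefficient $\int\bar g_n h=\|h\mathds{1}_{\Omega_n}\|\to\infty$ since $h\notin\mathrm{L}^2$; hence $(\Phi,\Psi)=0$ for all such $\Phi$, and density gives $\Psi=0$. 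This (essentially the Derezi\'nski--G\'erard argument) is what replaces the "$\infty-\infty$ handled carefully" step in your sketch; without something of this kind the proposal does not prove the lemma. Your first observation — that if $\Psi\in D(\mathcal{N}^{1/2})$ the triangle inequality forces $h\in\mathrm{L}^2$ — is correct, but it only treats the easy case.
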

\begin{proof}
See \cite[Lemma 2.6]{DG2}.
\end{proof}
Theorem \ref{thm:main} shall follow from a suitable decomposition of $a( \bold k ) \Phi_{g}^\sigma(P_3)$ based on a pull-through formula. The latter is the purpose of the following lemma, where the equalities should be understood as identities between measurable functions from $\mathbb{R}^3 \times \mathbb{Z}_2$ to $\mathrm{L}^2( \mathbb{R}^2 ; \mathbb{C}^2 ) \otimes \mathcal{H}_{\rm ph}$. For a rigorous justification of the commutations used in the next proof, we refer for instance to \cite{Ge,HH}.

In order to shorten the notations, we shall write
\begin{equation}\label{eq:notations}
\begin{split}
& H = H_{g}^\sigma(P_3), \quad E = E_{g\sigma}(P_3), \quad \Phi = \Phi_{g}^\sigma(P_3), \\
&\tilde H = H_{g}^\sigma(P_3-k_3), \quad \tilde E = E_{g\sigma}(P_3-k_3).
\end{split}
\end{equation}
\begin{lemma}\label{lm:pull-through}
Let $\sigma \ge 0$ and let $\Phi = \Phi_{g}^\sigma(P_3)$ be a normalized ground state of $H = H_{g}^\sigma(P_3)$ (assuming it exists for $\sigma=0$). We have:
\begin{equation}
a(\bold k) \Phi = L_\sigma(\bold k) \Phi + R_\sigma(\bold k) \Phi + \frac{1}{\sqrt{2}} f_\sigma( \bold k ) \Phi,
\end{equation}
where $L_\sigma$ and $R_\sigma$ are operator-valued functions such that,
\begin{equation}\label{eq:estimate_Lsigma}
\int_{\mathbb{R}^3 \times \mathbb{Z}_2 } \| L_\sigma( \bold k ) \Phi \|^2 d \bold k \le \mathrm{C} g^2,
\end{equation}
and
\begin{equation}\label{eq:def_R}
R_\sigma( \bold k ) = - \frac{g}{2\sqrt{2\pi}} \frac{ \rho^\Lambda_\sigma( k ) \epsilon^\lambda_3(k) |k|^{1/2} }{ k_3 ( \tilde E - E + |k| ) } \frac{ \tilde H - \tilde E }{ \tilde H - E + |k| }.
\end{equation}
\end{lemma}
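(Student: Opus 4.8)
\textbf{Proof plan for Lemma \ref{lm:pull-through}.}
The strategy is the standard pull-through formula: apply $a(\bold k)$ to the eigenvalue equation $H\Phi = E\Phi$ and commute $a(\bold k)$ through $H$. First I would record the commutators of $a(\bold k)$ with each term of $H_{g}^\sigma(P_3)$ in \eqref{eq:H(P3)}. Since $[a(\bold k), \d\Gamma(|k|)] = |k| a(\bold k)$ and $[a(\bold k), \d\Gamma(k_3)] = k_3 a(\bold k)$, conjugating the whole Hamiltonian gives $a(\bold k) H = \tilde{H}_{\bold k} a(\bold k) + (\text{terms from the fields})$, where $\tilde H_{\bold k}$ is essentially $H_g^\sigma(P_3 - k_3)$ shifted by $|k|$ — more precisely, commuting through $\mathrm{d}\Gamma(k_3)$ inside the square $(P_3 - \mathrm{d}\Gamma(k_3) - gA_3)^2$ replaces $P_3$ by $P_3 - k_3$, and commuting through $H_{\mathrm{ph}}$ produces the $+|k|$. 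The commutators with the field operators $A_{j,\sigma}(x',0) = \Phi(h_{j,\sigma}(x'))$ and $B_{j,\sigma}(x',0) = \Phi(\tilde h_{j,\sigma}(x'))$ produce c-number multiples of $h_{j,\sigma}(x',\bold k)$ and $\tilde h_{j,\sigma}(x',\bold k)$ respectively (times $1/\sqrt 2$), since $[a(\bold k), \Phi(h)] = \tfrac{1}{\sqrt 2} h(\bold k)$.

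Collecting everything, the eigenvalue equation yields
\begin{equation*}
(\tilde H - E + |k|)\, a(\bold k)\Phi = -\,G_\sigma(\bold k)\,\Phi,
\end{equation*}
where $G_\sigma(\bold k)$ is the sum of the field-commutator contributions; it splits into a part that is $O(g)$ and involves only the cutoff factors $h_{j,\sigma}, \tilde h_{j,\sigma}$ paired against (bounded) operators like $p_j - ea_j - gA_{j,\sigma} + g\Re(\cdots)$, $P_3 - \mathrm{d}\Gamma(k_3) - gA_{3,\sigma} + g\Re(\cdots)$, $\sigma\cdot(\text{stuff})$, and a distinguished piece coming from the $k_3$-component of the longitudinal commutator, namely the term $-\tfrac{g}{\sqrt{4\pi}}\,\tfrac{\rho_\sigma^\Lambda(k)\epsilon_\lambda^3(k)}{k_3|k|^{1/2}}\,(\tilde H - E + |k|)^{-1}$ acting on $\Phi$ up to relative-bound corrections. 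The next step is to invert: since $\tilde H - E + |k| \ge c(\tilde E - E + |k|)$ with $\tilde E - E + |k| > 0$ by \eqref{eq:tildeE-E} (for $|k|$ small), the resolvent $(\tilde H - E + |k|)^{-1}$ is well defined and bounded. Writing $(\tilde H - E + |k|)^{-1} = (\tilde H - \tilde E + |k|)^{-1} \cdot \tfrac{\tilde H - \tilde E + |k|}{\tilde H - E + |k|}$ and $\tfrac{\tilde H - \tilde E + |k|}{\tilde H - E + |k|} = 1 + \tfrac{\tilde H - \tilde E}{\tilde H - E + |k|}\cdot\tfrac{?}{?}$ — more cleanly, one isolates the ``leading'' c-number term, which upon inversion produces exactly $\tfrac{1}{\sqrt 2} f_\sigma(\bold k)\Phi$ (compare \eqref{eq:def_fsigma}: the factor $\tfrac{\tilde E - E}{\tilde E - E + |k|}$ arises by subtracting and adding, writing $\tfrac{1}{\tilde E - E + |k|} = \tfrac{1}{|k|} - \tfrac{\tilde E - E}{|k|(\tilde E - E + |k|)}$ and discarding the $x'$-independent $1/|k|$ piece that is killed by... ) — and the residual $R_\sigma(\bold k)$ of \eqref{eq:def_R} collects precisely the part of the leading term with the operator $\tfrac{\tilde H - \tilde E}{\tilde H - E + |k|}$ still attached, while everything else goes into $L_\sigma(\bold k)$.

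The bound \eqref{eq:estimate_Lsigma} is then a matter of: (a) for each field-commutator contribution in $L_\sigma$, the explicit cutoff functions satisfy $\int \|h_{j,\sigma}(x',\bold k)\|^2 d\bold k \lesssim \int_{|k|\le\Lambda}|k|^{-1} < \infty$ and $\int\|\tilde h_{j,\sigma}(x',\bold k)\|^2 d\bold k \lesssim \int_{|k|\le\Lambda}|k| < \infty$ uniformly in $x'$; (b) the operators multiplying them (the gauge-covariant momenta, $H_f^{1/2}$, etc.) are controlled relative to $H_g^\sigma(P_3)$, hence bounded on the ground state $\Phi$ with constants uniform in $\sigma$ by Proposition \ref{prop:self-adjointness} and the usual $N^{1/2}$, $H_{\mathrm{ph}}^{1/2}$ estimates; (c) the global prefactor is $g$, giving the $\mathrm{C}g^2$ after squaring and integrating. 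I expect the main obstacle to be the bookkeeping in step (b): unlike the free-electron case, $H_g^\sigma(P_3)$ retains the electronic operator $h(b,V)$ and the minimal coupling $p_j - ea_j - gA_{j,\sigma}$, so one must check that all the operator coefficients appearing after the commutations are indeed $H$-bounded with uniform constants — this uses the self-adjointness and domain identification of Proposition \ref{prop:self-adjointness} together with standard $N_\tau$-estimates, and it is where the presence of the classical magnetic field makes the estimate more delicate than in \cite{Chen,CFP,FP}. The exact splitting that makes the $f_\sigma$ term come out with the stated coefficient and leaves $R_\sigma$ in the stated form \eqref{eq:def_R} is a routine but careful algebraic rearrangement of the longitudinal ($j=3$) commutator, using $\epsilon^\lambda(k)\cdot k = 0$ to see that only the $A_3$-piece survives in the relevant place.
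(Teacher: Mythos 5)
There is a genuine gap, and it sits exactly at the point where this model is harder than the free-electron case. After the pull-through step your plan bounds the contribution of the minimal-coupling terms by the ``naive'' estimate (your items (a)--(b)): for $j=1,2$ the corresponding piece of $a(\bold k)\Phi$ is $[\tilde H - E + |k|]^{-1} h_{j,\sigma}(x',\bold k)\big(p_j - ea_j(x') - gA_{j,\sigma}(x',0)\big)\Phi$, and since $\| [\tilde H - E + |k|]^{-1}\| \le \mathrm{C}|k|^{-1}$ by \eqref{eq:tildeE-E} while $|h_{j,\sigma}|\le \mathrm{C}|k|^{-1/2}$, this only gives a pointwise bound of order $|g||k|^{-3/2}$, whose square is \emph{not} integrable near $k=0$: $\int_{\sigma\le|k|\le\Lambda}|k|^{-3}d\bold k$ diverges logarithmically as $\sigma\to 0$, so these terms cannot be absorbed into $L_\sigma$ uniformly in $\sigma$, and the argument fails outright at $\sigma=0$. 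The paper's proof removes this obstruction with an idea your proposal does not contain: the commutator identity \eqref{eq:commutator_xj}, $\tfrac{1}{2m}(p_j - ea_j - gA_{j,\sigma}) = i[H,x'_j]$, which on the ground state turns the offending term into $h_{j,\sigma}(x',\bold k)[\tilde H - E + |k|]^{-1}(H-E)\,x'_j\Phi$; by \eqref{eq:tildeH-H} the operator $(H-E)[\tilde H - E + |k|]^{-1}$ is bounded uniformly in $\bold k$ (see \eqref{eq:pullthrough5}), so one gains a full power of $|k|$ at the price of $\|x'_j\Phi\|<\infty$, i.e.\ the spatial decay of the ground state in $x'$ -- precisely the feature advertised in the introduction. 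The same mechanism, in the form of the Lipschitz bound \eqref{eq:pullthrough7} $|h_{3,\sigma}(x',\bold k)-h_{3,\sigma}(0,\bold k)|\le \mathrm{C}|k||x'|$, is needed to freeze $x'=0$ in the longitudinal term; without these two steps the estimate \eqref{eq:estimate_Lsigma} cannot be reached.

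The second gap is the extraction of the $\tfrac{1}{\sqrt 2}f_\sigma(\bold k)\Phi$ and $R_\sigma(\bold k)\Phi$ pieces, which your text leaves unfinished (the sentence about discarding a $1/|k|$ piece trails off, and nothing is in fact discarded). The actual mechanism is: by \eqref{eq:tildeH-H}, $P_3 - \d\Gamma(k_3) - gA_{3,\sigma}(x',0) = -\tfrac{m}{k_3}(\tilde H - H) + \tfrac{k_3}{2}$, and acting on $\Phi$ one may replace $H$ by $E$; this is where the $1/k_3$ prefactor in $f_\sigma$ and $R_\sigma$ comes from (your proposal postulates it but does not derive it). One is then left with $\tfrac{\tilde H - E}{\tilde H - E + |k|}\Phi$, and the elementary identity \eqref{eq:pullthrough8}, $\tfrac{\tilde H - E}{\tilde H - E + |k|} = \tfrac{\tilde E - E}{\tilde E - E + |k|} + \tfrac{|k|}{\tilde E - E + |k|}\,\tfrac{\tilde H - \tilde E}{\tilde H - E + |k|}$, splits this exactly into the c-number term, which is $\tfrac{1}{\sqrt 2}f_\sigma(\bold k)$ by \eqref{eq:def_fsigma}, and the operator $R_\sigma(\bold k)$ of \eqref{eq:def_R}; the residual piece $-\tfrac{gk_3}{2^{5/2}}h_{3,\sigma}(0,\bold k)[\tilde H - E + |k|]^{-1}\Phi$ coming from the $\tfrac{k_3}{2}$ term is harmless and joins $L_\sigma$. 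So while your overall pull-through strategy is the right starting point, the two ingredients that make the lemma true -- the commutator/$x'$-decay argument for the transverse terms and the \eqref{eq:tildeH-H}--\eqref{eq:pullthrough8} algebra producing $f_\sigma$ and $R_\sigma$ -- are missing.
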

\begin{proof}
It follows from the canonical commutation relations \eqref{eq:CCR} that
\begin{equation}\label{eq:pull-through1}
\begin{split}
&a( \bold k ) H = ( \tilde H + |k| ) a( \bold k ) \\
& - \frac{g}{2^{\frac{3}{2}}m} \sum_{j=1,2} \Big ( h_{j,\sigma}(x',\bold k) \big ( p_j - e a_j(x') - g A_{j,\sigma}(x',0) \big ) + \sigma_j \tilde h_{j,\sigma}(x',\bold k) \Big )\\
& - \frac{g}{2^{\frac{3}{2}}m} \Big ( h_{3,\sigma}(x',\bold k) \big ( P_3 - \d \Gamma( k_3 ) - g A_{3,\sigma}(x',0) \big ) + \sigma_3 \tilde h_{3,\sigma}(x',\bold k) \Big ).
\end{split}
\end{equation}
In order to control the term containing $( p_j - e a_j(x') - g A_{j,\sigma}(x',0))$ in the right-hand-side of the previous equality, we use that (formally)
\begin{equation}\label{eq:commutator_xj}
\frac{1}{2m} \big ( p_j - e a_j(x') - g A_{j,\sigma}(x',0) \big ) = i [ H , x'_j ],
\end{equation}
for $j=1,2$. Notice that an alternative would be to consider the Hamiltonian obtained through a unitary Power-Zienau-Wooley transformation (see for instance \cite{GLL}). For a rigorous justification of \eqref{eq:commutator_xj}, we refer to \cite[Theorem II.10]{BFP} which can easily be adapted to our case. In particular it follows that $x'_j \Phi \in D(H)$. Applying \eqref{eq:pull-through1} to $\Phi$ then yields
\begin{equation}\label{eq:pull-through2}
\begin{split}
& a( \bold k ) \Phi = \frac{i g}{2^{\frac{1}{2}}} \sum_{j=1,2} h_{j,\sigma}(x',\bold k) [\tilde H - E + |k| ]^{-1} ( H - E ) x'_j \Phi \\
&  + \frac{g}{2^{\frac{3}{2}}m} [ \tilde H - E + |k| ]^{-1} \sigma \cdot \tilde h _\sigma(x',\bold k) \Phi\\
& + \frac{g}{2^{\frac{3}{2}}m} h_{3,\sigma}(x',\bold k) [ \tilde H - E + |k| ]^{-1} \big ( P_3 - \d \Gamma( k_3 ) - g A_{3,\sigma}(x',0) \big ) \Phi.
\end{split}
\end{equation}
Note that the expressions of $H$ and $\tilde H$ imply
\begin{equation}\label{eq:tildeH-H}
\tilde H - H = - \frac{k_3}{m} \big ( P_3 - \d \Gamma( k_3 ) - g A_{3,\sigma}(x',0) \big ) + \frac{k_3^2}{2m}.
\end{equation}
From \eqref{eq:tildeE-E}, we get
\begin{equation}\label{eq:pullthrough3}
\| [ \tilde H - E + |k| ]^{-1} \| \le \mathrm{C} |k|^{-1}.
\end{equation}
Moreover it is not difficult to show that
\begin{equation}\label{eq:pullthrough4}
\left \| \big ( P_3 - \d \Gamma( k_3 ) - g A_{3,\sigma}(x',0) \big ) [ \tilde H - E + |k| ]^{-1} \right \| \le \mathrm{C} |k|^{-1},
\end{equation}
and consequently, by \eqref{eq:tildeH-H},
\begin{equation}\label{eq:pullthrough5}
\left \| (H-E) [\tilde H - E + |k| ]^{-1} \right \| \le \mathrm{C}.
\end{equation}
Introducing \eqref{eq:pullthrough3}--\eqref{eq:pullthrough5} into \eqref{eq:pull-through2} and recalling the definitions \eqref{eq:h_j_tildeh_j} of $h_j$ and $\tilde h_j$, we thus obtain
\begin{equation}\label{eq:pullthrough6}
\begin{split}
&a( \bold k ) \Phi = L_1( \bold k )\Phi \\
&+ \frac{g}{2^{\frac{3}{2}}m} h_{3,\sigma}(0,\bold k) [ \tilde H - E + |k| ]^{-1} \big ( P_3 - \d \Gamma( k_3 ) - g A_{3,\sigma}(x',0) \big ) \Phi,
\end{split}
\end{equation}
where
\begin{equation}
\| L_1( \bold k ) \Phi \| \le \mathrm{C} |g| |k|^{-1/2} \big ( \| \Phi \| + \| x'_1 \Phi \| + \| x'_2 \Phi \| ).
\end{equation}
In passing from \eqref{eq:pull-through2} to \eqref{eq:pullthrough6} we used that 
\begin{equation}\label{eq:pullthrough7}
\left | h_{3,\sigma}(x',\bold k) - h_{3,\sigma}(0,\bold k) \right | \le \mathrm{C} |k| |x'|.
\end{equation}
Let us now note the following obvious identity:
\begin{equation}\label{eq:pullthrough8}
\frac{ \tilde H - E }{ \tilde H - E + |k| } = \frac{ \tilde E - E }{ \tilde E - E + |k| } + \frac{ |k| }{ \tilde E - E + |k| } \bigg ( \frac{ \tilde H - \tilde E }{ \tilde H - E + |k| } \bigg ).
\end{equation}
Hence, introducing \eqref{eq:tildeH-H} and \eqref{eq:pullthrough8} into \eqref{eq:pullthrough6} leads to
\begin{equation}
\begin{split}
&a( \bold k ) \Phi = L_1( \bold k )\Phi - \frac{gk_3}{2^{\frac{5}{2}}} h_{3,\sigma}(0,\bold k) [ \tilde H - E + |k| ]^{-1} \Phi \\
&- \frac{g}{2^{\frac{3}{2}}k_3} \frac{ \tilde E - E }{ \tilde E - E + |k| }  h_{3,\sigma}(0,\bold k) \Phi \\
&- \frac{g}{2^{\frac{3}{2}}k_3} \frac{ |k| }{ \tilde E - E + |k| }  h_{3,\sigma}(0,\bold k) \frac{ \tilde H - \tilde E }{ \tilde H - E + |k| } \Phi.
\end{split}
\end{equation}
We conclude the proof using again that $\|x'_j \Phi \|<\infty$.
\end{proof}
The following lemma shows in particular that if the map $P_3 \mapsto E(P_3)$ is sufficiently regular, then $\bold k \mapsto \| R(\bold k ) \Phi \|$ is in $\mathrm{L}^2( \mathbb{R}^3 \times \mathbb{Z}_2 )$, where $R( \bold k )$ denotes the operator defined in \eqref{eq:def_R} for $\sigma=0$.
\begin{lemma}\label{lm:estimate}
Let the parameters $g,\sigma,P_3$ be fixed. Assume that there exist $\gamma>0$, $P_0>0$ and a positive constant $\mathrm{C}$ independent of $\sigma \ge 0$ such that for all $|k_3| \le P_0$,
\begin{equation}\label{eq:assumption_regularity}
\left | E'_{g\sigma}(P_3+k_3) - E'_{g\sigma}(P_3) \right | \le \mathrm{C} |k_3|^{\gamma}.
\end{equation}
Then there exists a positive constant $\mathrm{C}'$, independent of $\sigma$, such that
\begin{equation}
\left \| \big ( H_{g}^\sigma(P_3-k_3) - E_{g\sigma}(P_3-k_3) \big )^{1/2} \Phi \right \| \le \mathrm{C}' |k_3|^{\frac{1+\gamma}{2}}.
\end{equation}
\end{lemma}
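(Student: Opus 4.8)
The plan is to evaluate the nonnegative quantity $\langle \Phi, ( \tilde H - \tilde E ) \Phi \rangle$ directly, keeping the abbreviations of \eqref{eq:notations} with the fixed parameters $g, \sigma, P_3$ and with $\tilde H = H_g^\sigma( P_3 - k_3 )$, $\tilde E = E_{g\sigma}( P_3 - k_3 )$. Since $\| \Phi \| = 1$ and $\tilde E = \inf \sigma( \tilde H )$, and since $\Phi \in D(H) = D( \tilde H ) = D( H_0(P_3) )$ by Proposition \ref{prop:self-adjointness}, the square of the norm we must bound is exactly
\[
\big\| ( \tilde H - \tilde E )^{1/2} \Phi \big\|^2 = \langle \Phi, ( \tilde H - \tilde E ) \Phi \rangle \ge 0 .
\]
Decomposing $\tilde H - \tilde E = ( \tilde H - H ) + ( H - E ) + ( E - \tilde E )$ and using $H \Phi = E \Phi$ together with $\| \Phi \| = 1$, this reduces the task to estimating $\langle \Phi, ( \tilde H - H ) \Phi \rangle + ( E - \tilde E )$.

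For the first summand I use the identity \eqref{eq:tildeH-H}, which I rewrite as $\tilde H - H = - k_3 v_3 + k_3^2/(2m)$, where $v_3 := m^{-1} ( P_3 - \d \Gamma( k_3 ) - g A_{3,\sigma}(x',0) )$ is the (formal) $P_3$-derivative of $H_g^\sigma(P_3)$; for $k_3 \neq 0$ one has $v_3 \Phi = - k_3^{-1} ( \tilde H \Phi - H \Phi - k_3^2 (2m)^{-1} \Phi )$, a genuine vector, so $\langle \Phi, v_3 \Phi \rangle$ is finite and $\langle \Phi, ( \tilde H - H ) \Phi \rangle = - k_3 \langle \Phi, v_3 \Phi \rangle + k_3^2/(2m)$. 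The crucial point is the Feynman--Hellmann identity $\langle \Phi, v_3 \Phi \rangle = E'_{g\sigma}(P_3)$. Indeed, $P_3 \mapsto H_g^\sigma(P_3)$ is a polynomial (quadratic) family with the $P_3$-independent domain $D(H_0(P_3))$, hence analytic of type A, and by Proposition \ref{prop:GS} its ground-state energy is a simple isolated eigenvalue for $|P_3|$ small; standard analytic perturbation theory then yields the identity. (Equivalently and more elementarily, $k_3 \mapsto \langle \Phi, \tilde H \Phi \rangle = E - k_3 \langle \Phi, v_3 \Phi \rangle + k_3^2/(2m)$ is a quadratic polynomial lying above $k_3 \mapsto E_{g\sigma}(P_3 - k_3)$ and touching it at $k_3 = 0$, so, $E_{g\sigma}$ being differentiable at $P_3$, the first-order coefficients coincide.) Hence $\langle \Phi, ( \tilde H - H ) \Phi \rangle = - k_3 E'_{g\sigma}(P_3) + k_3^2/(2m)$.

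For the second summand, the fundamental theorem of calculus gives $E - \tilde E = E_{g\sigma}(P_3) - E_{g\sigma}(P_3 - k_3) = k_3 E'_{g\sigma}(P_3) - \int_0^{-k_3} \big( E'_{g\sigma}(P_3 + s) - E'_{g\sigma}(P_3) \big) \, \d s$, and by the hypothesis \eqref{eq:assumption_regularity} the remainder integral is bounded in modulus by $\mathrm{C} (1+\gamma)^{-1} |k_3|^{1+\gamma}$ with $\mathrm{C}$ the $\sigma$-independent constant occurring there. Adding the two summands, the terms $\pm k_3 E'_{g\sigma}(P_3)$ cancel and we are left with
\[
\langle \Phi, ( \tilde H - \tilde E ) \Phi \rangle = \frac{k_3^2}{2m} - \int_0^{-k_3} \big( E'_{g\sigma}(P_3 + s) - E'_{g\sigma}(P_3) \big) \, \d s .
\]
Since a H\"older bound of exponent $\gamma$ on a bounded interval implies one of exponent $\min( \gamma, 1 )$ there, we may assume $\gamma \le 1$, and after possibly shrinking $P_0$ we may assume $P_0 \le 1$, so that $k_3^2 \le |k_3|^{1+\gamma}$ for $|k_3| \le P_0$. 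Therefore $0 \le \langle \Phi, ( \tilde H - \tilde E ) \Phi \rangle \le \mathrm{C}' |k_3|^{1+\gamma}$ with $\mathrm{C}'$ depending only on $m$, $\gamma$ and $\mathrm{C}$, in particular independent of $\sigma$; taking square roots gives the asserted bound.

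I expect the step requiring the most care to be the rigorous justification of the Feynman--Hellmann identity $\langle \Phi, v_3 \Phi \rangle = E'_{g\sigma}(P_3)$: one must check that $v_3$, the formal $P_3$-derivative of the fiber Hamiltonian, is $H$-bounded and densely defined on $D(H_0(P_3))$, that $\Phi$ lies in its form domain (so that the expectation is finite), and that either Kato's analytic perturbation theory applies across the simple isolated eigenvalue $E_{g\sigma}(P_3)$ — the gap being provided by the analysis behind Proposition \ref{prop:GS} — or the elementary minimisation argument sketched above can be run. Everything else (the commutator manipulations behind \eqref{eq:tildeH-H}, which are in any case already used in Lemma \ref{lm:pull-through}, the Taylor estimate, and the bookkeeping of the $\sigma$-uniform constant) is routine.
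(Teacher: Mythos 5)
Your proof is correct and follows essentially the same route as the paper's: you compute $(\Phi,(\tilde H-\tilde E)\Phi)$ via the identity \eqref{eq:tildeH-H} and the eigenvalue equation, establish the Feynman--Hellmann formula by the same touching/squeeze argument (dividing by $-k_3$ and letting $k_3\to 0$), and then control $|-k_3E'-(\tilde E-E)|$ with the H\"older hypothesis. The only cosmetic difference is that you use the fundamental theorem of calculus with an integral remainder where the paper invokes the mean value theorem, which changes nothing of substance.
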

\begin{proof}
We use again the notations \eqref{eq:notations} and let in addition $E' = E'_{g\sigma}(P_3)$. By \eqref{eq:tildeH-H}, we have
\begin{equation}
\tilde E - E \le ( \Phi , ( \tilde H - H ) \Phi ) = - \frac{k_3}{m} \big ( \Phi , \big ( P_3 - \d \Gamma(k_3) - A_{3,\sigma}(x',0) \big ) \Phi \big ) + \frac{k_3^2}{2m}.
\end{equation}
Dividing by $-k_3$ and letting $k_3 \rightarrow 0$ (distinguishing the cases $k_3>0$ and $k_3<0$), we obtain the Feynman-Hellman formula:
\begin{equation}\label{eq:Feynman-Hellman}
E' = \frac{1}{m} \big ( \Phi , \big ( P_3 - \d \Gamma(k_3) - A_{3,\sigma}(x',0) \big ) \Phi \big ).
\end{equation}
Hence, by \eqref{eq:tildeH-H},
\begin{equation}
\begin{split}
& \left | ( \Phi , ( \tilde H - \tilde E ) \Phi ) \right | = \left | ( \Phi , ( \tilde H - H ) - ( \tilde E - E ) \Phi ) \right | \\
& \le \left | -k_3 E' - ( \tilde E - E ) \right  | + \frac{k_3^2}{2m}.
\end{split}
\end{equation}
The lemma then follows from \eqref{eq:assumption_regularity} and the mean value theorem.
\end{proof}
We are now ready to prove Theorem \ref{thm:main}: \\\\
\textsc{Proof of Theorem \ref{thm:main}}.
 Let us begin with estimating the term $ \| R_\sigma( \bold k ) \Phi_{g}^\sigma(P_3) \|$ appearing in Lemma \ref{lm:pull-through}. Recalling the notations \eqref{eq:notations}, we write
\begin{equation}
\| R_\sigma( \bold k ) \Phi \| \le \frac{ \mathrm{C} |g| }{ | k_3 | |k|^{\frac{1}{2}} } \mathds{1}_{\sigma \le |k| \le \Lambda}(k) \left \| \frac{ ( \tilde H - \tilde E )^{1/2} }{ \tilde H - E + |k| } \right \| \left \| ( \tilde H - \tilde E )^{1/2} \Phi \right \|.
\end{equation}
It follows from the Spectral Theorem and \eqref{eq:tildeE-E} that
\begin{equation}
\left \| \frac{ ( \tilde H - \tilde E )^{1/2} }{ \tilde H - E + |k| } \right \| = \sup_{r \ge 0} \left | \frac{ r^{\frac{1}{2}} }{ r + \tilde E - E + |k| } \right | \le \sup_{r\ge 0} \left | \frac{ r^{\frac{1}{2}} }{ r + |k|/4 } \right | \le \frac{ \mathrm{C} }{ |k|^{\frac{1}{2}} }.
\end{equation}
Thus, Theorem \ref{thm:regularity} together with Lemma \ref{lm:estimate} yield
\begin{equation}
\| R_\sigma( \bold k ) \Phi \| \le \frac{ \mathrm{C} |g| }{ | k_3 |^{\frac{1}{2} - \frac{\gamma}{2}} |k| } \mathds{1}_{\sigma \le |k| \le \Lambda}(k),
\end{equation}
where $\gamma = 1/4 - \delta$, and where $\delta$ in Theorem \ref{thm:regularity} is chosen such that $0 < \delta < 1/4$. Hence
\begin{equation}\label{eq:estimate2}
\int_{\mathbb{R}^3 \times \mathbb{Z}_2 } \| R_\sigma( \bold k ) \Phi \|^2 d\bold k \le \mathrm{C} g^2.
\end{equation} 

Let us now prove (i). First assume that $E'_g(P_3)=0$. In order to get the existence of a ground state for $H_g(P_3)$ our aim is to prove that $\Phi_{g}^\sigma(P_3)$ converges strongly as $\sigma \to 0$. 
Using Lemma \ref{lm:|Etau-Esigma|} (see also Remark \ref{rk:Wick}), we obtain from \eqref{eq:def_fsigma} that
\begin{equation}
| f_\sigma( \bold k ) | \le \mathrm{C} \left ( \frac{ g^2 \sigma }{ |k_3| |k|^{\frac{3}{2}} } + \frac{ |g| ( E_g(P_3-k_3) - E_g(P_3) ) }{ | k_3 | |k|^{\frac{3}{2}} } \right ) \mathds{1}_{\sigma \le |k| \le \Lambda }(k).
\end{equation}
Hence, since $E'_g(P_3)=0$ by assumption, \eqref{eq:regularity} implies
\begin{equation}
| f_\sigma( \bold k ) | \le \mathrm{C} \left ( \frac{ g^2 \sigma }{ | k_3 | |k|^{3/2} } + \frac{ |g| k_3^{\frac{1}{4}-\delta} }{ |k|^{\frac{3}{2}} } \right ) \mathds{1}_{\sigma \le |k| \le \Lambda }(k).
\end{equation}
Therefore
\begin{equation}\label{eq:estimate1}
\| f_\sigma \|_{\mathrm{L}^2( \mathbb{R}^3 \times \mathbb{Z}_2 ) } \le \mathrm{C} |g|.
\end{equation}
Combining Lemma \ref{lm:pull-through} with \eqref{eq:estimate1} and \eqref{eq:estimate2}, we obtain
\begin{equation}
( \Phi_{g}^\sigma(P_3) , \mathcal{N} \Phi_{g}^\sigma(P_3) ) = \int_{\mathbb{R}^3\times \mathbb{Z}^2} \| a (\bold k) \Phi_{g}^\sigma(P_3) \|^2 d \bold k \le \mathrm{C} g^2,
\end{equation}
where $\mathcal{N}= \d \Gamma( I )$ denotes the number operator. For a sufficiently small fixed $|g|$, the strong convergence of $\Phi_{g}^\sigma(P_3)$ as $\sigma \to 0$ is then obtained by following for instance \cite{BFS}, showing that $| ( \Phi_{g}^\sigma(P_3) , \Phi_{\rm el} \otimes \Omega ) | \ge \mathrm{C} > 0$ uniformly in $\sigma \ge 0$. Here $\Phi_{\rm el}$ denotes a normalized ground state of $h(b,V)$.

Assume next that $E'_g(P_3) \neq 0$ and let us prove that $H_g(P_3)$ does not have a ground state. By Lemmata \ref{lm:DG}, \ref{lm:pull-through} and Estimate \eqref{eq:estimate2}, it suffices to prove that $f \notin \mathrm{L}^2( \mathbb{R}^3 \times \mathbb{Z}_2 )$. The latter follows from the fact that
\begin{equation}
\left | \frac{ E_g(P_3-k_3) - E_g(P_3) }{ k_3 } \right | \ge \mathrm{C} > 0
\end{equation}
uniformly for small $k_3$ since $E'_g(P_3) \neq 0$. Hence Theorem \ref{thm:main}(i) is proven.

Let us finally prove (ii). For $\sigma > 0$, we set
\begin{equation}
\Phi^{\mathrm{ren}} = W(if_\sigma) \Phi_{g}^\sigma(P_3).
\end{equation}
Obviously $\Phi^{\mathrm{ren}}$ is a normalized ground state of $H^{\mathrm{ren}}_{g\sigma}(P_3)$.
By Lemma \ref{lm:pull-through} we have
\begin{equation*}
\begin{split}
& a( \bold k ) \Phi^{\mathrm{ren}} = W(if_\sigma) a( \bold k ) \Phi + [ a( \bold k ) , W(if_\sigma) ] \Phi \\
& = W(if_\sigma) L_\sigma(\bold k) \Phi + W(if_\sigma) R_\sigma(\bold k) \Phi + \frac{1}{\sqrt{2}} f_\sigma( \bold k ) \Phi^{\mathrm{ren}} + [ a( \bold k ) , W(if_\sigma) ] \Phi.
\end{split}
\end{equation*}
One can compute the commutator $[ a( \bold k ) , W(if_\sigma) ] = -2^{-1/2} f_\sigma( \bold k )$, so that
\begin{equation}
a( \bold k ) \Phi^{\mathrm{ren}} = W(if_\sigma) L_\sigma(\bold k) \Phi + W(if_\sigma) R_\sigma(\bold k) \Phi.
\end{equation}
Therefore, since $W(if_\sigma)$ is unitary, $\| a( \bold k ) \Phi^{\mathrm{ren}} \|$ can be estimated in the same way as $\| a ( \bold k ) \Phi \|$ (in the case $E'_g(P_3)=0$), using \eqref{eq:estimate_Lsigma} and \eqref{eq:estimate2}. This leads to the existence of a ground state for $H_g^{\mathrm{ren}}(P_3)$ and concludes the proof of Theorem \ref{thm:main}.
\hfill $\square$

\appendix

\section{Uniform regularity of the map $P_3 \mapsto E_{g\sigma}(P_3)$}

In this appendix we shall prove Theorem \ref{thm:regularity}. The structure follows \cite{Pizzo} and \cite{CFP}: First, we give a simple proof of the existence of a spectral gap for the infrared cutoff Hamiltonian $H_{g}^\sigma(P_3)$, considered as an operator on the space of photons of energies $\ge \sigma$. Our proof is based on the min-max principle. Then we establish \eqref{eq:regularity} by adapting \cite{Pizzo,CFP} (see also \cite{BFP}). In comparison to \cite{CFP}, the main technical difference comes from the terms in $H_g(P_3)$ containing the interaction between the electronic variables $x'_j$ and the quantized electromagnetic field. This shall be handled in Lemma \ref{lm:2nd_step} below thanks to the exponential decay of $\Phi_g^\sigma(P_3)$ in $x'_j$.  

In some parts of our presentation, we shall only sketch the proof, emphasizing the differences that we have to include, and referring otherwise to \cite{Pizzo}, \cite{BFP}, or \cite{CFP}.

Let us begin with some definitions and notations. Henceforth we remove the subindex $g$ to simplify the notations, and for $\sigma \ge 0$, we replace $H^{\sigma}(P_3)$ by its Wick-ordered version $H^\sigma(P_3) - \frac{g^2}{2m}( \Lambda^2 - \sigma^2)$ (which we still denote by $H^\sigma(P_3)$). Note that this shall not affect our discussion below on the regularity of the ground state energy since the two operators only differ by a constant. We decompose
\begin{equation}
H^\sigma(P_3) = h_0(P_3) + H_I^\sigma(P_3),
\end{equation}
where
\begin{equation}
h_0(P_3) = h(b,V) \otimes \mathds{1} + \mathds{1} \otimes \left [ \frac{1}{2m} \Big ( P_3 - d\Gamma(k_3) \Big )^2 + H_f \right ],
\end{equation}
and
\begin{equation}
\begin{split}
H_I^\sigma(P_3) =& - \frac{g}{m} \sum_{j=1,2} \bigg ( A_{j,\sigma} (x',0) \Big ( p_j - e a_j(x') \Big ) + \frac{g^2}{2m} A_{j,\sigma}(x',0)^2 \bigg ) \\
& - \frac{g}{2m} A_{3,\sigma} (x',0)\Big ( P_3 - d \Gamma(k_3) \Big ) - \frac{g}{2m} \Big ( P_3 - d \Gamma(k_3) \Big ) A_{3,\sigma} (x',0)  \\
& + \frac{g^2}{2m} A_{3,\sigma} (x',0)^2 - \frac{g}{2m} \sigma \cdot B_\sigma(x',0) - \frac{g^2}{2m} ( \Lambda^2 - \sigma^2 ).
\end{split}
\end{equation}
Let $\Phi_{\rm el}$ denote a normalized ground state of $h(b,V)$. For any $|P_3| < m$, one can easily check that $\Phi_{\rm el} \otimes \Omega$ is a ground state of $h_0(P_3)$, with ground state energy $e_0(P_3) = e_0 + P_3^2/2m$. Note that for $\tau \le \sigma$, we have
\begin{equation}\label{eq:Htausigma1}
\begin{split}
& H^\tau(P_3) - H^\sigma(P_3) \phantom{\sum} \\
&= - \frac{g}{m} \sum_{j=1,2} A_{j,\tau}^\sigma (x',0) \Big ( p_j - e a_j( x' ) - g A_{j,\sigma}(x',0) \Big ) - \frac{g^2}{2m} ( \sigma^2 - \tau^2 ) \\
&+ \frac{g^2}{2m} A_\tau^\sigma (x',0)^2 - \frac{g}{2m} A_{3,\tau}^\sigma (x',0) \Big ( P_3 - d\Gamma(k_3) - g A_{3,\sigma} (x',0) \Big ) \phantom{\sum_j}\\
&- \frac{g}{2m} \Big ( P_3 - d\Gamma(k_3) - g A_{3,\sigma} (x',0) \Big ) A_{3,\tau}^\sigma (x',0) - \frac{g}{2m} \sigma \cdot B_\tau^\sigma(x',0), 
\end{split}
\end{equation} 
where 
\begin{equation}
A_\tau^\sigma(x',0) = \frac{1}{\sqrt{2\pi}} \int \frac{\epsilon^\lambda(k)}{ |k|^{1/2} } \rho_\tau^\sigma(k) \left [Êe^{ - i k' \cdot x' } a^*_\lambda(k) + e^{ i k' \cdot x' } a_\lambda(k) \right ] d \bold k,
\end{equation}
and likewise for $B_\tau^\sigma(x',0)$.
Let $\mathcal{H}_\sigma = \mathrm{L}^2( \mathbb{R}^2 ; \mathbb{C}^2 ) \otimes \mathcal{F}_\sigma$, where $\mathcal{F}_\sigma$ denotes the symmetric Fock space over $\mathrm{L}^2( \{ \bold k \in \mathbb{R}^3 \times \mathbb{Z}_2 , |k| \ge \sigma \} )$. The restriction of $H^\sigma(P_3)$ to $\mathcal{H}_\sigma$ is denoted by $H_\sigma(P_3)$:
\begin{equation}
H_\sigma(P_3) = H^\sigma(P_3) |_{\mathcal{H}_\sigma},
\end{equation}
and, similarly,
\begin{equation}
h_{0,\sigma}(P_3) = h_0(P_3) |_{\mathcal{H}_\sigma} \quad , \quad H_{I,\sigma}(P_3) = H_I^\sigma(P_3) |_{\mathcal{H}_\sigma}.
\end{equation}
Let $\Omega_\sigma$ be the vacuum in $\mathcal{F}_\sigma$. Then for $|P_3| < m$, $\Phi_{\rm el} \otimes \Omega_\sigma$ is a ground state of $h_{0,\sigma}( P_3 )$ with ground state energy $e_0(P_3)$, and
\begin{equation}\label{eq:gapH0}
\mathrm{Gap}( h_{0,\sigma}( P_3 ) ) \ge ( 1 - \frac{|P_3|}{m} ) \sigma,
\end{equation}
where $\mathrm{Gap}( H ) = \inf ( \sigma ( H ) \setminus \{ E(H) \} ) - \inf ( \sigma (H) )$ for any self-adjoint and semi-bounded operator $H$ with ground state energy $E(H)$. 
We also define
\begin{equation}\label{eq:Htausigma2}
H_\tau^\sigma(P_3) = \left ( H^\tau(P_3) - H^\sigma(P_3) \right ) |_{\mathcal{H}_\tau}.
\end{equation}
The symmetric Fock space over $\mathrm{L}^2( \{ \bold k \in \mathbb{R}^3 \times \mathbb{Z}_2 , \tau \le |k| \le \sigma \} )$ is denoted by $\mathcal{F}_\tau^\sigma$. 
Note that there exists a unitary operator $ \mathcal{V} : \mathcal{H}_{\tau} \rightarrow \mathcal{ H }_{ \sigma } \otimes \mathcal{F}_{ \tau }^{ \sigma }$. We shall identify $\mathcal{H}_\tau$ and $\mathcal{H}_\sigma \otimes \mathcal{F}_\tau^\sigma$ in the sequel in order to simplify the notations. We let $\Omega_\tau^\sigma$ be the vacuum in $\mathcal{F}_\tau^\sigma$.

\subsection{Existence of a spectral gap}\label{subsection:gap}
\begin{lemma}\label{lm:gap1}
There exist $g_0>0$, $\sigma_0>0$ and $P_0>0$ such that the following holds: Let $|g| \le g_0$, $0 \le \sigma \le \sigma_0$ and $|P_3| \le P_0$ be such that $H_\sigma( P_3 )$ has a normalized ground state $\Phi_\sigma( P_3 )$ and $\mathrm{Gap}( H_\sigma ( P_3 ) ) \ge \gamma \sigma$ for some $\gamma > 0$. Then for all $0 \le \tau \le \sigma$, $\Phi_\sigma(P_3) \otimes \Omega_\tau^\sigma$ is a normalized ground state of $  H^\sigma( P_3 ) |_{\mathcal{H}_\tau}  $, and
\begin{equation}\label{eq:gap_Hsigma|tau}
\mathrm{Gap} ( H^\sigma( P_3 ) |_{\mathcal{H}_\tau} ) \ge \min ( \gamma \sigma ,  \tau / 4 ).
\end{equation}
\end{lemma}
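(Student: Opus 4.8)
The plan is to exploit the tensor decomposition $\mathcal{H}_\tau \cong \mathcal{H}_\sigma \otimes \mathcal{F}_\tau^\sigma$ and the fact that, on this decomposition, the Hamiltonian $H^\sigma(P_3)|_{\mathcal{H}_\tau}$ splits as
\begin{equation*}
H^\sigma(P_3)|_{\mathcal{H}_\tau} = H_\sigma(P_3) \otimes \mathds{1} + \mathds{1} \otimes \d\Gamma(|k|)|_{\mathcal{F}_\tau^\sigma},
\end{equation*}
because the infrared-cutoff interaction $H_I^\sigma(P_3)$ only involves the photon modes with $|k|\ge\sigma$, i.e. it acts as an operator on $\mathcal{H}_\sigma$ tensored with the identity on $\mathcal{F}_\tau^\sigma$, while the free field energy $H_f$ splits additively over the two frequency ranges and the $(P_3 - \d\Gamma(k_3))^2$ term likewise only depends on the modes $|k|\ge\sigma$ once one is restricted below $\sigma$. (One must check that the $\d\Gamma(k_3)$ appearing in the kinetic term indeed sees only the $|k|\ge\sigma$ modes in $H^\sigma(P_3)$; this is exactly the point of working with $H^\sigma$ rather than $H^\tau$.)

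Granting this tensor-sum structure, the spectrum of $H^\sigma(P_3)|_{\mathcal{H}_\tau}$ is the closure of $\sigma(H_\sigma(P_3)) + \sigma(\d\Gamma(|k|)|_{\mathcal{F}_\tau^\sigma})$. The second factor has spectrum $\{0\} \cup [\tau,\infty)$: the eigenvalue $0$ corresponds to the vacuum $\Omega_\tau^\sigma$, and the rest of the spectrum starts at $\tau$ since every nonzero photon configuration in $\mathcal{F}_\tau^\sigma$ carries energy at least $\tau$. Consequently the bottom of $\sigma(H^\sigma(P_3)|_{\mathcal{H}_\tau})$ is $E_{g\sigma}(P_3) = \inf\sigma(H_\sigma(P_3))$, attained exactly at $\Phi_\sigma(P_3)\otimes\Omega_\tau^\sigma$, which is therefore a normalized ground state. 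The next element of the spectrum is the smaller of (a) the first excited level of $H_\sigma(P_3)$ itself, which by hypothesis lies at least $\gamma\sigma$ above $E_{g\sigma}(P_3)$, and (b) $E_{g\sigma}(P_3) + \tau$, coming from the ground state of the first factor tensored with a one-photon state of energy $\ge\tau$ in $\mathcal{F}_\tau^\sigma$. Hence
\begin{equation*}
\mathrm{Gap}\big(H^\sigma(P_3)|_{\mathcal{H}_\tau}\big) \ge \min(\gamma\sigma,\tau),
\end{equation*}
which is even slightly stronger than the claimed bound with $\tau/4$; the factor $1/4$ in the statement presumably leaves room for the minor technical slack discussed below.

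The main obstacle I expect is verifying cleanly that $H^\sigma(P_3)$ really does decompose as an exact tensor sum across the frequency split at $\sigma$ — in particular that the term $(P_3 - \d\Gamma(k_3))^2$ does not couple the two sectors. Since $\d\Gamma(k_3)$ on $\mathcal{F}_\tau$ is the sum of its restrictions to the $|k|\ge\sigma$ and $\tau\le|k|\le\sigma$ pieces, the square produces a cross term $2(P_3 - \d\Gamma(k_3)^{\ge\sigma})\,\d\Gamma(k_3)^{[\tau,\sigma]}$; on the subspace spanned by $\Phi_\sigma(P_3)\otimes\Omega_\tau^\sigma$ this annihilates the vacuum factor, but off that subspace it is genuinely present. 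The fix is to not demand an exact tensor sum but instead to write $H^\sigma(P_3)|_{\mathcal{H}_\tau} = H_\sigma(P_3)\otimes\mathds{1} + \mathds{1}\otimes N_{[\tau,\sigma]}$ plus a cross term that is bounded below by $-C\|P_3\|\cdot(\text{photon number in }[\tau,\sigma])^{1/2}$ or can be absorbed — more simply, one notes that the operator $\mathds{1}\otimes\d\Gamma(|k|)|_{\mathcal{F}_\tau^\sigma}$ together with the $k_3^2$-type contributions still bounds the excitation energy of the low-frequency modes from below by a fixed fraction of $\tau$ times the number of such photons, provided $|P_3|\le P_0 < m$; this is precisely the computation already encoded in \eqref{eq:gapH0} for the free part, and it extends because $H_I^\sigma$ commutes with the low-frequency number operator. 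Following this line and keeping the constants conservative yields \eqref{eq:gap_Hsigma|tau} with the stated $\tau/4$.
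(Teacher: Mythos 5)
Your overall splitting of the orthogonal complement of $\Phi_\sigma(P_3)\otimes\Omega_\tau^\sigma$ into (a) excited states of $H_\sigma(P_3)$ with no photon in $[\tau,\sigma]$, controlled by the assumed gap $\gamma\sigma$, and (b) states with at least one low-frequency photon, is the same as in the paper. The genuine gap is in how you handle sector (b). The exact tensor-sum decomposition you start from is false, as you yourself note: $\d\Gamma(k_3)$ in the kinetic term acts on all modes, so $H^\sigma(P_3)|_{\mathcal{H}_\tau}\neq H_\sigma(P_3)\otimes\mathds{1}+\mathds{1}\otimes \d\Gamma(|k|)|_{\mathcal{F}_\tau^\sigma}$. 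But your proposed repair does not close the hole: the cross term cannot be bounded by $\mathrm{C}|P_3|$ times a square root of the low-frequency photon number, since it contains the unbounded operator $P_3-\d\Gamma(k_3)^{\ge\sigma}-gA_{3,\sigma}(x',0)$, and the appeal to the free-field computation \eqref{eq:gapH0}, ``extended because $H_I^\sigma$ commutes with the low-frequency number operator,'' only yields invariance of the low-frequency sectors, not a lower bound for the energy on them. The correct mechanism (Pizzo's, which the paper invokes) is that on the sector with low-frequency photons of momenta $k^{(1)},\dots,k^{(n)}$ the operator acts fiberwise as $H_\sigma(P_3-\sum_i k^{(i)}_3)+\sum_i |k^{(i)}|$, whence
\begin{equation*}
\inf_{\Phi \in \mathcal{H}_\sigma \otimes [\Omega_\tau^\sigma]^\perp,\ \|\Phi\|=1} ( \Phi , H^\sigma(P_3)|_{\mathcal{H}_\tau} \Phi ) \ \ge\ \inf_{\tau \le |k| \le \sigma} \big ( E_\sigma(P_3-k_3) - E_\sigma(P_3) + |k| \big ),
\end{equation*}
and the indispensable quantitative input is \eqref{eq:tildeE-E}, i.e.\ $E_\sigma(P_3-k_3)-E_\sigma(P_3)\ge -\tfrac34 |k|$, a statement about the \emph{interacting} dispersion $E_\sigma(\cdot)$ proved in \cite{AGG1} for small $g$, $\sigma$, $P_3$, $|k|$; it does not follow from $|P_3|<m$ and is absent from your argument.

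This also shows that your claim of the stronger bound $\min(\gamma\sigma,\tau)$, with the $1/4$ dismissed as ``technical slack,'' is mistaken: adding a photon of momentum $k$ with $\tau\le|k|\le\sigma$ costs $|k|$ of field energy minus the recoil gain $E_\sigma(P_3)-E_\sigma(P_3-k_3)$, which may absorb up to $\tfrac34|k|$ of it; the $\tau/4$ in \eqref{eq:gap_Hsigma|tau} is exactly the remainder $|k|/4\ge\tau/4$, not a safety margin. The first assertion of the lemma is fine in your write-up: $\Phi_\sigma(P_3)\otimes\Omega_\tau^\sigma$ is an eigenvector with eigenvalue $E_\sigma(P_3)$ because all low-frequency terms annihilate the vacuum factor, but that it is a \emph{ground} state only follows once the lower bound on the orthogonal complement, hence \eqref{eq:tildeE-E}, is in place.
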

\begin{proof}
To simplify the notations, let us remove the dependence on $P_3$ throughout the proof. First, one can readily check that $\Phi_\sigma \otimes \Omega_\tau^\sigma$ is an eigenstate of $  H^\sigma |_{\mathcal{H}_\tau}  $ associated with the eigenvalue $E_\sigma$. For any $v$ we let $[v]$ and $[v]^\perp$ denote respectively the subspace spanned by $v$ and its orthogonal complement. We write
\begin{equation*}
\begin{split}
& \inf_{\Phi \in [\Phi_\sigma \otimes \Omega_\tau^\sigma]^\perp, \| \Phi \|=1 } ( \Phi ,   H^\sigma |_{\mathcal{H}_\tau}   \Phi ) \\
& \ge \min \Big ( \inf_{\Phi \in [\Phi_\sigma]^\perp \otimes [\Omega_\tau^\sigma], \| \Phi \|=1 } ( \Phi ,   H^\sigma |_{\mathcal{H}_\tau}   \Phi ) ,  \inf_{\Phi \in \mathcal{H}_\sigma \otimes [\Omega_\tau^\sigma]^\perp, \| \Phi \|=1 } ( \Phi ,   H^\sigma |_{\mathcal{H}_\tau}   \Phi ) \Big ).
\end{split}
\end{equation*}
The assumption $\mathrm{Gap}( H_\sigma ) \ge \gamma \sigma$ implies
\begin{equation*}
\inf_{\Phi \in [\Phi_\sigma]^\perp \otimes [\Omega_\tau^\sigma], \| \Phi \|=1 } ( \Phi ,   H^\sigma |_{\mathcal{H}_\tau}   \Phi ) \ge E_\sigma + \gamma \sigma.
\end{equation*}
On the other hand, using that the number operator $\int_{ \tau \le |k| \le \sigma } a^*( \bold k ) a( \bold k ) d \bold k$ commutes with $H^\sigma |_{\mathcal{H}_\tau}$, one can prove as in \cite{Pizzo} that
\begin{equation*}
\inf_{\Phi \in \mathcal{H}_\sigma \otimes [\Omega_\tau^\sigma]^\perp, \| \Phi \|=1 } ( \Phi ,   H^\sigma |_{\mathcal{H}_\tau}   \Phi ) \ge \inf_{\tau \le |k| \le \sigma} ( E_\sigma(P_3-k_3) - E_\sigma(P_3) + |k| ).
\end{equation*}
We conclude the proof thanks to \eqref{eq:tildeE-E}
\end{proof}
\begin{corollary}\label{cor:Etau-Esigma}
Under the conditions of Lemma \ref{lm:gap1}, for all $0 \le \tau \le \sigma$,
\begin{equation}\label{eq:Etau-Esigma}
E_\tau(P_3) \le E_\sigma(P_3) \le e_0(P_3).
\end{equation}
\end{corollary}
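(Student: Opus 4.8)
The plan is to prove the two inequalities in \eqref{eq:Etau-Esigma} separately by the variational (Rayleigh--Ritz) principle, exploiting that the Hamiltonians have been Wick-ordered so that the interaction terms carry vanishing vacuum expectation. For the second inequality $E_\sigma(P_3)\le e_0(P_3)$, I would take $\Phi_{\rm el}\otimes\Omega_\sigma\in\mathcal{H}_\sigma$ (which lies in $D(h_{0,\sigma}(P_3))=D(H_\sigma(P_3))$) as a trial vector. Since it is a ground state of $h_{0,\sigma}(P_3)$ with energy $e_0(P_3)$, it suffices to check that $(\Phi_{\rm el}\otimes\Omega_\sigma,H_{I,\sigma}(P_3)\,\Phi_{\rm el}\otimes\Omega_\sigma)=0$: every term of $H_{I,\sigma}(P_3)$ linear in $A_{j,\sigma}(x',0)$ or $B_\sigma(x',0)$ has vanishing expectation because $(\Omega_\sigma,\Phi(h)\Omega_\sigma)=0$ and the remaining factors act only on the electronic variables (or on $\Omega_\sigma$ through $\mathrm{d}\Gamma(k_3)\Omega_\sigma=0$), while the quadratic terms $\frac{g^2}{2m}\sum_{j}A_{j,\sigma}(x',0)^2$ are exactly compensated by the Wick-ordering constant $-\frac{g^2}{2m}(\Lambda^2-\sigma^2)$ subtracted at the start of the appendix. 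The min-max principle then gives $E_\sigma(P_3)=\inf\sigma(H_\sigma(P_3))\le e_0(P_3)$.

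For the first inequality $E_\tau(P_3)\le E_\sigma(P_3)$ with $0\le\tau\le\sigma$, I would use the normalized ground state $\Phi_\sigma(P_3)$ of $H_\sigma(P_3)$ supplied by the hypotheses of Lemma \ref{lm:gap1} and take $\Phi_\sigma(P_3)\otimes\Omega_\tau^\sigma\in\mathcal{H}_\sigma\otimes\mathcal{F}_\tau^\sigma\cong\mathcal{H}_\tau$ as a trial vector for $H^\tau(P_3)|_{\mathcal{H}_\tau}$. By Lemma \ref{lm:gap1} this vector is an eigenvector of $H^\sigma(P_3)|_{\mathcal{H}_\tau}$ with eigenvalue $E_\sigma(P_3)$, and since $H^\tau(P_3)|_{\mathcal{H}_\tau}=H^\sigma(P_3)|_{\mathcal{H}_\tau}+H_\tau^\sigma(P_3)$ by \eqref{eq:Htausigma2}, the min-max principle yields
\[
E_\tau(P_3)\le E_\sigma(P_3)+\big(\Phi_\sigma(P_3)\otimes\Omega_\tau^\sigma\,,\,H_\tau^\sigma(P_3)\,\Phi_\sigma(P_3)\otimes\Omega_\tau^\sigma\big).
\]
(For $\tau=0$ the operator $H(P_3)$ need not have a ground state, but the argument is unaffected since only $\inf\sigma(\cdot)\le$ an expectation value is used.) It then remains to show that the bracket is $\le 0$; in fact it vanishes.

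Indeed, inspecting \eqref{eq:Htausigma1}, every term of $H_\tau^\sigma(P_3)$ other than the constant $-\frac{g^2}{2m}(\sigma^2-\tau^2)$ is either linear in the ``slice'' field operators $A_{j,\tau}^\sigma(x',0)$, $A_{3,\tau}^\sigma(x',0)$, $B_\tau^\sigma(x',0)$ — all acting on $\mathcal{F}_\tau^\sigma$ only, with zero expectation in $\Omega_\tau^\sigma$, the other factors acting on $\mathcal{H}_\sigma$ or producing $\mathrm{d}\Gamma(k_3)\Omega_\tau^\sigma=0$ — or it is the quadratic term $\frac{g^2}{2m}(A_\tau^\sigma(x',0))^2$, whose expectation $\frac{g^2}{2m}(\Omega_\tau^\sigma,(A_\tau^\sigma(x',0))^2\Omega_\tau^\sigma)$ is cancelled by that constant. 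Hence the bracket is $0$, so $E_\tau(P_3)\le E_\sigma(P_3)$, and combined with the previous step this gives \eqref{eq:Etau-Esigma}. I do not expect a genuine obstacle here; the only point requiring care is the bookkeeping associated with the identification $\mathcal{H}_\tau\cong\mathcal{H}_\sigma\otimes\mathcal{F}_\tau^\sigma$ and the corresponding splitting of $\mathrm{d}\Gamma(k_3)$, to make sure that every field-linear contribution really does factor through a vanishing vacuum expectation and that the Wick-ordering constants coincide exactly with the relevant vacuum expectations.
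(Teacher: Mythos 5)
Your proposal is correct and follows essentially the same route as the paper: a variational argument with the trial vectors $\Phi_{\rm el}\otimes\Omega_\sigma$ and $\Phi_\sigma(P_3)\otimes\Omega_\tau^\sigma$ (the latter being a ground state of $H^\sigma(P_3)|_{\mathcal{H}_\tau}$ by Lemma \ref{lm:gap1}), with the expectation of the remaining interaction, respectively slice, terms vanishing because the field-linear parts have zero vacuum expectation and the quadratic parts are compensated by the Wick-ordering constants. You merely spell out the vacuum-expectation bookkeeping that the paper leaves implicit.
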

\begin{proof}
It follows from Lemma \ref{lm:gap1} that 
\begin{equation}
\begin{split}
E_\tau(P_3) & \le ( \Phi_\sigma(P_3) \otimes \Omega_\tau^\sigma , H_\tau(P_3) \Phi_\sigma(P_3) \otimes \Omega_\tau^\sigma ) \\
&= ( \Phi_\sigma(P_3) \otimes \Omega_\tau^\sigma , H^\sigma(P_3) |_{\mathcal{H}_\tau} \Phi_\sigma(P_3) \otimes \Omega_\tau^\sigma ) = E_\sigma(P_3).
\end{split}
\end{equation}
Hence the first inequality in \eqref{eq:Etau-Esigma} is proven. To prove the second one, it suffices to write similarly
\begin{equation}
\begin{split}
E_\sigma(P_3) & \le ( \Phi_{\rm el} \otimes \Omega_\sigma , H_\sigma(P_3) \Phi_{\rm el} \otimes \Omega_\sigma ) \\
&= ( \Phi_{\rm el} \otimes \Omega_\sigma , h_{0,\sigma}(P_3) \Phi_{\rm el} \otimes \Omega_\sigma ) = e_0(P_3).
\end{split}
\end{equation}
\end{proof}

We shall establish the existence of a spectral gap of order $O(\sigma)$ above the bottom of the spectrum of $H_{\sigma}(P_3)$ by induction. More precisely, let \textbf{Gap}($\sigma$) denote the assertion
$$
\mathbf{Gap}(\sigma)
\left \{ 
\begin{array}{cl}
 &\mathrm{(i)} \phantom{i} \quad E_{\sigma}(P_3) \text{ is a simple eigenvalue of } H_{\sigma}(P_3), \vspace{0,1cm} \\
 &\mathrm{(ii)} \quad \mathrm{Gap}( H_\sigma( P_3 ) ) \ge \sigma/8.
\end{array}
\right.
$$
We shall prove
\begin{prop}\label{prop:gap}
There exists $g_0>0$, $\sigma_0>0$ and $P_0>0$ such that, for all $|g| \le g_0$, $0 < \sigma \le \sigma_0$ and $|P_3| \le P_0$, the assertion $\mathbf{Gap}(\sigma)$ above holds.
\end{prop}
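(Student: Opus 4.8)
The plan is to prove $\mathbf{Gap}(\sigma)$ by induction over a decreasing geometric sequence of infrared cutoffs $\sigma_n = \sigma_0 \kappa^n$ for a fixed ratio $\kappa \in (0,1)$, combining the relative-bound machinery of \cite{Pizzo} with the gap-transfer Lemma \ref{lm:gap1} proved above. The base case $n=0$ (i.e.\ $\sigma$ close to $\sigma_0$): since for $g=0$ the fiber Hamiltonian $h_{0,\sigma}(P_3)$ has the simple ground state $\Phi_{\rm el}\otimes\Omega_\sigma$ and, by \eqref{eq:gapH0}, a gap of size at least $(1-|P_3|/m)\sigma \ge \tfrac{1}{2}\sigma$ once $|P_3| \le P_0 \le m/2$, a first-order perturbation argument (e.g.\ the Feshbach map or ordinary analytic perturbation theory) shows that for $|g|$ small enough the gap persists down to, say, $\sigma/8$, and $E_\sigma(P_3)$ stays a simple eigenvalue. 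Here one uses that $H_{I,\sigma}(P_3)$ is $h_{0,\sigma}(P_3)$-bounded with relative bound $\mathrm{C}|g|$ uniformly in $\sigma \le \sigma_0$ and $|P_3|\le P_0$ — a standard $N_\tau^{1/2}$-type estimate on the field operators that does not see the infrared behaviour.

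\textbf{Induction step.} Assume $\mathbf{Gap}(\sigma_n)$ holds. By Lemma \ref{lm:gap1} applied with $\gamma = 1/8$ and $\tau = \sigma_{n+1}$, the state $\Phi_{\sigma_n}(P_3)\otimes\Omega_{\sigma_{n+1}}^{\sigma_n}$ is a ground state of $H^{\sigma_n}(P_3)|_{\mathcal H_{\sigma_{n+1}}}$ with
\[
\mathrm{Gap}\big(H^{\sigma_n}(P_3)|_{\mathcal H_{\sigma_{n+1}}}\big) \ge \min\big(\sigma_n/8,\ \sigma_{n+1}/4\big) = \sigma_{n+1}/4
\]
provided $\kappa \ge 1/2$. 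Now view $H_{\sigma_{n+1}}(P_3)$ as a perturbation of $H^{\sigma_n}(P_3)|_{\mathcal H_{\sigma_{n+1}}}$ by the difference operator $H_{\sigma_{n+1}}^{\sigma_n}(P_3)$ of \eqref{eq:Htausigma1}–\eqref{eq:Htausigma2}, whose coupling involves only the modes with $\sigma_{n+1}\le|k|\le\sigma_n$. The key quantitative input is that this difference is small compared to the available gap: using $\|A_{\sigma_{n+1}}^{\sigma_n}(x',0)(N_{\sigma_{n+1}}^{\sigma_n}+1)^{-1/2}\| \le \mathrm{C}\,\sigma_n^{1/2}$ and the analogous bound for $B_{\sigma_{n+1}}^{\sigma_n}$ (the half-power comes from $\int_{\sigma_{n+1}\le|k|\le\sigma_n}|k|^{-1}dk \le \mathrm{C}\,\sigma_n^2$ for $A$, and one extra derivative estimate for the $x'$-dependent phases), one gets that $H_{\sigma_{n+1}}^{\sigma_n}(P_3)$ is bounded relative to $H^{\sigma_n}(P_3)|_{\mathcal H_{\sigma_{n+1}}}$ with bound $\mathrm{C}|g|\sigma_n^{1/2}$ plus a term of size $\mathrm{C}|g|\sigma_n$. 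Since $\sigma_{n+1} = \kappa\sigma_n$, for $|g|$ and $\sigma_0$ small enough this perturbation is a small fraction of $\sigma_{n+1}/4$ uniformly in $n$, so analytic perturbation theory (or a Feshbach–Schur step) preserves simplicity of the bottom eigenvalue and yields $\mathrm{Gap}(H_{\sigma_{n+1}}(P_3)) \ge \sigma_{n+1}/8$, which is $\mathbf{Gap}(\sigma_{n+1})$. Finally, any $\sigma \in (\sigma_{n+1},\sigma_n]$ is handled by the same comparison with $H^{\sigma_n}(P_3)|_{\mathcal H_\sigma}$, so $\mathbf{Gap}(\sigma)$ holds for all $\sigma \le \sigma_0$.

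\textbf{Main obstacle.} The delicate point is not the combinatorial induction but making the perturbation bound on $H_{\sigma_{n+1}}^{\sigma_n}(P_3)$ genuinely uniform in $n$ (equivalently, in $\sigma$) while keeping it small relative to the gap $\sim\sigma_{n+1}$ — the interaction terms that are \emph{linear} in the field, namely $A_{3,\sigma_{n+1}}^{\sigma_n}(x',0)(P_3-\mathrm{d}\Gamma(k_3)-gA_{3,\sigma_n}(x',0))$ and the $A_{j,\sigma_{n+1}}^{\sigma_n}(x',0)(p_j - ea_j(x') - gA_{j,\sigma_n}(x',0))$, are the dangerous ones because naively they are only $O(g\sigma_n^{1/2})$ and $\sigma_n^{1/2}/\sigma_{n+1} \to \infty$ is false but $O(g\sigma_n^{1/2})/(\sigma_{n+1}/4) = O(g/\sigma_n^{1/2})$ blows up as $n\to\infty$. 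The resolution, exactly as in \cite{Pizzo,CFP}, is that one must not bound these terms by the full gap but use the first-order (Feshbach) cancellation: the off-diagonal part of the field-linear terms connects the vacuum sector to the one-photon sector, where the energy denominator is $\ge |k| \ge \sigma_{n+1}$, so the effective second-order shift is $O(g^2 \sigma_n / \sigma_{n+1}) = O(g^2/\kappa)$, which \emph{is} uniformly small. Thus the honest proof replaces the crude relative-bound step by one Feshbach–Schur reduction onto $[\Phi_{\sigma_n}\otimes\Omega_{\sigma_{n+1}}^{\sigma_n}]\oplus(\text{low sectors})$, and the electronic factor $x'$-dependence in the coupling functions is absorbed using the uniform exponential decay of $\Phi_{\sigma_n}(P_3)$ in $x'$ (Hypothesis $\mathbf{(H_0)}$ and the argument of Lemma \ref{lm:2nd_step}), which bounds $\|x'\,\Phi_{\sigma_n}(P_3)\|$ uniformly in $n$ and lets one control $\|(A^{\sigma_n}_{\sigma_{n+1}}(x',0)-A^{\sigma_n}_{\sigma_{n+1}}(0))\Phi_{\sigma_n}\|$ by $\mathrm{C}|g|\sigma_n^{3/2}$.
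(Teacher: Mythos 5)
Your skeleton coincides with the paper's: a root case for $\sigma$ of order one (in the paper, Lemma \ref{lm:root}, valid down to $\sigma\gtrsim g^2$), an induction step lowering the cutoff by a fixed ratio (Lemma \ref{lm:induction}, with your $\kappa$ equal to $1/2$), and Lemma \ref{lm:gap1} to transfer the gap from $\mathcal{H}_\sigma$ to $\mathcal{H}_\tau$. Where you diverge is in how the induction step is closed. You claim the ``crude'' relative bound on the shell interaction $H_\tau^\sigma$ cannot beat a gap of order $\sigma_{n+1}$ and must be replaced by a Feshbach--Schur reduction with a second-order shift estimate, i.e.\ the route of \cite{Pizzo,CFP}. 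The paper deliberately avoids this: its key input is the \emph{weighted} form bound \eqref{eq:main_estimate}, in which $H_\tau^\sigma$ is sandwiched between $[H^\sigma|_{\mathcal{H}_\tau}-E_\sigma+\rho]^{-1/2}$, i.e.\ it is measured against the excitation energy rather than the absolute energy. Then a plain min--max argument on $[\Phi_\sigma\otimes\Omega_\tau^\sigma]^\perp$ with the optimized choice $\rho^{1/2}=4\mathrm{C}|g|\sigma^{1/2}$ costs a multiplicative factor $3/4$ of the inherited gap $\sigma/8$ plus an additive error $4\mathrm{C}^2g^2\sigma$, which is a small \emph{fraction of the gap} uniformly in $\sigma$; Corollary \ref{cor:Etau-Esigma} converts $E_\sigma$ into $E_\tau$ and the step closes. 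So the obstacle you describe dissolves once the relative bound carries the weight $H-E_\sigma+\rho$: no Feshbach map is needed, and, contrary to your last paragraph, no exponential decay of $\Phi_\sigma$ in $x'$ enters the gap proof (the $e^{ik'\cdot x'}$ factors have modulus one and the electronic momenta are absorbed by the resolvent weights); that decay is only used later, in Lemma \ref{lm:2nd_step}, for the regularity of $E'_\sigma$.

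If you insist on the Feshbach route, two quantitative points in your sketch need repair. First, your ``effective second-order shift $O(g^2\sigma_n/\sigma_{n+1})=O(g^2/\kappa)$'' is, as written, a constant, and a constant is not small compared with a gap of size $\sigma_{n+1}/4\to 0$, so the induction would not close. The correct count is: coupling squared over the shell is of order $g^2\int_{\sigma_{n+1}\le|k|\le\sigma_n}|k|^{-1}\,d^3k\sim g^2\sigma_n^2$, divided by a denominator $\ge\sigma_{n+1}$, giving a shift $O(g^2\sigma_n/\kappa)$, i.e.\ an $O(g^2/\kappa^2)$ \emph{fraction} of the gap --- that proportionality to $\sigma_n$ is exactly what makes the step uniform in $n$, and it is the analogue of the paper's additive error $\mathrm{C}g^2\sigma$. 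Second, $\int_{\sigma_{n+1}\le|k|\le\sigma_n}|k|^{-1}\,d^3k\le \mathrm{C}\sigma_n^2$ yields $\|A^{\sigma_n}_{\sigma_{n+1}}(N^{\sigma_n}_{\sigma_{n+1}}+1)^{-1/2}\|\le\mathrm{C}\sigma_n$, a full power; the half power $\sigma_n^{1/2}$ appearing in \eqref{eq:main_estimate} comes from weighting with $(H_{f,\tau}^\sigma)^{-1/2}$, i.e.\ from $\int_{\sigma_{n+1}\le|k|\le\sigma_n}|k|^{-2}\,d^3k\sim\sigma_n$. With these corrections your Pizzo-style step would work, but the paper's min--max argument is the shorter and intended proof.
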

Let us begin with two preliminary useful estimates:
\begin{lemma}
Fix the parameters $g$, $\sigma$ and $P_3$ such that $0 \le |g| \le g_0$, $0\le \sigma \le \sigma_0$ and $0\le |P_3| \le P_0$, for some sufficiently small small $g_0$, $\sigma_0$ and $P_0$. For any $0 < \rho < 1$,
\begin{equation}\label{eq:main_estimate_root}
\begin{split}
&\left \| \left [ h_{0,\sigma}(P_3) - e_0(P_3) + \rho \right ]^{-1/2} H_{I,\sigma}(P_3) \left [ h_{0,\sigma}(P_3) - e_0(P_3) + \rho \right ]^{-1/2} \right \| \\
& \le \mathrm{C} |g| \rho^{-1/2},
\end{split}
\end{equation}
where $\mathrm{C}$ is a positive constant (depending only on $\Lambda$). Likewise,
\begin{equation}\label{eq:main_estimate}
\begin{split}
&\left \| \left [ H^{\sigma}(P_3) |_{\mathcal{H}_\tau} - E_\sigma(P_3) + \rho \right ]^{-1/2} H_\tau^\sigma(P_3) \left [ H^{\sigma}(P_3) |_{\mathcal{H}_\tau}  - E_\sigma(P_3) + \rho \right ]^{-1/2} \right \| \\
&\leq \mathrm{C} |g| \sigma^{1/2} \rho^{-1/2}.
\end{split}
\end{equation}
\end{lemma}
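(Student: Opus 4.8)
The plan is to reduce both inequalities to standard relative-boundedness estimates for the interaction terms with respect to the free dispersion. I would first observe that $h_{0,\sigma}(P_3)-e_0(P_3)+\rho \geq \rho$, and more importantly that it dominates both the photon energy $H_f$ and (up to constants) the electronic kinetic energy, since $h(b,V)-e_0$ and $H_f$ are both nonnegative and sum into $h_{0,\sigma}(P_3)-e_0(P_3)$ together with $(P_3-d\Gamma(k_3))^2/2m$. The point of \eqref{eq:main_estimate_root} is then the familiar fact that, on the Fock space with infrared cutoff at $\sigma$ and ultraviolet cutoff at $\Lambda$, the form factors $h_{j,\sigma}(x',\cdot)$ and $\tilde h_{j,\sigma}(x',\cdot)$ satisfy $\|h_{j,\sigma}\|_{L^2} + \|\,|k|^{-1/2}h_{j,\sigma}\|_{L^2} \leq \mathrm{C}$ uniformly in $x'$ (the $x'$-dependence is only through a phase $e^{ik'\cdot x'}$ of modulus one), and likewise for $\tilde h_{j,\sigma}$ using $\int_{|k|\le\Lambda}|k|\,dk<\infty$.

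The key steps, in order, would be: (1) Write $H_{I,\sigma}(P_3)$ as a sum of terms, each of which is a product of a field operator $A_{j,\sigma}$ or $B_{j,\sigma}$ (i.e.\ some $\Phi(h)$ with $h\in\{h_{j,\sigma},\tilde h_{j,\sigma}\}$) with either $(p_j-ea_j(x'))$, $(P_3-d\Gamma(k_3))$, or another field operator, plus the $A^2$ terms. (2) For the terms linear in the field, use the standard bound $\|\Phi(h)(H_f+\rho)^{-1/2}\| \leq \mathrm{C}(\|h\|_{L^2} + \|\,|k|^{-1/2}h\|_{L^2})\rho^{?}$ — more precisely $\|(N+1)^{-1/2}\Phi(h)\| \le \mathrm{C}\|h\|$ and $\|\Phi(h)(H_f+\rho)^{-1/2}\| \le \mathrm{C}(\|h\| + \|h/\sqrt{|k|}\|)$ — together with the fact that $(N+1)^{1/2}$ and $(p_j-ea_j(x'))$ and $(P_3-d\Gamma(k_3))$ are all controlled by $(h_{0,\sigma}(P_3)-e_0(P_3)+\rho)^{1/2}$ (for the momentum operators one needs $|P_3|<m$ and the closed-graph/Kato-type argument already invoked for self-adjointness in Proposition \ref{prop:self-adjointness}). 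Each such term then contributes $\mathrm{C}|g|\rho^{-1/2}$, the worst power of $\rho$ coming from the single resolvent power not absorbed by the field factor. (3) For the $A^2$ terms, split each $A_{j,\sigma}^2$ using $\Phi(h)^2 \le \mathrm{C}(a^*(h)a(h) + \|h\|^2)$ and bound $a^*(h)a(h)$ by $\|h/\sqrt{|k|}\|^2 H_f$, which is dominated by $h_{0,\sigma}(P_3)-e_0(P_3)$; this gives a term bounded by $\mathrm{C}g^2$, hence a fortiori by $\mathrm{C}|g|\rho^{-1/2}$ for $\rho<1$ and $|g|\le g_0$. (4) For \eqref{eq:main_estimate}, repeat the argument with $H_\tau^\sigma(P_3)$ in place of $H_{I,\sigma}(P_3)$: now every form factor appearing in \eqref{eq:Htausigma1} is supported in $\tau \le |k| \le \sigma$, so $\|h\| \le \mathrm{C}\sigma^{1/2}$ (from $\int_{\tau\le|k|\le\sigma}|k|^{-1}dk \le \mathrm{C}\sigma$, up to the polarization vectors) and $\||k|^{-1/2}h\| \le \mathrm{C}$, producing the extra $\sigma^{1/2}$; and one replaces $h_{0,\sigma}(P_3)-e_0(P_3)$ by $H^\sigma(P_3)|_{\mathcal{H}_\tau}-E_\sigma(P_3)$, using that the latter dominates $H_f$, $(p_j-ea_j-gA_{j,\sigma})$ and $(P_3-d\Gamma(k_3)-gA_{3,\sigma})$ up to the small interaction correction, which can be absorbed for $|g|\le g_0$ (this is exactly the comparison already built into the self-adjointness statement).

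The main obstacle is step (3)/(4)'s comparison of the dressed momentum operators with the free comparison operator: one must show that $(p_j-ea_j(x')-gA_{j,\sigma})^2$ and $(P_3-d\Gamma(k_3)-gA_{3,\sigma})^2$ are bounded by $\mathrm{C}(H^\sigma(P_3)|_{\mathcal{H}_\tau} - E_\sigma(P_3) + 1)$, uniformly in $\sigma,\tau$, which is where the smallness of $|g|$ and the explicit ultraviolet cutoff really enter; this is the operator-theoretic content already underlying Proposition \ref{prop:self-adjointness}, so I would cite \cite[Theorem 2.3]{AGG1} and its proof rather than redo it. The remaining delicate point is keeping the constants independent of $\sigma$ and $\tau$ — this is automatic because all $L^2$-norms of the form factors are controlled by the $\rho^\Lambda_\sigma$-independent integrability condition $\int_{|k|\le1}|k|^{-2}|\rho^\Lambda|^2 + \int_{|k|\ge1}|k||\rho^\Lambda|^2 <\infty$ stated after \eqref{eq:def_A_B}, and because the infrared cutoff only shrinks the region of integration. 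I would write out the argument for one representative term (say $-\tfrac{g}{m}A_{1,\sigma}(p_1-ea_1)$) in full and indicate that the others are handled identically, as is consistent with the appendix's stated policy of emphasizing differences with \cite{Pizzo,CFP} rather than reproducing routine estimates.
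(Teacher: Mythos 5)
Your overall strategy (termwise relative form bounds for the decomposition of $H_\tau^\sigma(P_3)$) is the right family of techniques, but the key estimate you rely on in step (2) is false as stated, and the accounting built on it does not give \eqref{eq:main_estimate}. The bound $\| \Phi(h) (H_f+\rho)^{-1/2} \| \le \mathrm{C}( \|h\| + \| |k|^{-1/2} h \| )$ cannot hold uniformly in $0<\rho<1$: applying $a^*(h)(H_f+\rho)^{-1/2}$ to the vacuum already gives norm $\rho^{-1/2}\|h\|$, so the creation part always costs an extra $\rho^{-1/2}\|h\|$. For the band form factors supported in $\tau \le |k| \le \sigma$ one has $\| |k|^{-1/2} h \| \le \mathrm{C}\sigma^{1/2}$ and $\|h\| \le \mathrm{C}\sigma$ (your stated powers are off, though in the harmless direction), so your scheme yields at best $\mathrm{C}|g| ( \sigma^{1/2}\rho^{-1/2} + \sigma \rho^{-1} )$ once the momentum factor's $\rho^{-1/2}$ is included; the second term is not $O(\sigma^{1/2}\rho^{-1/2})$ in the regime $\rho \ll \sigma$ in which the lemma is actually applied (Lemma \ref{lm:induction} chooses $\rho^{1/2} \sim |g| \sigma^{1/2}$). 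A second, related gap: to avoid yet another factor $\rho^{-1/2}$ you need a photon-energy bound \emph{without additive constant}, $H_{f,\tau}^\sigma \le \mathrm{C} ( H^\sigma(P_3)|_{\mathcal{H}_\tau} - E_\sigma(P_3) )$; this does not follow from the relative boundedness underlying Proposition \ref{prop:self-adjointness} (which only gives domination up to an additive constant, hence again $\rho^{-1/2}$), but from the spectral-gap analysis of Lemma \ref{lm:gap1}, as in Pizzo's Lemma 1.1.

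The paper's proof supplies exactly the missing mechanism. Each creation operator is placed against the resolvent and controlled via the pull-through formula $a(\bold k)\, H^\sigma(P_3)|_{\mathcal{H}_\tau} = [ H^\sigma(P_3-k_3)|_{\mathcal{H}_\tau} + |k| ]\, a(\bold k)$: expanding $\| [ H^\sigma(P_3)|_{\mathcal{H}_\tau} - E_\sigma(P_3) + \rho ]^{-1/2} a^*(h) \Phi \|^2$, the diagonal (commutator) term is bounded using the infrared estimate \eqref{eq:tildeE-E}, which makes the shifted resolvent $\le \mathrm{C}/|k|$ and yields $\int_{\tau \le |k| \le \sigma} |k|^{-2} d\bold k \le \mathrm{C}\sigma$ with no loss in $\rho$, while the off-diagonal term is handled with the constant-free bound on $H_{f,\tau}^\sigma$ just mentioned. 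Combined with $\| ( P_3 - d\Gamma(k_3) - g A_{3,\sigma}(x',0) ) [ H^\sigma(P_3)|_{\mathcal{H}_\tau} - E_\sigma(P_3) + \rho ]^{-1/2} \| \le \mathrm{C}\rho^{-1/2}$ (the only place where $\rho^{-1/2}$ enters), this gives the stated $\mathrm{C}|g|\sigma^{1/2}\rho^{-1/2}$, and \eqref{eq:main_estimate_root} follows by the same scheme. If you insist on staying with ``standard'' bounds, you must at least arrange every term so that annihilation operators act directly on a factor $[\,\cdot\,+\rho]^{-1/2}$ (creation operators being handled by taking adjoints, commuting them past the momentum factors when necessary) and invoke the constant-free photon bound; but that is essentially the pull-through argument, and it --- together with \eqref{eq:tildeE-E} --- is the ingredient missing from your write-up.
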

\begin{proof}
Let us prove \eqref{eq:main_estimate}, Estimate \eqref{eq:main_estimate_root} would follow similarly. 
We introduce the expression of $H_\tau^\sigma(P_3)$ given by \eqref{eq:Htausigma1} and \eqref{eq:Htausigma2} and estimate each term separately. 
Consider for instance
\begin{equation}\label{eq:lemma_main_estimate1}
\begin{split}
|g| \bigg \| &\Big [ H^{\sigma}(P_3) |_{\mathcal{H}_\tau} - E_\sigma(P_3) + \rho \Big ]^{-1/2} \int_{ \tau \le |k| \le \sigma } \frac{ \epsilon^{(3)}_\lambda(k) }{ |k|^{1/2} }e^{ik'\cdot x'} a^*(\bold k) d \bold k \\
&\quad \Big ( P_3 - d\Gamma(k_3) +g A_{3,\sigma}(x',0) \Big ) \Big [ H^{\sigma}(P_3) |_{\mathcal{H}_\tau}  - E_\sigma(P_3) + \rho \Big ]^{-1/2} \bigg \|.
\end{split}
\end{equation}
Using that
\begin{equation}
\left \| \Big ( P_3 - d\Gamma(k_3) +g A_{3,\sigma}(x',0) \Big ) \Big [ H^{\sigma}(P_3) |_{\mathcal{H}_\tau}  - E_\sigma(P_3) + \rho \Big ]^{-1/2} \right \| \le \mathrm{C} \rho^{-1/2},
\end{equation}
we get
\begin{equation*}
\eqref{eq:lemma_main_estimate1} \le \mathrm{C} |g| \rho^{-1/2} \bigg \| \Big [ H^{\sigma}(P_3) |_{\mathcal{H}_\tau} - E_\sigma(P_3) + \rho \Big ]^{-1/2} \int_{ \tau \le |k| \le \sigma } \frac{ \epsilon^{(3)}_\lambda(k) }{ |k|^{1/2} }e^{ik'\cdot x'} a^*(\bold k) d \bold k \bigg \|.
\end{equation*}
Moreover, for any $\Phi \in D(H^\sigma(P_3) |_{\mathcal{H}_\tau})$,
\begin{equation*}
\begin{split}
& \bigg \| \Big [ H^{\sigma}(P_3) |_{\mathcal{H}_\tau} - E_\sigma(P_3) + \rho \Big ]^{-1/2} \int_{ \tau \le |k| \le \sigma } \frac{ \epsilon^{(3)}_\lambda(k) }{ |k|^{1/2} }e^{ik'\cdot x'} a^*( \bold k) d \bold k \Phi \bigg \|^2 \\
& \le \int_{ \tau \le |k|,|\tilde{k}| \le \sigma }  \frac{ \mathrm{C} }{ |k|^{1/2} |\tilde{k}|^{1/2} } \left | \bigg ( \Phi , a(\bold k) \Big [ H^{\sigma}(P_3) |_{\mathcal{H}_\tau} - E_\sigma(P_3) + \rho \Big ]^{-1} a^*(\tilde{\bold k}) \Phi \bigg )  \right | d \bold k d \tilde{\bold k} .
\end{split}
\end{equation*}
Now, for any $\bold k$ such that $\tau \le |k| \le \sigma$, we have the pull-through formula
\begin{equation}
a(\bold k) H^{\sigma}(P_3) |_{\mathcal{H}_\tau} = \Big [ H^{\sigma}(P_3 - k_3)|_{\mathcal{H}_\tau} + |k| \Big ] a(\bold k),
\end{equation}
since $a(\bold k)$ commutes with $A_{\sigma}(x',0)$. Hence
\begin{equation*}
\begin{split}
& \bigg ( \Phi , a(\bold k) \Big [ H^{\sigma}(P_3) |_{\mathcal{H}_\tau} - E_\sigma(P_3) + \rho \Big ]^{-1} a^* (\tilde{\bold k}) \Phi \bigg ) \\
&= \delta( \bold k - \tilde{\bold k} ) \bigg ( \Phi , \Big [ H^{\sigma}(P_3 - k_3) |_{\mathcal{H}_\tau} - E_\sigma(P_3) + |k| +  \rho \Big ]^{-1}  \Phi \bigg ) \\
&\quad + \bigg ( a( \tilde{\bold k} ) \Phi , \Big [ H^{\sigma}(P_3 - k_3 - \tilde{k}_3) |_{\mathcal{H}_\tau} - E_\sigma(P_3) + |k| + |\tilde{k}| +  \rho \Big ]^{-1} a(\bold k)  \Phi \bigg ). 
\end{split}
\end{equation*}
Using that $H^{\sigma}(P_3 - k_3) |_{\mathcal{H}_\tau} - E_\sigma(P_3) + |k| \ge |k|/4$ for any $k$ sufficiently small (see \eqref{eq:tildeE-E}), we get
\begin{equation*}
\begin{split}
& \Big \| \Big [ H^{\sigma}(P_3 - k_3) |_{\mathcal{H}_\tau} - E_\sigma(P_3) + |k| + \rho \Big ]^{-1} \Big \| \le \frac{ \mathrm{C} }{ |k| }. 
\end{split}
\end{equation*}
Let $H_{f,\tau}^\sigma = \int_{ \tau \le |k| \le \sigma } |k| a^*(\bold k) a(\bold k) d \bold k$. As in \cite[Lemma 1.1]{Pizzo}, it follows from the proof of Lemma \ref{lm:gap1} that $H_{f,\tau}^\sigma \le \mathrm{C} ( H^\sigma(P_3) |_{\mathcal{H}_\tau} - E_\sigma(P_3 ))$ for any $P_3$ sufficiently small. This yields
\begin{equation*}
\left \| \big [ H_{f,\tau}^\sigma + |k| + |\tilde k| \big ] \Big [ H^{\sigma}(P_3 - k_3 - \tilde k_3) |_{\mathcal{H}_\tau} - E_\sigma(P_3) + |k| + |\tilde k| + \rho \Big ]^{-1} \right \| \le \mathrm{C}.
\end{equation*}
Thus, combining the previous estimates we obtain
\begin{equation*}
\begin{split}
& \bigg \| \Big [ H^{\sigma}(P_3) |_{\mathcal{H}_\tau} - E_\sigma(P_3) + \rho \Big ]^{-1/2} \int_{ \tau \le |k| \le \sigma } \frac{ \epsilon^{(3)}_\lambda(k) }{ |k|^{1/2} }e^{ik'\cdot x'} a^*(\bold k) d\bold k \Phi \bigg \|^2 \\
& \le \mathrm{C} \int_{ \tau \le |k| \le \sigma } \frac{ d \bold k }{ |k|^2 } + \mathrm{C} \left [ \int_{ \tau \le |k| \le \sigma } \frac{ d \bold k }{ |k|^{\frac{1}{2}} }  \Big \| [H_{f,\tau}^\sigma + |k|]^{-1/2} a(\bold k) \Phi \Big \| \right ]^2 \le \mathrm{C} \sigma.
\end{split}
\end{equation*}
Since $D(H^\sigma(P_3) |_{\mathcal{H}_\tau})$ is dense in $\mathcal{H}_\tau$, the result is proven as for the term we have chosen to consider, that is $\eqref{eq:lemma_main_estimate1} \le \mathrm{C} |g| \sigma^{1/2} \rho^{-1/2}$. Since the other terms in the expression of $H^\sigma_\tau$ given by \eqref{eq:Htausigma1} can be treated in the same way, the lemma is established.
\end{proof}
The next lemma corresponds to the root in the induction procedure leading to the proof of Proposition \ref{prop:gap}.
\begin{lemma}\label{lm:root}
There exist $g_0>0$, $\sigma_0>0$, $P_0>0$ and a positive constant $\mathrm{C}_0$ such that for all $|g| \le g_0$ and $|P_3| \le P_0$, for all  $\sigma$ such that $\mathrm{C}_0 g^2 \le \sigma \le \sigma_0$, the assertion $\mathbf{Gap}(\sigma)$ holds.
\end{lemma}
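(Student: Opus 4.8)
The plan is to treat $H_\sigma(P_3) = h_{0,\sigma}(P_3) + H_{I,\sigma}(P_3)$ as a form-small perturbation of $h_{0,\sigma}(P_3)$ and to read $\mathbf{Gap}(\sigma)$ off the min-max principle. For $|P_3| < m$ the free operator $h_{0,\sigma}(P_3)$ has the normalized, simple ground state $\Phi_0 := \Phi_{\rm el} \otimes \Omega_\sigma$ with energy $e_0(P_3)$, and by \eqref{eq:gapH0} its spectrum above $e_0(P_3)$ begins at $e_0(P_3) + (1-|P_3|/m)\sigma$. By \eqref{eq:main_estimate_root}, for every $0 < \rho < 1$ the interaction obeys the relative form bound $\pm H_{I,\sigma}(P_3) \le \mathrm{C}|g|\rho^{-1/2}\big ( h_{0,\sigma}(P_3) - e_0(P_3) + \rho \big )$. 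The \emph{role of the hypothesis $\sigma \ge \mathrm{C}_0 g^2$} is precisely that, choosing $\rho = \sigma$, one has $\mathrm{C}|g|\rho^{-1/2} = \mathrm{C}|g|\sigma^{-1/2} \le \mathrm{C}\,\mathrm{C}_0^{-1/2}$, which is an arbitrarily small fraction once $\mathrm{C}_0$ is taken large; thus $H_{I,\sigma}(P_3)$ becomes a small form-perturbation, relative to the $\mathrm{O}(\sigma)$-sized gap of $h_{0,\sigma}(P_3)$. I would fix $P_0 \le m/2$ throughout, so that $(1-|P_3|/m)\sigma \ge \sigma/2$.

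Write $\mu_n(\cdot)$ for the $n$-th min-max value of a semibounded self-adjoint operator, so that $E_\sigma(P_3) = \mu_1(H_\sigma(P_3))$. First I would bound $\mu_1$ from above by testing on $\Phi_0$: using the form bound with $\rho = \sigma$ and $\big ( \Phi_0 , ( h_{0,\sigma}(P_3) - e_0(P_3) + \sigma ) \Phi_0 \big ) = \sigma$,
\begin{equation*}
E_\sigma(P_3) \le \big ( \Phi_0 , H_\sigma(P_3) \Phi_0 \big ) \le e_0(P_3) + \mathrm{C}|g|\sigma^{1/2} \le e_0(P_3) + \sigma/16 ,
\end{equation*}
the last inequality again using $\sigma \ge \mathrm{C}_0 g^2$ with $\mathrm{C}_0$ large (if $H_{I,\sigma}(P_3)$ is exactly Wick ordered one in fact has $( \Phi_0 , H_{I,\sigma}(P_3) \Phi_0 ) = 0$, but this refinement is not needed).

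Next I would bound $\mu_2$ from below. For a normalized $\Phi$ in the form domain of $H_\sigma(P_3)$ with $\Phi \perp \Phi_0$, Hypothesis $\mathbf{(H_0)}$ together with \eqref{eq:gapH0} give $s := \big ( \Phi , ( h_{0,\sigma}(P_3) - e_0(P_3) ) \Phi \big ) \ge ( 1 - |P_3|/m ) \sigma \ge \sigma/2$, while the form bound with $\rho = \sigma$ yields
\begin{equation*}
\big ( \Phi , H_\sigma(P_3) \Phi \big ) \ge e_0(P_3) + s - \mathrm{C}|g|\sigma^{-1/2}( s + \sigma ) \ge e_0(P_3) + \sigma/4 ,
\end{equation*}
the last step holding once $\mathrm{C}_0$ is large (the middle expression is increasing in $s$, so its infimum over $s \ge \sigma/2$ is attained at $s = \sigma/2$). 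Taking the infimum over such $\Phi$ and testing the min-max formula for $\mu_2$ with the single vector $\Phi_0$, we obtain $\mu_2\big ( H_\sigma(P_3) \big ) \ge e_0(P_3) + \sigma/4$.

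It remains to conclude. Combining the two bounds, $\mu_1\big ( H_\sigma(P_3) \big ) \le e_0(P_3) + \sigma/16 < e_0(P_3) + \sigma/4 \le \mu_2\big ( H_\sigma(P_3) \big )$. Since $H_\sigma(P_3)$ is self-adjoint and semibounded (Proposition \ref{prop:self-adjointness}), the strict inequality $\mu_1 < \mu_2$ forces, by the min-max principle, $E_\sigma(P_3)$ to be a simple eigenvalue at the bottom of $\sigma\big ( H_\sigma(P_3) \big )$, isolated from the remaining spectrum which begins at $\mu_2\big ( H_\sigma(P_3) \big )$. Hence $E_\sigma(P_3)$ is simple and $\mathrm{Gap}\big ( H_\sigma(P_3) \big ) = \mu_2\big ( H_\sigma(P_3) \big ) - \mu_1\big ( H_\sigma(P_3) \big ) \ge \sigma/4 - \sigma/16 \ge \sigma/8$, i.e. $\mathbf{Gap}(\sigma)$ holds. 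I do not expect a genuine analytic obstacle at this step; the only real work is the bookkeeping of constants — first fixing $P_0 \le m/2$, then $\mathrm{C}_0$ large enough that $\mathrm{C}\,\mathrm{C}_0^{-1/2}$ and $\mathrm{C}|g|\sigma^{1/2} \le \mathrm{C}\,\mathrm{C}_0^{-1/2}\sigma$ are suitably small fractions of the quantities against which they are weighed, and finally $g_0,\sigma_0$ small enough for \eqref{eq:gapH0} and \eqref{eq:main_estimate_root} to be available. The substance of Theorem \ref{thm:regularity} lies rather in propagating such a spectral gap as $\sigma$ decreases to $0$, which is the object of the induction whose root this lemma is.
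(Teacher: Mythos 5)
Your proof is correct and takes essentially the same route as the paper: the min-max principle on the orthogonal complement of $\Phi_{\rm el}\otimes\Omega_\sigma$, combined with the relative form bound \eqref{eq:main_estimate_root} and the free gap \eqref{eq:gapH0}, with the hypothesis $\sigma \ge \mathrm{C}_0 g^2$ absorbing the perturbative error into a small fraction of $\sigma$. The only cosmetic differences are your choice $\rho=\sigma$ (the paper takes $\rho^{1/2}=4\mathrm{C}|g|$) and your direct test-vector estimate $E_\sigma(P_3)\le e_0(P_3)+\sigma/16$ in place of invoking Corollary \ref{cor:Etau-Esigma}.
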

\begin{proof}
To simplify the notations, we write $H_\sigma$ for $H_\sigma(P_3)$, $E_\sigma$ for $E_\sigma(P_3)$, and similarly for other quantities depending on $P_3$. Let $\mu_{\sigma}$ denote the first point above $E_{\sigma}$ in the spectrum of $H_{\sigma}$. By the min-max principle,
\begin{equation}\label{eq:min-max}
\begin{split}
\mu_{\sigma} &\ge \inf_{ \psi \in [ \Phi_{\rm el} \otimes \Omega_\sigma ]^\perp , \| \psi \| = 1} ( \psi , H_{\sigma} \psi ),
\end{split}
\end{equation}
where $[v]^\perp$ denotes the orthogonal complement of the vector space spanned by $v$. It follows from \eqref{eq:main_estimate_root} that for any $\psi \in [ \Phi_{\rm el} \otimes \Omega_\sigma ]^\perp$, $\| \psi \| = 1$, and any $\rho > 0$,
\begin{equation}
\begin{split}
(\psi , H_{\sigma} \psi ) &\ge ( \psi , H_{ 0,\sigma } \psi ) - \mathrm{C} |g| \rho^{-1/2} ( \psi , [ h_{0,\sigma} - e_0(P_3) + \rho ] \psi ) \\
&\ge \left ( 1 - \mathrm{C} |g| \rho^{-1/2} \right ) ( \psi , H_{ 0,\sigma } \psi ) + \mathrm{C} |g| \rho^{-1/2} e_0(P_3) - \mathrm{C} |g| \rho^{1/2}.
\end{split}
\end{equation}
By \eqref{eq:gapH0}, for any $\psi \in [ \Phi_{ \mathrm{el} } \otimes \Omega_\sigma ]^\perp$, $(\psi , h_{0,\sigma} \psi ) \ge e_0(P_3) + (1-|P_3|/m) \sigma$ provided that $\sigma_0$ is chosen sufficiently small. Hence for any $\rho$  such that $\rho^{1/2} > \mathrm{C} |g|$,
\begin{equation}
\begin{split}
(\psi , H_{\sigma} \psi ) &\ge e_0(P_3) + \left ( 1 - \mathrm{C} |g| \rho^{-1/2} \right ) \left ( 1 - \frac{|P_3|}{m} \right ) \sigma - \mathrm{C} |g| \rho^{1/2}.
\end{split}
\end{equation}
Choosing $\rho^{1/2} = 4 \mathrm{C} |g|$ and $P_0$ sufficiently small, by Corollary \ref{cor:Etau-Esigma}, we obtain
\begin{equation}
\begin{split}
(\psi , H_{\sigma} \psi ) &\ge E_{\sigma} + \frac{3}{4} \left ( 1 - \frac{ |P_3| }{m} \right ) \sigma - 4 \mathrm{C}^2 g^2 \\
&\ge E_\sigma + \frac{1}{2} \sigma - 4 \mathrm{C}^2 g^2.
\end{split}
\end{equation}
Together with \eqref{eq:min-max}, this leads to the statement of the lemma provided that the constant $\mathrm{C}_0$ is chosen such that $\mathrm{C}_0 > 32 \mathrm{C}^2 / 3$.
\end{proof}
The following lemma corresponds to the induction step of the induction process in the proof of Proposition \ref{prop:gap}.
\begin{lemma}\label{lm:induction}
There exists $g_0>0$, $\sigma_0>0$ and $P_0>0$ such that for all $|g| \le g_0$ and $|P_3| \le P_0$, for all $\sigma$ such that $0 < \sigma \le \sigma_0$,
\begin{equation*}
\mathbf{Gap}(\sigma) \Rightarrow \mathbf{Gap}(\sigma/2).
\end{equation*}
\end{lemma}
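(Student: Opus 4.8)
The plan is to view $H_{\sigma/2}(P_3)$ as a small perturbation of $H^\sigma(P_3)|_{\mathcal{H}_{\sigma/2}}$ and to transfer the spectral gap through a relative-bound/min-max argument, exactly as in \cite{Pizzo,CFP,BFP}. Assume $\mathbf{Gap}(\sigma)$, so that $H_\sigma(P_3)$ has a simple normalized ground state $\Phi_\sigma(P_3)$ with $\mathrm{Gap}(H_\sigma(P_3)) \ge \sigma/8$. Applying Lemma \ref{lm:gap1} with $\tau = \sigma/2$ and $\gamma = 1/8$ gives that $\Phi_\sigma(P_3) \otimes \Omega_{\sigma/2}^\sigma$ is a normalized ground state of $H^\sigma(P_3)|_{\mathcal{H}_{\sigma/2}}$, with ground state energy $E_\sigma(P_3)$, and
\[
\mathrm{Gap}\big( H^\sigma(P_3)|_{\mathcal{H}_{\sigma/2}} \big) \ge \min\big( \sigma/8 , (\sigma/2)/4 \big) = \sigma/8 .
\]
I would then write $H_{\sigma/2}(P_3) = H^\sigma(P_3)|_{\mathcal{H}_{\sigma/2}} + H_{\sigma/2}^\sigma(P_3)$, with the perturbation $H_{\sigma/2}^\sigma(P_3)$ given by \eqref{eq:Htausigma1} and \eqref{eq:Htausigma2}.

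Next I would feed in the relative bound \eqref{eq:main_estimate} with $\tau = \sigma/2$: for any $0 < \rho < 1$,
\[
\big\| [ H^\sigma(P_3)|_{\mathcal{H}_{\sigma/2}} - E_\sigma(P_3) + \rho ]^{-1/2} H_{\sigma/2}^\sigma(P_3) [ H^\sigma(P_3)|_{\mathcal{H}_{\sigma/2}} - E_\sigma(P_3) + \rho ]^{-1/2} \big\| \le \mathrm{C} |g| \sigma^{1/2} \rho^{-1/2} =: \kappa .
\]
The decisive feature is the factor $\sigma^{1/2}$: taking $\rho = \sigma/16$, of the same order as the unperturbed gap, yields $\kappa = 4\mathrm{C}|g|$, which is made arbitrarily small by shrinking $g_0$. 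Since $H^\sigma(P_3)|_{\mathcal{H}_{\sigma/2}} - E_\sigma(P_3) \ge 0$, this reads $|(\psi , H_{\sigma/2}^\sigma(P_3) \psi )| \le \kappa ( \psi , [ H^\sigma(P_3)|_{\mathcal{H}_{\sigma/2}} - E_\sigma(P_3) + \rho ] \psi )$ for $\psi$ in the form domain. I would then conclude with the min-max principle: on the orthogonal complement of $\Phi_\sigma(P_3) \otimes \Omega_{\sigma/2}^\sigma$ the gap bound gives $(\psi , H^\sigma(P_3)|_{\mathcal{H}_{\sigma/2}} \psi ) \ge ( E_\sigma(P_3) + \sigma/8 ) \| \psi \|^2$, hence, for $\| \psi \| = 1$,
\[
( \psi , H_{\sigma/2}(P_3) \psi ) \ge E_\sigma(P_3) + \sigma/8 - \kappa ( \sigma/8 + \sigma/16 ) \ge E_\sigma(P_3) + \sigma/16
\]
for $g_0$ small enough. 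By min-max, the bottom of $\sigma( H_{\sigma/2}(P_3) )$ lying above $E_{\sigma/2}(P_3)$ is therefore $\ge E_\sigma(P_3) + \sigma/16$. Since Corollary \ref{cor:Etau-Esigma} gives $E_{\sigma/2}(P_3) = \inf \sigma( H_{\sigma/2}(P_3) ) \le E_\sigma(P_3) < E_\sigma(P_3) + \sigma/16$, the point $E_{\sigma/2}(P_3)$ is an isolated simple eigenvalue, and $\mathrm{Gap}( H_{\sigma/2}(P_3) ) \ge ( E_\sigma(P_3) + \sigma/16 ) - E_{\sigma/2}(P_3) \ge \sigma/16 = (\sigma/2)/8$, which is exactly $\mathbf{Gap}(\sigma/2)$.

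The genuinely hard part — controlling the perturbation $H_{\sigma/2}^\sigma(P_3)$, including the terms coupling the electronic variables $x'_j$ to the field, with the all-important gain $\sigma^{1/2}$ — has already been absorbed into the proof of \eqref{eq:main_estimate}; what remains is only bookkeeping of constants. The single point that must be handled with care is the choice of $\rho$: it has to be taken comparable to, and not larger than, the unperturbed gap $\sigma/8$, so that the perturbative loss $\kappa(\sigma/8 + \rho)$ stays strictly below $\sigma/16$ after $g_0$ is shrunk; this is what makes the induction self-improving and propagates the bound down to the scale $\mathrm{C}_0 g^2$ of Lemma \ref{lm:root}.
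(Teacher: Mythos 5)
Your argument is correct and follows essentially the same route as the paper's proof: the decomposition $H_{\sigma/2}=H^\sigma|_{\mathcal{H}_{\sigma/2}}+H_{\sigma/2}^\sigma$, Lemma \ref{lm:gap1} for the gap of $H^\sigma|_{\mathcal{H}_{\sigma/2}}$, the relative form bound \eqref{eq:main_estimate}, the min--max principle over $[\Phi_\sigma\otimes\Omega_{\sigma/2}^\sigma]^\perp$, and Corollary \ref{cor:Etau-Esigma}. The only difference is the choice of $\rho$ (you take $\rho=\sigma/16$, the paper takes $\rho^{1/2}=4\mathrm{C}|g|\sigma^{1/2}$), which is immaterial; note only that your insistence that $\rho$ ``has to be'' comparable to the gap is not necessary, as the paper's much smaller choice works equally well.
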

\begin{proof}
Again, throughout the proof, we drop the dependence on $P_3$ in all the considered quantities. Let $\mathbf{Gap}(\sigma)$ be satisfied for some $0 < \sigma$, let $\Phi_\sigma$ be a ground state of $H_{\sigma}$, and let $\tau = \sigma/2$. As in the proof of Lemma \ref{lm:root}, let $\mu_{\tau}$ denote the first point above $E_{\tau}$ in the spectrum of $H_{\tau}$. By the min-max principle,
\begin{equation}
\begin{split}
\mu_{\tau} &\ge \inf_{ \psi \in [ \Phi_\sigma \otimes \Omega_\tau^\sigma ]^\perp , \| \psi \| = 1} ( \psi ,   H_{\tau}   \psi ),
\end{split}
\end{equation}
where $\Omega_\tau^\sigma$ is the vacuum in $\mathcal{F}_\tau^\sigma$ and where $[ \Phi_\sigma \otimes \Omega_\tau^\sigma ]^\perp$ denotes the orthogonal complement of the vector space spanned by $\Phi_\sigma \otimes \Omega_\tau^\sigma$ in $\mathcal{H}_\sigma \otimes \mathcal{F}_\tau^\sigma$. It follows from \eqref{eq:main_estimate} that for any $\rho>0$,
\begin{equation*}
\begin{split}
&( \psi ,   H_{\tau}   \psi ) \ge ( \psi ,   H^\sigma |_{\mathcal{H}_\tau}   \psi ) + ( \psi ,   H_\tau^\sigma   \psi ) \\ 
&\ge \left [ 1 - \mathrm{C} |g| \sigma^{1/2} \rho^{-1/2} \right ] ( \psi ,   H^\sigma |_{\mathcal{H}_\tau}   \psi ) + \mathrm{C} |g| \sigma^{1/2} \rho^{-1/2} E_{\sigma} - \mathrm{C} |g| \sigma^{1/2} \rho^{1/2}.
\end{split}
\end{equation*}
Next, from $\mathbf{Gap}(\sigma)$ and Property \eqref{eq:gap_Hsigma|tau}, since $\tau = \sigma/2$, we obtain that for any $\psi$ in $[ \Phi_\sigma \otimes \Omega_\tau^\sigma ]^\perp$, $\| \psi \|=1$,
\begin{equation}
( \psi ,   H^\sigma |_{\mathcal{H}_\tau}   \psi ) \ge E_{\sigma} + \min \left ( \frac{\sigma}{8} , \frac{\tau}{4} \right ) \ge E_\sigma + \sigma/8,
\end{equation}
provided that $|g|$ is sufficiently small. Hence for any $\rho>0$ such that $\rho^{1/2} > \mathrm{C} |g| \sigma^{1/2}$,
\begin{equation}
\begin{split}
( \psi ,   H_{\tau}   \psi ) &\ge E_{\sigma} + \left [ 1 - \mathrm{C} |g| \sigma^{1/2} \rho^{-1/2} \right ] \frac{ \sigma }{ 8 } - \mathrm{C} |g| \sigma^{1/2} \rho^{1/2}.
\end{split}
\end{equation}
Choosing $\rho^{1/2} = 4 \mathrm{C} |g| \sigma^{1/2}$, by Corollary \ref{cor:Etau-Esigma}, we get
\begin{equation}
\begin{split}
( \psi ,   H_{\tau}   \psi ) &\ge E_{\sigma} + \frac{3}{32} \sigma - 4 \mathrm{C}^2 g^2 \sigma \ge E_{\tau} + \frac{3}{16} \tau - 8 \mathrm{C}^2 g^2 \tau.
\end{split}
\end{equation}
Hence $\mu_{\tau} \ge E_{\tau} + \tau/8$ provided that $|g| \le ( 8 \mathrm{C} )^{-1}$, which proves the lemma.
\end{proof}
\noindent \textsc{Proof of Proposition \ref{prop:gap}}
As mentioned above, Proposition \ref{prop:gap} easily follows from Lemmata \ref{lm:root} and \ref{lm:induction}, and an induction argument.
\hfill $\square$ \\

Let us conclude this Subsection with a bound on the difference $| E_\tau - E_\sigma |$.
\begin{lemma}\label{lm:|Etau-Esigma|}
Under the conditions of Proposition \ref{prop:gap}, there exists a positive constant $\mathrm{C}$ such that for all $0 \le \tau \le \sigma \le \sigma_0$,
\begin{equation}
| E_\tau(P_3) - E_\sigma(P_3) | \le \mathrm{C} |g| \sigma.
\end{equation}
\end{lemma}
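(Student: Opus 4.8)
The plan is to bound $|E_\tau - E_\sigma|$ by combining the inequality $E_\tau(P_3)\le E_\sigma(P_3)$ already supplied by Corollary \ref{cor:Etau-Esigma} with a matching lower bound for $E_\tau(P_3)$ obtained perturbatively from the operator estimate \eqref{eq:main_estimate}. As in the proofs of Lemmata \ref{lm:root} and \ref{lm:induction} I would drop the dependence on $P_3$. By Proposition \ref{prop:gap} together with Lemma \ref{lm:gap1}, $E_\sigma$ is the bottom of the spectrum of $H^\sigma|_{\mathcal{H}_\tau}$, so for every $\rho>0$ the operator $A_\rho := H^\sigma|_{\mathcal{H}_\tau} - E_\sigma + \rho$ is bounded below by $\rho>0$; in particular $A_\rho^{\pm 1/2}$ are well defined and $(\psi,A_\rho\psi)\ge\rho\|\psi\|^2$.

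First I would rewrite, for any normalized $\psi\in D(H^\sigma|_{\mathcal{H}_\tau})$ and any $0<\rho<1$, the quadratic form of the perturbation as
\begin{equation*}
|(\psi,H_\tau^\sigma\psi)| = \bigl|\bigl(A_\rho^{1/2}\psi,\,A_\rho^{-1/2}H_\tau^\sigma A_\rho^{-1/2}\,A_\rho^{1/2}\psi\bigr)\bigr| \le \mathrm{C}|g|\sigma^{1/2}\rho^{-1/2}(\psi,A_\rho\psi),
\end{equation*}
using \eqref{eq:main_estimate}. Since $H_\tau = H^\sigma|_{\mathcal{H}_\tau} + H_\tau^\sigma$ and $(\psi,A_\rho\psi) = (\psi,H^\sigma|_{\mathcal{H}_\tau}\psi) - E_\sigma + \rho$, this gives
\begin{equation*}
(\psi,H_\tau\psi) \ge \bigl(1-\mathrm{C}|g|\sigma^{1/2}\rho^{-1/2}\bigr)(\psi,H^\sigma|_{\mathcal{H}_\tau}\psi) + \mathrm{C}|g|\sigma^{1/2}\rho^{-1/2}E_\sigma - \mathrm{C}|g|\sigma^{1/2}\rho^{1/2}.
\end{equation*}

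Next I would choose $\rho^{1/2} = 2\mathrm{C}|g|\sigma^{1/2}$, which is $<1$ for $|g|$ and $\sigma$ sufficiently small, so that the prefactor of $(\psi,H^\sigma|_{\mathcal{H}_\tau}\psi)$ equals $1/2>0$; then, using $(\psi,H^\sigma|_{\mathcal{H}_\tau}\psi)\ge E_\sigma$, one gets $(\psi,H_\tau\psi)\ge E_\sigma - 2\mathrm{C}^2 g^2\sigma$. Taking the infimum over $\psi$ in the dense domain yields $E_\tau \ge E_\sigma - 2\mathrm{C}^2 g^2\sigma$, and combining this with $E_\tau\le E_\sigma$ and $g^2\le g_0|g|$ gives $|E_\tau(P_3) - E_\sigma(P_3)|\le\mathrm{C}|g|\sigma$ after renaming the constant.

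There is no genuine obstacle here: the argument is precisely the perturbative step already carried out twice above, and the only points to verify are that \eqref{eq:main_estimate} is available for the value of $\tau$ at hand (in the borderline case $\tau = 0$ one has $H^\sigma|_{\mathcal{H}_0} = H^\sigma$ and $H_0 = H_g(P_3)$, the Wick-ordered fiber Hamiltonian) and that $A_\rho>0$ so that the square roots $A_\rho^{\pm 1/2}$ make sense — both are guaranteed by Proposition \ref{prop:gap} and Lemma \ref{lm:gap1}. The remaining work is purely bookkeeping of constants.
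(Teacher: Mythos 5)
Your proposal is correct and is essentially the paper's own argument: the paper likewise takes $E_\tau(P_3)\le E_\sigma(P_3)$ from Corollary \ref{cor:Etau-Esigma} and obtains the reverse inequality from \eqref{eq:main_estimate} by precisely the variational estimate you spell out (the same step already used in Lemmata \ref{lm:root} and \ref{lm:induction}). Your explicit choice $\rho^{1/2}=2\mathrm{C}|g|\sigma^{1/2}$ in fact gives the slightly stronger bound $\mathrm{C}g^2\sigma$, which is consistent with Remark \ref{rk:Wick}.
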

\begin{proof}
By Corollary \ref{cor:Etau-Esigma}, we already have $E_\tau(P_3) \le E_\sigma(P_3)$. The inequality $E_\sigma(P_3) \le E_\tau(P_3) + \mathrm{C} |g| \sigma$ follows similarly, using \eqref{eq:main_estimate} and a variational argument.
\end{proof}
\begin{remark}\label{rk:Wick}
Lemma \ref{lm:|Etau-Esigma|} remains true if the operators under consideration are not Wick-ordered. More precisely in this case we have
\begin{equation}
E_\tau (P_3) \le E_\sigma(P_3) + \mathrm{C} g^2 \sigma \le E_\tau(P_3) + \mathrm{C} |g| \sigma.
\end{equation}
\end{remark}

\subsection{Proof of Theorem \ref{thm:regularity}}

The key property used in the proof of Theorem \ref{thm:regularity} lies in the estimate of $| E'_\tau( P_3 ) - E'_\sigma( P_3 )|$ for $\tau \le \sigma$.
\begin{prop}\label{prop:E'tau-E'sigma}
There exits $g_0>0$, $\sigma_0>0$ and $P_0>0$ such that for all $0 < |g| \le g_0$ and $|P_3| \le P_0$, for all $\sigma,\tau>0$ such that $\tau \le \sigma \le \sigma_0$, for all $\delta>0$,
\begin{equation*}
| E'_\tau( P_3 ) - E'_\sigma( P_3 )| \le \mathrm{C}_\delta \sigma^{1/2-\delta},
\end{equation*}
where $\mathrm{C}_\delta$ is a positive constant depending only on $\delta$.
\end{prop}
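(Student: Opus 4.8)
The plan is to compare the Feynman--Hellmann representations of $E'_\tau(P_3)$ and $E'_\sigma(P_3)$ after identifying $\mathcal H_\tau$ with $\mathcal H_\sigma\otimes\mathcal F_\tau^\sigma$, and then to run the iterative analytic perturbation theory of \cite{Pizzo,CFP} across the single shell $\{\tau\le|k|\le\sigma\}$. By Proposition~\ref{prop:gap} the assertions $\mathbf{Gap}(\sigma)$ and $\mathbf{Gap}(\tau)$ hold, so $E_\sigma$ and $E_\tau$ are simple isolated eigenvalues, the normalized ground states $\Phi_\sigma$, $\Phi_\tau$ exist, and for fixed $\sigma>0$ the map $P_3\mapsto E_\sigma(P_3)$ is real-analytic (the dependence on $P_3$ being polynomial); hence $E'_\tau$, $E'_\sigma$ are well defined and the Feynman--Hellmann identity of \eqref{eq:Feynman-Hellman} gives $m\,E'_\bullet(P_3)=(\Phi_\bullet,\Pi^\bullet_3\Phi_\bullet)$ with $\Pi^\bullet_3:=P_3-\d\Gamma(k_3)-gA_{3,\bullet}(x',0)$. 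On $\mathcal H_\tau=\mathcal H_\sigma\otimes\mathcal F_\tau^\sigma$ one has $\Pi^\tau_3=\Pi^{(\sigma)}-\d\Gamma_{[\tau,\sigma]}(k_3)-gA^\sigma_{3,\tau}(x',0)$, where $\Pi^{(\sigma)}:=P_3-\d\Gamma_{|k|\ge\sigma}(k_3)-gA_{3,\sigma}(x',0)$ acts only on the factor $\mathcal H_\sigma$, while $\Phi_\sigma$ is identified with $\Phi_\sigma\otimes\Omega_\tau^\sigma$. Therefore $m(E'_\tau-E'_\sigma)$ equals $(*):=(\Phi_\tau,\Pi^{(\sigma)}\Phi_\tau)-(\Phi_\sigma\otimes\Omega_\tau^\sigma,\Pi^{(\sigma)}\Phi_\sigma\otimes\Omega_\tau^\sigma)$ minus $(\Phi_\tau,\d\Gamma_{[\tau,\sigma]}(k_3)\Phi_\tau)$ minus $g(\Phi_\tau,A^\sigma_{3,\tau}(x',0)\Phi_\tau)$. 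The last two contributions only involve the shell modes; using $|k_3|\le|k|\le\sigma$ there together with the expression for $a(\bold k)\Phi_\tau$ provided by the pull-through Lemma~\ref{lm:pull-through} applied with cutoff $\tau$ (and the bounds \eqref{eq:estimate_Lsigma}, \eqref{eq:estimate2} on $L_\tau$, $R_\tau$), a short computation exploiting the explicit form of $f_\tau$ bounds them by $\mathrm Cg^2\sigma$, which is $\le\mathrm C_\delta\sigma^{1/2-\delta}$ for $0\le\sigma\le\sigma_0$.

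The heart of the matter is the term $(*)$. A naive estimate through $\|\Phi_\tau-\Phi_\sigma\otimes\Omega_\tau^\sigma\|=O(|g|)$ only yields $(*)=O(g^2)$, which does not decay in $\sigma$; the needed decay rests on a cancellation between the soft-photon cloud of $\Phi_\tau$ in the shell and the concomitant depletion of its shell-vacuum component. I would make this cancellation manifest by introducing the Weyl (Bogoliubov) operator $W(f^\sigma_\tau)$ associated with the dressing function of \eqref{eq:def_fsigma} restricted to the shell, $f^\sigma_\tau:=f_\tau\,\mathds{1}_{\tau\le|k|\le\sigma}\in\mathrm L^2$. Since $f^\sigma_\tau$ is supported in $\{|k|\le\sigma\}$ it commutes with $\Pi^{(\sigma)}$, so $(*)=(\tilde\Phi_\tau,\Pi^{(\sigma)}\tilde\Phi_\tau)-(\Phi_\sigma\otimes\Omega_\tau^\sigma,\Pi^{(\sigma)}\Phi_\sigma\otimes\Omega_\tau^\sigma)$ with $\tilde\Phi_\tau:=W(f^\sigma_\tau)^*\Phi_\tau$ the normalized ground state of the partially dressed operator $\tilde H_\tau:=W(f^\sigma_\tau)^*H_\tau W(f^\sigma_\tau)=H^\sigma(P_3)|_{\mathcal H_\tau}+\tilde H^\sigma_\tau$. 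By construction the Bogoliubov shift removes from the shell interaction its most infrared-singular piece (the ground-state--aligned part of $-\tfrac{g}{2m}A^\sigma_{3,\tau}(P_3-\d\Gamma(k_3)-gA_{3,\sigma})$ carrying the $|k|^{-1/2}$ form factor, which is exactly what produced the non-decaying $O(g^2)$), so that $\tilde H^\sigma_\tau$ satisfies a strictly better version of \eqref{eq:main_estimate}, morally $\|[H^\sigma(P_3)|_{\mathcal H_\tau}-E_\sigma+\rho]^{-1/2}\tilde H^\sigma_\tau[H^\sigma(P_3)|_{\mathcal H_\tau}-E_\sigma+\rho]^{-1/2}\|\le\mathrm C_\delta|g|\sigma^{1-\delta}\rho^{-1/2}$, the $\delta$-loss being inherited from the Hölder-$(\tfrac14-\delta)$ control of $P_3\mapsto E'_\sigma$ used in bounding $f^\sigma_\tau$ and $\tilde H^\sigma_\tau$.

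From here I would conclude as in \cite{Pizzo,CFP}: combining the gap \eqref{eq:gap_Hsigma|tau} for $H^\sigma(P_3)|_{\mathcal H_\tau}$ (which, by Lemma~\ref{lm:gap1} and $\mathbf{Gap}(\sigma)$, is $\ge\min(\sigma/8,\tau/4)$) with the refined relative bound, a Neumann/Cauchy-integral expansion of the ground-state projection of $\tilde H_\tau$ around that of $H^\sigma(P_3)|_{\mathcal H_\tau}$, the resolvents being controlled on a contour at distance $\sim\sigma$ from the spectrum, yields $\|[H^\sigma(P_3)|_{\mathcal H_\tau}-E_\sigma+1]^{1/2}(\tilde\Phi_\tau-\Phi_\sigma\otimes\Omega_\tau^\sigma)\|\le\mathrm C_\delta|g|\sigma^{1/2-\delta}$. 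Since $\Pi^{(\sigma)}$ is relatively form-bounded by $H^\sigma(P_3)|_{\mathcal H_\tau}$, and since $\Pi^{(\sigma)}(\Phi_\sigma\otimes\Omega_\tau^\sigma)$ lies in $\mathcal H_\sigma\otimes\mathbb C\,\Omega_\tau^\sigma$ (because $\Pi^{(\sigma)}$ does not touch the shell modes), expanding $(*)=2\,\Re\big(\Pi^{(\sigma)}\Phi_\sigma\otimes\Omega_\tau^\sigma,\,\tilde\Phi_\tau-\Phi_\sigma\otimes\Omega_\tau^\sigma\big)+\big(\tilde\Phi_\tau-\Phi_\sigma\otimes\Omega_\tau^\sigma,\,\Pi^{(\sigma)}(\tilde\Phi_\tau-\Phi_\sigma\otimes\Omega_\tau^\sigma)\big)$ shows that the first, linear term only sees the shell-vacuum component of $\tilde\Phi_\tau-\Phi_\sigma\otimes\Omega_\tau^\sigma$, so that $(*)=O(\sigma^{1/2-\delta})$, which proves the proposition. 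The genuinely new point compared with \cite{Pizzo,CFP} is that here all these matrix elements carry the electronic variables, through the factors $e^{ik'\cdot x'}$ in \eqref{eq:h_j_tildeh_j} and through the minimal coupling $p_j-ea_j(x')$; these are handled, as announced in the introduction, via the exponential decay in $x'$ of $\Phi^\sigma_g(P_3)$, uniform in $\sigma$, which is a consequence of Hypothesis $\mathbf{(H_0)}$.

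I expect the main obstacle to be precisely the refined relative bound for $\tilde H^\sigma_\tau$ and, more generally, making all the analytic-perturbation estimates uniform in both $\sigma$ and $\tau$: this forces one to feed in the regularity of $P_3\mapsto E'_\sigma$ with a constant allowed to depend on $\sigma$ but which must be tracked, so that the argument is run as an induction over the dyadic scales $\sigma_n=2^{-n}\sigma_0$ --- a crude version first (with any positive exponent in place of $\tfrac14$, using the $\mathrm C/\sigma$ Lipschitz bound for $E'_\sigma$ from ordinary perturbation theory at a fixed cutoff), then a bootstrap --- after which summing the per-scale estimates, $\sum_n\mathrm C_\delta\sigma_n^{1/2-\delta}<\infty$, upgrades the consecutive-scale bound to the stated bound for arbitrary $\tau\le\sigma$. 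A secondary point to check is that conjugating by $W(f^\sigma_\tau)$ does not spoil the relative boundedness of the electronic terms; since $f^\sigma_\tau$ is supported in $\{|k|\le\sigma\}$ and has small norm, it only perturbs $H_\tau$ by quantities bounded uniformly in $\sigma$, so this is harmless.
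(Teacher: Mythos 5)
Your skeleton (Feynman--Hellmann representations of $E'_\sigma,E'_\tau$, the identification $\mathcal H_\tau\simeq\mathcal H_\sigma\otimes\mathcal F_\tau^\sigma$, a Weyl/Bogoliubov dressing of the shell modes, a contour--Neumann expansion of ground-state projections using the gaps of Lemma \ref{lm:gap1} and Proposition \ref{prop:gap}, exponential decay in $x'$ for the electronic terms, and a final dyadic telescoping) is essentially the paper's strategy (Lemmata \ref{lm:E'_GS} and \ref{lm:2nd_step}). The genuine gap is the step you label ``morally'': the claim that after conjugation by $W(f_\tau^\sigma)$ the shell interaction $\tilde H^\sigma_\tau$ satisfies an improved relative bound $\mathrm{C}_\delta|g|\sigma^{1-\delta}\rho^{-1/2}$ in place of \eqref{eq:main_estimate}. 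This is not justified, and as an operator-norm statement it is false: the Weyl shift only translates the shell field by a c-number, so the transformed interaction still contains a term linear in the shell creation/annihilation operators with the $|k|^{-1/2}$ form factor, now multiplied by the velocity fluctuation $\nabla H^{\mathrm{ren}}-E'_\sigma$ instead of the velocity; since $\nabla H^{\mathrm{ren}}-E'_\sigma$ is not small in norm (it contains $\d\Gamma(k_3)$, $A_{3,\sigma}$), the relative bound of the dressed shell term is still of order $|g|\sigma^{1/2}\rho^{-1/2}$, and with $\rho\sim\sigma$ your Neumann expansion only reproduces the non-decaying $O(|g|)$ estimate you set out to beat. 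The true gain is not an operator bound but a statement about matrix elements on the ground state: in Lemma \ref{lm:2nd_step} the dangerous contribution is isolated (the term $[e]$), its first-order part is killed by the orthogonality of $(\nabla H^{\mathrm{ren}}_{\sigma,E'_\sigma}-E'_\sigma)\Phi^{\mathrm{ren}}_{\sigma,E'_\sigma}$ to $\Phi^{\mathrm{ren}}_{\sigma,E'_\sigma}$, and what remains is the second-order resolvent quadratic form, which must be shown to be $\le \mathrm{C}_\delta|g|^{-1}\sigma^{-2\delta}$ on the contour; this is the bound (IV.68) of \cite{CFP}, proved there by a separate induction over the scales. Your proposal neither produces nor replaces this estimate, and without it the argument does not close.

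A related weak point is your proposed source of the $\delta$-loss: ``feeding in the H\"older-$(1/4-\delta)$ control of $P_3\mapsto E'_\sigma$''. That control is the content of Theorem \ref{thm:regularity}, which rests on the present proposition, so it cannot simply be assumed at a fixed scale; your crude-bound-plus-bootstrap remark acknowledges the circularity but does not resolve it, and in fact the per-scale gain in \cite{Pizzo,CFP} and in the paper comes from the orthogonality structure and the inductive control of the quadratic form just described, not from H\"older continuity of $E'_\sigma$ in $P_3$. The remaining ingredients of your sketch are consistent with the paper's proof: the reduction of $|E'_\tau-E'_\sigma|$ to a projector (or vector) difference modulo $O(g^2\sigma)$ errors (cf.\ Lemma \ref{lm:E'_GS} and Lemma \ref{lm:|Etau-Esigma|}), the use of $f_\tau$ from \eqref{eq:def_fsigma} restricted to the shell in place of the paper's $g_{\tau,E'_\sigma}$ (these agree to the relevant order), the handling of the electronic factors $e^{ik'\cdot x'}$ and $p_j-ea_j(x')$ via exponential decay in $x'$ and the commutator identity \eqref{eq:[H,x]}, and the telescoping over dyadic scales to pass from consecutive scales to arbitrary $\tau\le\sigma$.
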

We shall divide the main part of the proof of Proposition \ref{prop:E'tau-E'sigma} into two lemmata. Let us begin with some definitions and notations. For $\sigma >0$ and $\rho \ge 0$, we define the function $g_{\sigma,\rho} \in \mathrm{L}^2( \mathbb{R}^3 \times \mathbb{Z}_2 )$ by
\begin{equation*}
g_{\sigma,\rho}( \bold k ) = g \mathds{1}_{ \sigma \le |k| \le \Lambda }( k ) \frac{ \epsilon^3_\lambda (k) }{ \sqrt{2\pi} |k|^{1/2} } \frac{ \rho }{ |k| - k_3 \rho }.
\end{equation*}
Depending on the context, the Weyl operator $W( ig_{\sigma,\rho} )$ will represent an operator on $\mathcal{H}_\sigma$, $\mathcal{H}_\tau$ (for $\tau \le \sigma$), or $\mathcal{H}$. 

From now on, to simplify the notations, we drop the dependence on $P_3$ everywhere unless a confusion may arise. For $g$, $\sigma$ and $P_3$ as in Proposition \ref{prop:gap}, let $\Phi_\sigma$ denote a normalized ground state of $H_\sigma$. Define
\begin{equation*}
H^{\rm ren}_{\sigma,\rho} = W( ig_{\sigma,\rho} ) H_\sigma W ( ig_{\sigma,\rho} )^* , \quad \Phi^{\rm ren}_{\sigma,\rho} = W( ig_{\sigma,\rho} ) \Phi_\sigma,
\end{equation*}
and let $P^{\rm ren}_{\sigma,\rho}$ be the orthogonal projection onto the vector space spanned by $\Phi^{\rm ren}_{\sigma,\rho}$. Note that $\Phi^{\rm ren}_{\sigma,\rho}$ is a normalized, non-degenerate ground state of $H^{\rm ren}_{\sigma,\rho}$, associated with the ground state energy $E_\sigma$. 
Recall that, by Lemma \ref{lm:gap1}, $  \left [ \Phi_\sigma \otimes \Omega_\tau^\sigma \right ]$ is a ground state of $H^\sigma |_{\mathcal{H}_\tau}$. We set
\begin{equation*}
H^{\mathrm{ren}}_{\sigma,\rho,\tau} = W( ig_{\sigma,\rho} ) H^\sigma |_{\mathcal{H}_\tau} W ( ig_{\sigma,\rho} )^* , \quad \Phi^{\rm ren}_{\sigma,\rho,\tau} = W( ig_{\sigma,\rho} )   [ \Phi_\sigma \otimes \Omega_\tau^\sigma ],
\end{equation*}
and the projection onto the vector space spanned by $\Phi^{\rm ren}_{\sigma,\rho,\tau}$ is denoted by $P^{\mathrm{ren}}_{\sigma,\rho,\tau}$. Since $W(ig_{\sigma,\rho} )= e^{i\Phi(ig_{\sigma,\rho}) \otimes \mathds{1} }$, it can be seen that $\Phi^{\rm ren}_{\sigma,\rho,\tau} =   [ W( ig_{\sigma,\rho} ) \Phi_\sigma ] \otimes \Omega_\tau^\sigma =   \Phi^{\mathrm{ren}}_{\sigma,\rho} \otimes \Omega_\tau^\sigma$. 
\begin{lemma}\label{lm:E'_GS}
There exists $g_0>0$, $\sigma_0>0$ and $P_0>0$ such that for all $0<|g|\le g_0$ and $|P_3| \le P_0$, for all $\sigma,\tau>0$ such that $\tau \le \sigma \le \sigma_0$,
\begin{equation}\label{eq:lm_E'_GS}
\left | E'_\sigma - E'_\tau \right | \le \mathrm{C} \left [ \left \| P^{\rm ren}_{\sigma,E'_\sigma,\tau} - P^{\rm ren}_{\tau,E'_\sigma} \right \| + g^2 \sigma \right ],
\end{equation}
where $\mathrm{C}$ is a positive constant.
\end{lemma}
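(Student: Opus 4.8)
The plan is to extract $E'_\sigma$ and $E'_\tau$ from Feynman–Hellmann-type formulas and compare them by transporting both ground-state expectations to a common reference frame via the Weyl dressing $W(ig_{\sigma,\rho})$. First I would recall from \eqref{eq:Feynman-Hellman} that $E'_\sigma = \tfrac{1}{m}(\Phi_\sigma,(P_3-\d\Gamma(k_3)-A_{3,\sigma}(x',0))\Phi_\sigma)$, and similarly $E'_\tau = \tfrac{1}{m}([\Phi_\tau],(P_3-\d\Gamma(k_3)-A_{3,\tau}(x',0))[\Phi_\tau])$. Since $[\Phi_\sigma\otimes\Omega_\tau^\sigma]$ is a ground state of $H^\sigma|_{\mathcal{H}_\tau}$ (Lemma \ref{lm:gap1}), one can also write $E'_\sigma = \tfrac{1}{m}(\Phi_\sigma\otimes\Omega_\tau^\sigma,(P_3-\d\Gamma(k_3)-A_{3,\sigma}(x',0))\Phi_\sigma\otimes\Omega_\tau^\sigma)$, an expectation over a vector in the \emph{same} space $\mathcal{H}_\tau$ as the one defining $E'_\tau$. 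Then, choosing the Weyl parameter $\rho = E'_\sigma$, I would conjugate both expressions by $W(ig_{\sigma,E'_\sigma})$, which is unitary on $\mathcal{H}_\tau$, so that $E'_\sigma$ and $E'_\tau$ become, up to the conjugation-induced shift of the field operator $A_{3}$ by a real constant (a multiple of $(g_{\sigma,E'_\sigma},\cdot)$), expectations of essentially the \emph{same} operator $T := \tfrac{1}{m}(P_3-\d\Gamma(k_3)-A_{3,\cdot}(x',0)) + (\text{const})$ in the states $\Phi^{\rm ren}_{\sigma,E'_\sigma,\tau}$ and $\Phi^{\rm ren}_{\tau,E'_\sigma}$ respectively. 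The specific choice $\rho = E'_\sigma$ is precisely what is needed so that the constant shifts cancel in the difference.

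Next, I would write the difference as
\[
E'_\sigma - E'_\tau = \mathrm{Tr}\big( T\, (P^{\rm ren}_{\sigma,E'_\sigma,\tau} - P^{\rm ren}_{\tau,E'_\sigma}) \big) + (\text{lower-order correction terms}),
\]
where the correction terms account for the facts that (a) $A_{3,\sigma}$ and $A_{3,\tau}$ differ by $A_{3,\tau}^\sigma$, a field operator cut off to the shell $\tau\le|k|\le\sigma$, and (b) $H^\sigma|_{\mathcal H_\tau}$ and $H_\tau$ differ by $H^\sigma_\tau$. Term (a) contributes something of size $O(g\sigma)$ after using the explicit form of the cutoff and the bound $\|[H_{f,\tau}^\sigma+\cdot]^{-1/2}a(\bold k)\Phi\|$-type estimates already employed in the gap section, and combined with the smallness of $|g|$ and the fact that $T$ itself is relatively bounded, this is absorbed into the $g^2\sigma$ error (using that $E'_\sigma$ is itself $O(g)$, or more carefully into a $|g|\sigma$ term, consistent with \eqref{eq:lm_E'_GS}). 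The main term is bounded by $\|T P^{\rm ren}\|$ times the projection difference; the key point is that $T$, as an operator, is controlled relative to $H^{\rm ren}_{\sigma,E'_\sigma,\tau}$ (whose ground state is $\Phi^{\rm ren}_{\sigma,E'_\sigma,\tau}$) and relative to $H^{\rm ren}_{\tau,E'_\sigma}$, so that $\|T P^{\rm ren}_{\sigma,E'_\sigma,\tau}\|$ and $\|T P^{\rm ren}_{\tau,E'_\sigma}\|$ are both $O(1)$, uniformly in the parameters — this uses the uniform gap estimates from Proposition \ref{prop:gap} and the standard $N$-bound / $H_f$-bound $H_{f,\tau}^\sigma \le \mathrm{C}(H^\sigma(P_3)|_{\mathcal H_\tau} - E_\sigma)$.

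The step I expect to be the main obstacle is making the operator $T$ genuinely common to both expectations rigorous: the Weyl conjugation shifts $A_{3,\sigma}(x',0)$ by the real number $\Re(g_{\sigma,\rho}, h_{3,\sigma}(x'))$-type quantity, but because $h_{3,\sigma}$ depends on $x'$ through the phase $e^{ik'\cdot x'}$ this shift is actually an $x'$-dependent function rather than a constant, so the cancellation in the difference is not perfectly clean and one must control the residual $x'$-dependent piece — this is exactly where the exponential decay of $\Phi_\sigma$ in $x'$ (advertised in the appendix introduction and to be established via the standard argument) enters, to bound $\|(e^{ik'\cdot x'}-1)\Phi_\sigma\|\le \mathrm C|k'|\,\|\,|x'|\,\Phi_\sigma\|$ and fold the error into $O(g^2\sigma)$. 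A secondary technical point is the careful bookkeeping of the difference $H^\sigma|_{\mathcal H_\tau} - H_\tau = H^\sigma_\tau$ when relating $\Phi^{\rm ren}_{\sigma,E'_\sigma,\tau}$-expectations to genuine ground-state expectations of $H_\tau$, for which a variational argument together with \eqref{eq:main_estimate} (as in Lemma \ref{lm:|Etau-Esigma|}) suffices. Once these are in place, collecting the main term $\mathrm C\|P^{\rm ren}_{\sigma,E'_\sigma,\tau}-P^{\rm ren}_{\tau,E'_\sigma}\|$ and the error $\mathrm C g^2\sigma$ gives \eqref{eq:lm_E'_GS}.
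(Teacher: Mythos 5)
Your proposal is correct and follows essentially the same route as the paper's proof: Feynman--Hellmann formulas for $E'_\sigma$ and $E'_\tau$, dressed by Weyl operators with the \emph{common} parameter $\rho=E'_\sigma$ (i.e.\ $W(ig_{\sigma,E'_\sigma})$ on the $\sigma$-side and $W(ig_{\tau,E'_\sigma})$ on the $\tau$-side --- your phrase ``conjugate both by $W(ig_{\sigma,E'_\sigma})$'' must be read this way, since otherwise the $\tau$-state would be $W(ig_{\sigma,E'_\sigma})\Phi_\tau$ rather than $\Phi^{\rm ren}_{\tau,E'_\sigma}$ and the mismatch on the shell is not $O(g^2\sigma)$ uniformly in $\tau$), followed by bounding the difference of the two ground-state expectations by $\| P^{\rm ren}_{\sigma,E'_\sigma,\tau}-P^{\rm ren}_{\tau,E'_\sigma}\|$ plus shell errors of size $g^2\sigma$. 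The only inessential deviations are that the paper controls the expectation difference through suitably phase-chosen ground-state vectors rather than a trace identity, and that no exponential decay in $x'$ is needed at this stage, because the $x'$-dependent shift discrepancy $g\Re\big(h^\sigma_{3,\tau}(x'),g^\sigma_{\tau,E'_\sigma}\big)$ is already bounded uniformly in $x'$ by $\mathrm{C}g^2\sigma$ (the decay in $x'$ is only required later, in Lemma \ref{lm:2nd_step}).
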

\begin{proof}
By the Feynman-Hellman formula (see \eqref{eq:Feynman-Hellman}),
\begin{equation}\label{eq:Feynman-Hellman2}
E'_\sigma = \frac{1}{m} \left ( \Phi_\sigma , \left [ P_3 - d\Gamma (k_3) - g A_{3,\sigma}( x',0) \right ] \Phi_\sigma \right )_{\mathcal{H}_\sigma}.
\end{equation}
It follows from \eqref{eq:Feynman-Hellman2} and commutation relations with $W(ig_{\sigma,E'_\sigma})$ that
\begin{equation}
\begin{split}
E'_\sigma = &\frac{1}{m} \left ( \Phi^{\mathrm{ren}}_{\sigma,E'_\sigma} , \left [ P_3 - d\Gamma (k_3) - \Phi ( k_3 g_{\sigma,E'_\sigma} ) - \frac{1}{2} ( k_3 g_{\sigma,E'_\sigma} , g_{\sigma,E'_\sigma} ) \right.\right. \\
&\left.\left. \phantom{\frac{1}{2} W( ig_{\sigma,E'_\sigma} ) \Phi_\sigma ,} - g A_{3,\sigma}( x',0) + g \Re( h_{3,\sigma}(x') , g_{\sigma,E'_\sigma} ) \right ] \Phi^{\mathrm{ren}}_{\sigma,E'_\sigma} \right )_{\mathcal{H}_\sigma},
\end{split}
\end{equation}
Consequently, for $\tau \le \sigma$, we can write
\begin{equation}\label{eq:E'sigma}
\begin{split}
E'_\sigma = &\frac{1}{m} \left ( \Phi^{\mathrm{ren}}_{\sigma,E'_\sigma,\tau} , \left [ P_3 - d\Gamma (k_3) - \Phi ( k_3 g_{\tau,E'_\sigma} ) - \frac{1}{2} ( k_3 g_{\sigma,E'_\sigma} , g_{\sigma,E'_\sigma} ) \right.\right. \\
&\left.\left. \phantom{\frac{1}{2} W( ig_{\sigma,E'_\sigma} ) \Phi_\sigma ,} - g A_{3,\tau}( x',0) + g \Re( h_{3,\sigma}(x') , g_{\sigma,E'_\sigma} ) \right ] \Phi^{\mathrm{ren}}_{\sigma,E'_\sigma,\tau} \right )_{\mathcal{H}_\tau},
\end{split}
\end{equation}
whereas
\begin{equation}\label{eq:E'tau}
\begin{split}
E'_\tau = &\frac{1}{m} \left ( \Phi^{\mathrm{ren}}_{\tau,E'_\sigma} , \left [ P_3 - d\Gamma (k_3) - \Phi ( k_3 g_{\tau,E'_\sigma} ) - \frac{1}{2} ( k_3 g_{\tau,E'_\sigma} , g_{\tau,E'_\sigma} ) \right.\right. \\
&\left.\left. \phantom{\frac{1}{2} W( ig_{\sigma,E'_\sigma} ) \Phi_\sigma ,} - g A_{3,\tau}( x',0) + g \Re( h_{3,\tau}(x') , g_{\tau,E'_\sigma} ) \right ] \Phi^{\mathrm{ren}}_{\tau,E'_\sigma} \right )_{\mathcal{H}_\tau}.
\end{split}
\end{equation}
The expression into brackets being uniformly bounded with respect to $H^{\mathrm{ren}}_{\sigma,E'_\sigma,\tau}$, one can prove that
\begin{equation}
\begin{split}
& \left \| \left [ P_3 - d\Gamma (k_3) - \Phi ( k_3 g_{\tau,E'_\sigma} ) - \frac{1}{2} ( k_3 g_{\sigma,E'_\sigma} , g_{\sigma,E'_\sigma} ) \right.\right. \\
&\left.\left. \phantom{\frac{1}{2}} - g A_{3,\tau}( x',0) + \Re( h_{3,\sigma}(x') , g_{\sigma,E'_\sigma} ) \right ] \Phi^{\mathrm{ren}}_{\sigma,E'_\sigma,\tau} \right \| \le \mathrm{C},
\end{split}
\end{equation}
and likewise with $\Phi^{\mathrm{ren}}_{\tau,E'_\sigma}$ replacing $\Phi^{\mathrm{ren}}_{\sigma,E'_\sigma,\tau}$. In addition, we have
\begin{equation}
\left | ( k_3 g_{\sigma,E'_\sigma} , g_{\sigma,E'_\sigma} ) - ( k_3 g_{\tau,E'_\sigma} , g_{\tau,E'_\sigma} ) \right | \le \mathrm{C} g^2 \sigma,
\end{equation}
and, similarly,
\begin{equation}
\left \| \left [ \Re( h_{3,\tau}(x') , g_{\tau,E'_\sigma} ) - \Re( h_{3,\sigma}(x') , g_{\sigma,E'_\sigma} ) \right ]  \Phi^{\mathrm{ren}}_{\tau,E'_\sigma} \right \| \le \mathrm{C} |g| \sigma.
\end{equation}
Estimating the difference of \eqref{eq:E'sigma} and \eqref{eq:E'tau} then leads to
\begin{equation}
\left | E'_\sigma - E'_\tau \right | \le \mathrm{C} \left [ \left \| \Phi^{\rm ren}_{\sigma,E'_\sigma,\tau} - \Phi^{\rm ren}_{\tau,E'_\sigma} \right \|_{\mathcal{H}_\tau} + g^2 \sigma \right ]
\end{equation}
The statement of the lemma now follows by choosing the non-degenerate ground states $\Phi^{\rm ren}_{\sigma,E'_\sigma,\tau}$ and $\Phi^{\rm ren}_{\tau,E'_\sigma}$ in such a way that
\begin{equation}
\left \| \Phi^{\rm ren}_{\sigma,E'_\sigma,\tau} - \Phi^{\rm ren}_{\tau,E'_\sigma} \right \|_{\mathcal{H}_\tau} \le \mathrm{C} \left \| P^{\rm ren}_{\sigma,E'_\sigma,\tau} - P^{\rm ren}_{\tau,E'_\sigma} \right \|.
\end{equation}
Note that this choice is indeed possible due to the non-degeneracy of the ground states $\Phi^{\rm ren}_{\sigma,E'_\sigma,\tau}$ and $\Phi^{\rm ren}_{\tau,E'_\sigma}$.
\end{proof}
For $g,P_3,\sigma,\rho$ as above, let us define the operator $\nabla H^{\rm ren}_{\tau,\rho}$  by
\begin{equation*}
\begin{split}
\nabla H^{\rm ren}_{\sigma,\rho} &= \frac{1}{m} W( ig_{\sigma,\rho} ) \big [ P_3 - d \Gamma(k_3) - gA_{3,\sigma}(x',0) \big ] W( ig_{\sigma,\rho} ) ^* \\
& = \frac{1}{m} \bigg [ÊP_3 - d\Gamma(k_3) - \Phi( k_3 g_{\sigma,\rho} ) - \frac{1}{2} ( k_3 g_{\sigma,\rho} , g_{\sigma,\rho} ) \\
& \phantom{ \nabla H^{\rm ren}_{\sigma,E'_\sigma,\tau} = \frac{1}{m} \bigg [ P_3 - d\Gamma } \quad - g A_{3,\sigma}( x',0 ) + g \Re ( h_{3,\sigma}(x') , g_{\sigma,\rho} ) \bigg ].
\end{split}
\end{equation*}
\begin{lemma}\label{lm:2nd_step}
Let $\Gamma_{\sigma,\mu}$ be the curve $\Gamma_{\sigma,\mu} = \{ \mu \sigma e^{i \nu}, \nu \in [0,2\pi[ \}$. There exist $g_0 > 0$, $\sigma_0>0$, $\mu>0$ and $P_0>0$, such that for all $0 < |g| \le g_0$, $|P_3| \le P_0$, for all $\sigma > 0$ and $\tau>0$ such that $\sigma/2 \le \tau \le \sigma \le \sigma_0$,
\begin{equation}\label{eq:2nd_step}
\begin{split}
 \left \| P^{\rm ren}_{\sigma,E'_\sigma,\tau} - P^{\rm ren}_{\tau,E'_\sigma} \right \| \le \mathrm{C} & |g|^{1/2} \sigma^{1/2} \sup_{z \in \Gamma_{\sigma,\mu}} \bigg [ 1 + \Big | \Big ( \left ( \nabla H^{\rm ren}_{\sigma,E'_\sigma} - E'_\sigma \right ) \Phi^{\rm ren}_{\sigma,E'_\sigma} , \\
& \big [ H^{\rm ren}_{\sigma,E'_\sigma} - E_\sigma - z \big ]^{-1} \left ( \nabla H^{\rm ren}_{\sigma,E'_\sigma} - E'_\sigma \right ) \Phi^{\rm ren}_{\sigma,E'_\sigma} \Big ) \Big |^{\frac{1}{2}} \bigg ],
\end{split}
\end{equation}
where $\mathrm{C}$ is a positive constant.
\end{lemma}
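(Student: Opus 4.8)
\emph{Proof plan.} I would follow the ``iterative analytic perturbation theory'' of \cite{CFP} (compare \cite{Pizzo}), the one genuinely new feature being the control of the electron--field cross terms by the exponential decay of $\Phi_\sigma$ in $x'$ mentioned at the beginning of the appendix. Throughout, set $\rho=E'_\sigma$, $A=H^{\mathrm{ren}}_{\sigma,E'_\sigma,\tau}$, $B=H^{\mathrm{ren}}_{\tau,E'_\sigma}$, both acting on $\mathcal{H}_\tau$: $A$ is unitarily equivalent, via $W(ig_{\sigma,E'_\sigma})$, to $H^\sigma(P_3)|_{\mathcal{H}_\tau}$, with simple ground state $\Phi_A:=\Phi^{\mathrm{ren}}_{\sigma,E'_\sigma,\tau}=\Phi^{\mathrm{ren}}_{\sigma,E'_\sigma}\otimes\Omega_\tau^\sigma$ at energy $E_\sigma$, and $B$ is unitarily equivalent, via $W(ig_{\tau,E'_\sigma})$, to $H_\tau(P_3)$, with simple ground state $\Phi_B:=\Phi^{\mathrm{ren}}_{\tau,E'_\sigma}$ at energy $E_\tau$. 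By Lemma \ref{lm:gap1} (with $\gamma=1/8$, licit since $\sigma/2\le\tau$), $\mathbf{Gap}(\tau)$ from Proposition \ref{prop:gap}, and Lemma \ref{lm:|Etau-Esigma|}, both $A$ and $B$ have a spectral gap of order $\sigma$ above their ground state energy and $|E_\sigma-E_\tau|\le\mathrm{C}|g|\sigma$; so, for $\mu$ small and then $g_0,\sigma_0,P_0$ small, the circle $z\in\Gamma_{\sigma,\mu}\mapsto E_\sigma+z$ encloses $E_\sigma$ and $E_\tau$ and stays at distance $\ge\mathrm{C}\sigma$ from $\sigma(A)\cup\sigma(B)$.

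\emph{Step 1 (reduction to $A-B$ on the ground state).} Since the two projections are rank one, $\|P^{\mathrm{ren}}_{\sigma,E'_\sigma,\tau}-P^{\mathrm{ren}}_{\tau,E'_\sigma}\|=\|(1-P^{\mathrm{ren}}_{\sigma,E'_\sigma,\tau})\Phi_B\|$, and the Riesz formula, the resolvent identity, and $B\Phi_B=E_\tau\Phi_B$ give
\begin{equation*}
(1-P^{\mathrm{ren}}_{\sigma,E'_\sigma,\tau})\Phi_B=-\frac{1}{2\pi i}\oint_{\Gamma_{\sigma,\mu}}(E_\sigma+z-A)^{-1}(A-B)\Phi_B\,\frac{dz}{z+E_\sigma-E_\tau}.
\end{equation*}
Using $|z+E_\sigma-E_\tau|\ge\mu\sigma/2$ on the contour, a relative-boundedness bound for $A-B$ of the type \eqref{eq:main_estimate} (transported by the Weyl conjugation), and a bootstrap — replacing $\Phi_B$ by $\Phi_A$, the error being $\mathrm{O}(|g|)\|P^{\mathrm{ren}}_{\sigma,E'_\sigma,\tau}-P^{\mathrm{ren}}_{\tau,E'_\sigma}\|$, absorbed on the left — I would reduce the claim to
\begin{equation*}
\big\|P^{\mathrm{ren}}_{\sigma,E'_\sigma,\tau}-P^{\mathrm{ren}}_{\tau,E'_\sigma}\big\|\le\mathrm{C}\sup_{z\in\Gamma_{\sigma,\mu}}\big\|(E_\sigma+z-A)^{-1}(A-B)\Phi_A\big\|.
\end{equation*}

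\emph{Step 2 (structure of $A-B$).} Next I would compute $(A-B)\Phi_A$. As $g_{\sigma,\rho},g_{\tau,\rho}$ are real and $g_{\tau,\rho}=g_{\sigma,\rho}+g^\sigma_{\tau,\rho}$ with $g^\sigma_{\tau,\rho}:=g_{\tau,\rho}-g_{\sigma,\rho}$ supported in $\{\tau\le|k|\le\sigma\}$, one has $W(ig_{\tau,\rho})=W(ig_{\sigma,\rho})W(ig^\sigma_{\tau,\rho})$ (trivial cocycle, since $\mathrm{Im}(if,ig)=0$ for real $f,g$), and with $H_\tau(P_3)=H^\sigma(P_3)|_{\mathcal{H}_\tau}+H^\sigma_\tau(P_3)$,
\begin{equation*}
A-B=W(ig_{\sigma,\rho})\Big[\,H^\sigma(P_3)|_{\mathcal{H}_\tau}-W(ig^\sigma_{\tau,\rho})\big(H^\sigma(P_3)|_{\mathcal{H}_\tau}+H^\sigma_\tau(P_3)\big)W(ig^\sigma_{\tau,\rho})^*\,\Big]W(ig_{\sigma,\rho})^*.
\end{equation*}
Since $W(ig_{\sigma,\rho})$ touches only photons of energy $\ge\sigma$, $W(ig_{\sigma,\rho})^*\Phi_A=\Phi_\sigma\otimes\Omega_\tau^\sigma$ still has the soft modes $\tau\le|k|\le\sigma$ in the vacuum; and $W(ig^\sigma_{\tau,\rho})$ only shifts those soft modes, which enter $H^\sigma(P_3)|_{\mathcal{H}_\tau}$ solely through $d\Gamma(k_3)$ and $H_f$. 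Expanding the conjugation with the canonical commutation relations and inserting the explicit expression \eqref{eq:Htausigma1} for $H^\sigma_\tau(P_3)$, the bracket splits as $V^\sigma_\tau+R^\sigma_\tau$, where $R^\sigma_\tau$ gathers the terms bounded (or bounded relative to $A$) with bound $\mathrm{O}(g^2\sigma)$ — the $c$-number shifts $\frac12(|k|g^\sigma_{\tau,\rho},g^\sigma_{\tau,\rho})$ and $\frac12(k_3g^\sigma_{\tau,\rho},g^\sigma_{\tau,\rho})$, the Wick-ordered soft-quadratic parts of $H^\sigma_\tau$, the terms in $\Re(h^\sigma_{j,\tau}(x'),g^\sigma_{\tau,\rho})$, etc. — while $V^\sigma_\tau$ is linear in the soft creation/annihilation operators. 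Here the choice $\rho=E'_\sigma$ is crucial — it is what the factor $\rho/(|k|-k_3\rho)$ in $g_{\sigma,\rho}$ is designed for —: on the soft-vacuum, the soft-linear terms produced by conjugating $d\Gamma(k_3)^2$ and $H_f$ cancel the soft-linear part of $H^\sigma_\tau(P_3)$ proportional to $A^\sigma_{3,\tau}(x',0)\big(P_3-d\Gamma(k_3)-gA_{3,\sigma}(x',0)\big)$ up to a remainder proportional to $\nabla H^{\mathrm{ren}}_{\sigma,E'_\sigma}-E'_\sigma$ (the relevant bookkeeping devices being the Feynman--Hellman formula \eqref{eq:Feynman-Hellman} and the identity \eqref{eq:pullthrough8}), and likewise for the $j=1,2$ cross terms $A^\sigma_{j,\tau}(x',0)(p_j-ea_j(x'))$. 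Consequently
\begin{equation*}
V^\sigma_\tau\big(\Phi_\sigma\otimes\Omega_\tau^\sigma\big)=\int_{\tau\le|k|\le\sigma}\frac{g\,\rho^\Lambda(k)}{\sqrt{2\pi}\,|k|^{1/2}}\,\mathcal{O}(\bold k)\big(\Phi_\sigma\otimes a^*(\bold k)\Omega_\tau^\sigma\big)\,d\bold k+\Psi_{\mathrm{err}},
\end{equation*}
with $\mathcal{O}(\bold k)$ equal, up to lower-order corrections, to a component of $\nabla H^{\mathrm{ren}}_{\sigma,E'_\sigma}-E'_\sigma$ times $\epsilon^\lambda(k)$ and the bounded factor $e^{ik'\cdot x'}$, and $\Psi_{\mathrm{err}}$ made of terms controlled, through $h^\sigma_{j,\tau}(x',\bold k)-h^\sigma_{j,\tau}(0,\bold k)=\mathrm{O}(|k|\,|x'|)$ (compare \eqref{eq:pullthrough7}), by the exponential decay of $\Phi_\sigma$ in $x'$.

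\emph{Step 3 (estimates).} Finally I would feed this into Step 1: the $R^\sigma_\tau$-part contributes at most $\mathrm{C}\,\sigma^{-1}\cdot g^2\sigma=\mathrm{C}g^2\le\mathrm{C}|g|^{1/2}\sigma^{1/2}$, absorbed in the ``$1$'' of \eqref{eq:2nd_step}. For the $V^\sigma_\tau$-part, since $A_{3,\sigma}(x',0)$ contains no soft mode, the soft photon created by $V^\sigma_\tau$ decouples apart from the shift $P_3\mapsto P_3-k_3$ in $d\Gamma(k_3)$, so $(E_\sigma+z-A)^{-1}$ acts on $\Phi_\sigma\otimes a^*(\bold k)\Omega_\tau^\sigma$ essentially as a resolvent of $H^{\mathrm{ren}}_{\sigma,E'_\sigma}$ at fiber momentum $P_3-k_3$ and spectral parameter $E_\sigma-|k|+z$; expanding that fiber around $P_3$, applying Cauchy--Schwarz in the soft-mode integral with $\int_{\tau\le|k|\le\sigma}g^2|k|^{-1}\,d\bold k\le\mathrm{C}g^2\sigma^2$ and the uniform bounds on $e^{ik'\cdot x'}$ and on $\Psi_{\mathrm{err}}$ supplied by the exponential decay of $\Phi_\sigma$ in $x'$, I would recover, uniformly for $z\in\Gamma_{\sigma,\mu}$, the bound $\mathrm{C}|g|^{1/2}\sigma^{1/2}\big(1+|(\cdots)|^{1/2}\big)$ on the right of \eqref{eq:2nd_step}. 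I expect the main obstacle to be Step 2: verifying that with $\rho=E'_\sigma$ the infrared-singular pieces of $V^\sigma_\tau$ recombine exactly into $\nabla H^{\mathrm{ren}}_{\sigma,E'_\sigma}-E'_\sigma$, with remainders harmless in Step 3, while simultaneously controlling — through the exponential decay of $\Phi_\sigma$ in $x'$, absent in the free-electron analysis of \cite{CFP} — the electron--field cross terms; the remaining steps are those of \cite{Pizzo,CFP} adapted to the present model.
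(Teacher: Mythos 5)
Your overall strategy is the same as the paper's: represent the difference of the two rank-one ground-state projections by a contour integral over $\Gamma_{\sigma,\mu}$, compute $H^{\rm ren}_{\tau,E'_\sigma}-H^{\rm ren}_{\sigma,E'_\sigma,\tau}$ explicitly, exploit the choice $\rho=E'_\sigma$ so that the infrared-singular soft-linear terms recombine into $\nabla H^{\rm ren}_{\sigma,E'_\sigma}-E'_\sigma$ (the paper implements this by adding and subtracting $E'_\sigma$ via $(E'_\sigma k_3-|k|)g_{\sigma,E'_\sigma}=-gE'_\sigma h_{3,\sigma}(0)$), and control the electron--field cross terms through the commutator identity \eqref{eq:[H,x]} and the exponential decay of the ground state in $x'$, finishing with CFP-type resolvent estimates. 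Where you deviate is the reduction step: the paper works with the \emph{quadratic} quantity $\|P^{\rm ren}_{\sigma,E'_\sigma,\tau}-P^{\rm ren}_{\tau,E'_\sigma}\|=|(\Phi^{\rm ren}_{\sigma,E'_\sigma,\tau},[P^{\rm ren}_{\sigma,E'_\sigma,\tau}-P^{\rm ren}_{\tau,E'_\sigma}]\Phi^{\rm ren}_{\sigma,E'_\sigma,\tau})|^{1/2}$ (\cite[Lemma II.11]{BFP}) and a full Neumann expansion, so that all ``diagonal'' contributions (in particular the c-number block $[d]$) are killed exactly by $\oint z^{-2}dz=0$, every term carrying $[a]$, $[b]$ or $[c]$ is bounded by $\mathrm{C}\sigma(\mathrm{C}'|g|)^n$, and the $|g|^{1/2}\sigma^{1/2}$ and the $|(\cdot,[H^{\rm ren}-E_\sigma-z]^{-1}\cdot)|^{1/2}$ in \eqref{eq:2nd_step} appear automatically upon taking the square root. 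You instead linearize via $\|P_1-P_2\|=\|(1-P_1)\Phi_B\|$ plus a first-order resolvent identity and a bootstrap, so you must prove strictly stronger vector-norm bounds on $[E_\sigma+z-A]^{-1}(A-B)\Phi_A$.

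This is where your Step 3 has a genuine gap: the claim that the remainder part contributes at most $\mathrm{C}\sigma^{-1}\cdot g^2\sigma=\mathrm{C}g^2\le\mathrm{C}|g|^{1/2}\sigma^{1/2}$ is false whenever $\sigma\lesssim|g|^3$, and that regime is unavoidable, since the lemma is applied at fixed $g$ and iterated over dyadic scales down to arbitrarily small $\sigma$ in Proposition \ref{prop:E'tau-E'sigma}; an additive error of order $g^2$ per scale, summed over $\log_2(\sigma/\tau)$ scales, would destroy the $\sigma^{1/2-\delta}$ bound and hence Theorem \ref{thm:regularity}. The crude estimate ``$\sigma^{-1}$ times an $O(g^2\sigma)$ remainder'' therefore does not close the argument. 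The repair, which is exactly what the paper's bookkeeping accomplishes, is twofold: (a) the genuinely constant (c-number) pieces of $A-B$, such as $[d]$ and the Wick-ordering shifts, must not be estimated through the resolvent at all — they do not move spectral projections, equivalently their contribution cancels identically in the contour integral (in your linear formula the $P_1$-component of the left resolvent also cancels between the two poles, so only the reduced resolvent survives); and (b) the remainder terms that multiply the unbounded electron factors $p_j-ea_j(x')-gA_{j,\sigma}+\dots$ must be treated with the identity \eqref{eq:[H,x]} (writing them as $(H^{\rm ren}-E_\sigma)x'_j$ on the ground state, so the resolvent loses no factor $\sigma^{-1}$) together with the exponential decay of $\Phi^{\rm ren}_{\sigma,E'_\sigma,\tau}$ in $x'$, which yields $O(g^2\sigma)$ rather than $O(g^2)$. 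With these corrections your Step 2 cancellation and your Cauchy--Schwarz treatment of the soft-linear part (including the comparison of the resolvent at shifted momentum and spectral parameter $E_\sigma+z-|k|$ with the one in \eqref{eq:2nd_step}, which still needs the CFP-type argument you invoke) do lead to the stated bound, but as written the remainder estimate would fail.
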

\begin{proof}
By \cite[Lemma II.11]{BFP},
\begin{equation}\label{eq:lemma_BFP}
\left \| P^{\rm ren}_{\sigma,E'_\sigma,\tau} - P^{\rm ren}_{\tau,E'_\sigma} \right \| = \left | \left ( \Phi^{\rm ren}_{\sigma,E'_\sigma,\tau} , [ P^{\rm ren}_{\sigma,E'_\sigma,\tau} - P^{\rm ren}_{\tau,E'_\sigma} ] \Phi^{\rm ren}_{\sigma,E'_\sigma,\tau} \right ) \right |^{1/2}.
\end{equation}
It follows from Lemma \ref{lm:gap1} and Proposition \ref{prop:gap} that $\mathrm{Gap}( H^{\rm ren}_{\sigma,E'_\sigma,\tau} ) \ge \sigma / 8$ and $\mathrm{Gap}( H^{\rm ren}_{\tau,E'_\sigma} ) \ge \tau / 8 \ge \sigma / 16$. Therefore, since $| E_\sigma - E_\tau | \le \mathrm{C} |g| \sigma$ by Lemma \ref{lm:|Etau-Esigma|}, we can write
\begin{equation*}\label{eq:proj_resolvents}
P^{\rm ren}_{\sigma,E'_\sigma,\tau} - P^{\rm ren}_{\tau,E'_\sigma} = \frac{i}{2\pi} \oint_{\Gamma_{\sigma,\mu}} \left ( \left [ H^{\rm ren}_{\sigma,E'_\sigma,\tau} - E_\sigma - z \right ]^{-1} - \left [ H^{\rm ren}_{\tau,E'_\sigma} - E_\sigma - z \right ]^{-1} \right ) dz,
\end{equation*}
provided $\mu<1/16$ and $|g|$ is sufficiently small. Expanding $\left [ H^{\rm ren}_{\tau,E'_\sigma} - E_\sigma - z \right ]^{-1}$ into a (convergent) Neumann series yields
\begin{equation*}\label{eq:resolvents}
\begin{split}
 P^{\rm ren}_{\sigma,E'_\sigma,\tau} - P^{\rm ren}_{\tau,E'_\sigma} = \frac{i}{2\pi} \sum_{n\ge 1} \oint_{\Gamma_{\sigma,\mu}} & (-1)^n \left [ H^{\rm ren}_{\sigma,E'_\sigma,\tau} - E_\sigma - z \right ]^{-1} \\
& \left ( \left [ H^{\rm ren}_{\tau,E'_\sigma} - H^{\rm ren}_{\sigma,E'_\sigma,\tau} \right ] \left [ H^{\rm ren}_{\sigma,E'_\sigma,\tau} - E_\sigma - z \right ]^{-1} \right )^n dz.
\end{split}
\end{equation*}
Let us compute the difference $H^{\rm ren}_{\tau,E'_\sigma} - H^{\rm ren}_{\sigma,E'_\sigma,\tau}$ explicitly. We have:
\begin{equation*}
\begin{split}
&H^{\rm ren}_{\sigma,E'_\sigma,\tau} = \frac{1}{2m} \sum_{j=1,2} \Big ( p_j - e a_j( x' ) - g A_{j,\sigma}(x',0) + g \Re ( h_{j,\sigma}(x') , g_{\sigma,E'_\sigma} )  \Big )^2 \\
& + \frac{m}{2} ( \nabla H^{\mathrm{ren}}_{\sigma,E'_\sigma} )^2 - \frac{e}{2m} \sigma_3b(x') - \frac{g}{2m} \sigma \cdot \Big ( B_\sigma(x',0) - \Re ( \tilde h_{\sigma}(x') , g_{\sigma,E'_\sigma} ) \Big ) \\
& + V(x') + H_f + \Phi ( |k| g_{\sigma,E'_\sigma} ) + \frac{1}{2} ( |k| g_{\sigma,E'_\sigma} , g_{\sigma,E'_\sigma} ) - \frac{g^2}{2m}( \Lambda^2 - \sigma^2 ),
\end{split}
\end{equation*}
and
\begin{equation*}
\begin{split}
&H^{\rm ren}_{\tau,E'_\sigma} = \frac{1}{2m} \sum_{j=1,2} \Big ( p_j - e a_j( x' ) - g A_{j,\tau}(x',0) + g \Re ( h_{j,\tau}(x') , g_{\tau,E'_\sigma} )  \Big )^2 \\
& + \frac{m}{2} ( \nabla H^{\mathrm{ren}}_{\tau,E'_\sigma} )^2 - \frac{e}{2m} \sigma_3b(x') - \frac{g}{2m} \sigma \cdot \Big ( B_\sigma(x',0) - \Re ( \tilde h_{\tau}(x') , g_{\tau,E'_\sigma} ) \Big ) \\
& + V(x') + H_f + \Phi ( |k| g_{\tau,E'_\sigma} ) + \frac{1}{2} ( |k| g_{\tau,E'_\sigma} , g_{\tau,E'_\sigma} ) - \frac{g^2}{2m}( \Lambda^2 - \tau^2 ).
\end{split}
\end{equation*}
Let us decompose:
\begin{equation}\label{eq:diff_Ham}
\begin{split}
& H^{\rm ren}_{\tau,E'_\sigma} - H^{\rm ren}_{\sigma,E'_\sigma,\tau} = [a] + [b] + [c] + [d] + [e], 
\end{split}
\end{equation}
with
\begin{equation*}
\begin{split}
[a] = \frac{1}{m} \sum_{j=1,2} & \Big ( - g A_{j,\tau}^\sigma(0,0) + g \Re ( h_{j,\tau}(0) , g_{\tau,E'_\sigma}^\sigma ) \Big ) \\
& \times \Big ( p_j - e a_j( x' ) - g A_{j,\sigma}(x',0) + g \Re ( h_{j,\sigma}(x') , g_{\sigma,E'_\sigma} )  \Big ), \phantom{aaaaaaaaaaaaaaaaaa}
\end{split}
\end{equation*}
\begin{equation*}
\begin{split}
&[b] = \frac{1}{2m} \sum_{j=1,2} \Big ( - g A_{j,\tau}^\sigma(x',0) + g \Re ( h_{j,\tau}(x') , g_{\tau,E'_\sigma}^\sigma ) \Big )^2 - \frac{g^2}{2m}( \sigma^2 - \tau^2 )\\
&\quad +\frac{1}{2m} \Big ( - \Phi( k_3 g_{\tau,E'_\sigma}^\sigma ) - \frac{1}{2} ( k_3 g_{\tau,E'_\sigma}^\sigma , g_{\tau,E'_\sigma}^\sigma ) - g A_{3,\tau}^\sigma(x',0) + g \Re ( h_{3,\tau}(x') , g_{\tau,E'_\sigma}^\sigma ) \Big )^2, \phantom{aaaaaaaaaaaaaaaaaa} \\
&\quad + \frac{g}{2m} \sigma \cdot \Re \bigg ( \tilde h_{\tau}(x') - \tilde h_\sigma(x') , g_{\tau,E'_\sigma} \bigg )
\end{split}
\end{equation*}
\begin{equation*}
\begin{split}
[c] = & \frac{1}{m} \sum_{j=1,2} \Big ( - g ( A_{j,\tau}^\sigma(x',0) - A_{j,\tau}^\sigma(0) ) + g \Re ( h_{j,\tau}(x') - h_{j,\tau}(0) , g_{\tau,E'_\sigma}^\sigma ) \Big ) \\
&\phantom{\frac{1}{m} \sum_{j=1,2}} \times \Big ( p_j - e a_j( x' ) - g A_{j,\sigma}(x',0) + g \Re ( h_{j,\sigma}(x') , g_{\sigma,E'_\sigma} )  \Big ) \phantom{aaaaaaaaaaaaaaaaaa} \\
& - g E'_\sigma [ A_{3,\tau}^\sigma(x',0) - A_{3,\tau}^\sigma( 0 , 0 ) ] + g E'_\sigma  \Re ( h_{3,\tau}(x') - h_{3,\tau}(0) , g_{\tau,E'_\sigma}^\sigma ), \phantom{aaaaaaaaaaaaaaaaaaaaaaaaaaaaaaaaaaaaaaaaaa}
\end{split}
\end{equation*}
\begin{equation*}
[d] = g E'_\sigma ( h_{3,\tau}(0) , g_{\tau,E'_\sigma}^\sigma ) - \frac{1}{2} E'_\sigma ( k_3 g_{\tau,E'_\sigma}^\sigma , g_{\tau,E'_\sigma}^\sigma ),\phantom{aaaaaaaaaaaaaaaaaaaaaaaaaaaaaaaaaaaaaaaaaaaaaaaaaaaaaaa}
\end{equation*}
\begin{equation*}
\begin{split}
& [e]=\frac{1}{2} \Big ( - \Phi( k_3 g_{\tau,E'_\sigma}^\sigma ) - \frac{1}{2} ( k_3 g_{\tau,E'_\sigma}^\sigma , g_{\tau,E'_\sigma}^\sigma ) - g A_{3,\tau}^\sigma(x',0) + g \Re ( h_{3,\tau}(x') , g_{\tau,E'_\sigma}^\sigma ) \Big ) \\
& \phantom{[5]\frac{1}{2m}}  \times \Big ( \nabla H^{\rm ren}_{\sigma,E'_\sigma} - E'_\sigma \Big )  + \frac{1}{2} \Big ( \nabla H^{\rm ren}_{\sigma,E'_\sigma} - E'_\sigma \Big ) \\
& \phantom{ [5]\frac{1}{2m} } \times \Big ( - \Phi( k_3 g_{\tau,E'_\sigma}^\sigma ) - \frac{1}{2} ( k_3 g_{\tau,E'_\sigma}^\sigma , g_{\tau,E'_\sigma}^\sigma ) - g A_{3,\tau}^\sigma(x',0) + g \Re ( h_{3,\tau}(x') , g_{\tau,E'_\sigma}^\sigma ) \Big ). \phantom{aaaaa}
\end{split}
\end{equation*}
Note that we have added and subtracted $E'_\sigma$, using the identity $( E'_\sigma k_3 - |k| ) g_{\sigma,E'_\sigma} = - g E'_\sigma h_{3,\sigma} (0)$ and likewise with $g_{\tau,E'_\sigma}$ replacing $g_{\sigma,E'_\sigma}$. Let us now consider, for some $n \ge 1$,
\begin{equation}\label{eq:estimate_Neumann}
\begin{split}
\oint_{\Gamma_{\sigma,\mu}} \bigg (  \Phi^{\rm ren}_{\sigma,E'_\sigma,\tau} , &\left [ H^{\rm ren}_{\sigma,E'_\sigma,\tau} - E_\sigma - z \right ]^{-1} \\
&\left ( \left [ H^{\rm ren}_{\tau,E'_\sigma} - H^{\rm ren}_{\sigma,E'_\sigma,\tau} \right ] \left [ H^{\rm ren}_{\sigma,E'_\sigma,\tau} - E_\sigma - z \right ]^{-1} \right )^n \Phi^{\rm ren}_{\sigma,E'_\sigma,\tau} \bigg ).
\end{split}
\end{equation}
We insert \eqref{eq:diff_Ham} into the right-hand side of \eqref{eq:estimate_Neumann}, thus obtaining a sum of terms that we estimate separately. We claim that all the terms where at least one of the operators $[a]$, $[b]$, or $[c]$ appear, are bounded by $\mathrm{C} \sigma (\mathrm{C}' |g| )^n $ where $\mathrm{C},\mathrm{C}'$ are two positive constants. The latter can be proven by means of rather standard estimates involving pull-through formulas (see for instance \cite{BFS,Pizzo,BFP,CFP}), so we shall not give all the details. Let us still emphasize that in order to deal with $[a]$ or $[c]$ we need to use  the exponential decay of $\Phi^{\rm ren}_{\sigma,E'_\sigma,\tau}$ in $x'$ (proven in \cite[Appendix A]{AGG2}). This is the main difficulty we encounter compared to the proof of \cite{CFP}. In order to overcome it, we adapt a method due to \cite{Sigal} (see also \cite[Section 5]{AFFS}). Let us give an example: Consider
\begin{equation}\label{eq:example}
\bigg (  \Phi^{\rm ren}_{\sigma,E'_\sigma,\tau} , [e] \left [ H^{\rm ren}_{\sigma,E'_\sigma,\tau} - E_\sigma - z \right ]^{-1} [a] \left [ H^{\rm ren}_{\sigma,E'_\sigma,\tau} - E_\sigma - z \right ]^{-1} [e] \Phi^{\rm ren}_{\sigma,E'_\sigma,\tau} \bigg ).
\end{equation}
We shall take advantage of the identity
\begin{equation}\label{eq:[H,x]}
\Big ( p_j - e a_j( x' ) - g A_{j,\sigma}(x',0) + g \Re ( h_{j,\sigma}(x') , g_{\sigma,E'_\sigma} )  \Big ) =  2i \left [ H^{\rm ren}_{\sigma,E'_\sigma,\tau} , x'_j \right ]
\end{equation}
which holds in the sense of quadratic forms on $D( H^{\mathrm{ren}}_{\sigma,E'_\sigma,\tau} ) \cap D( x'_j )$. 
The field operator $A_{j,\sigma}^\tau(0,0) = \Phi( h_{j,\sigma}^\tau )$ in $[a]$ decompose into a sum of a creation operator and an annihilation operator that are estimated separately. Take for instance the creation operator. Using a pull-through formula, we have to bound:
\begin{equation}\label{eq:example2}
\begin{split}
g \int h_{j,\tau}^\sigma( \bold k ) \bigg (  \Phi^{\rm ren}_{\sigma,E'_\sigma,\tau} , & [e] a^*( \bold k ) \left [ H^{\rm ren}_{\sigma,E'_\sigma,\tau}(P_3-k_3) - E_\sigma + |k| - z \right ]^{-1} \\
& \left [ H^{\rm ren}_{\sigma,E'_\sigma,\tau} , x'_j \right ] \left [ H^{\rm ren}_{\sigma,E'_\sigma,\tau} - E_\sigma - z \right ]^{-1} [e] \Phi^{\rm ren}_{\sigma,E'_\sigma,\tau} \bigg ) d \bold k.
\end{split}
\end{equation}
Let $\gamma>0$ be such that $\| e^{ \gamma \langle x' \rangle} \Phi^{\rm ren}_{\sigma,E'_\sigma,\tau} \| < \infty$. Undoing the commutator $[ H^{\rm ren}_{\sigma,E'_\sigma,\tau} , x'_j ]$ gives two terms. We write the first one under the form
\begin{equation*}
\begin{split}
 g \int h_{j,\tau}^\sigma( \bold k ) \bigg ( & \left ( H^{\rm ren}_{\sigma,E'_\sigma,\tau} - E_\sigma \right )  \left [ H^{\rm ren}_{\sigma,E'_\sigma,\tau}(P_3-k_3) - E_\sigma + |k| - \bar z \right ]^{-1} a( \bold k ) [e]^* \Phi^{\rm ren}_{\sigma,E'_\sigma,\tau} , \\
&  x'_j e^{- \gamma \langle x' \rangle } e^{\gamma \langle x' \rangle } \left [ H^{\rm ren}_{\sigma,E'_\sigma,\tau} - E_\sigma - z \right ]^{-1} e^{-\gamma \langle x' \rangle } [e] e^{\gamma \langle x' \rangle} \Phi^{\rm ren}_{\sigma,E'_\sigma,\tau} \bigg ) d \bold k.
\end{split}
\end{equation*}
Now we have the following estimates:
\begin{align}
& \left \| e^{\gamma \langle x' \rangle } \left [ H^{\rm ren}_{\sigma,E'_\sigma,\tau} - E_\sigma - z \right ]^{-1} e^{-\gamma \langle x' \rangle } [e] e^{\gamma \langle x' \rangle} \Phi^{\rm ren}_{\sigma,E'_\sigma,\tau} \right \| \le \mathrm{C} |g|, \\
& \left \| x'_j e^{ - \gamma \langle x' \rangle } \right \| \le \mathrm{C}, \\
& \left \| \left [ H^{\rm ren}_{\sigma,E'_\sigma,\tau}(P_3-k_3) - E_\sigma + |k| - z \right ]^{-1} \left ( H^{\rm ren}_{\sigma,E'_\sigma,\tau} - E_\sigma \right ) \right \| \le \mathrm{C}, \label{eq:estimate_example3} \\
& \left \| a( \bold k ) [e]^* \Phi^{\rm ren}_{\sigma,E'_\sigma,\tau} \right \| \le \mathrm{C} |g| |k|^{-1/2}. \label{eq:estimate_example4}
\end{align}
Note that in \eqref{eq:estimate_example3} and \eqref{eq:estimate_example4}, we used that $\tau \le |k| \le \sigma$, and thus in particular that $a( \bold k ) \Phi^{\mathrm{ren}}_{\sigma,E'_\sigma,\tau} = 0$. Since the other term coming from the commutator $[ H^{\rm ren}_{\sigma,E'_\sigma,\tau} , x'_j ]$ can be estimated in the same way, this yields
\begin{equation}
| \eqref{eq:example2} | \le \mathrm{C} |g|^3 \int | h_{j,\tau}^\sigma( \bold k ) | |k|^{-1/2} d \bold k \le \mathrm{C} |g|^3 \sigma^2.
\end{equation}
Taking into account the factor $\sigma$ coming from the integration in \eqref{eq:estimate_Neumann} would finally lead to our claim in the case of the example \eqref{eq:example}. 
The same holds for the terms containing $[c]$ at least once (except that the use of \eqref{eq:[H,x]} is then not required). Besides, since $[d]$ is constant,
\begin{equation*}\label{eq:estimate_n}
\oint_{\Gamma_{\sigma,\mu}} \left ( \Phi^{\rm ren}_{\sigma,E'_\sigma,\tau} , \left [ H^{\rm ren}_{\sigma,E'_\sigma,\tau} - E_\sigma - z \right ]^{-1} \left ( \left [ d \right ] \left [ H^{\rm ren}_{\sigma,E'_\sigma,\tau} - E_\sigma - z \right ]^{-1} \right )^n \Phi^{\rm ren}_{\sigma,E'_\sigma,\tau} \right )=0.
\end{equation*}
Therefore it remains to consider the terms containing only $[d]$ or $[e]$, with $[e]$ appearing at least in one factor. One can prove that this leads to
\begin{equation*}
\begin{split}
& \left \| P^{\rm ren}_{\sigma,E'_\sigma,\tau} - P^{\rm ren}_{\tau,E'_\sigma} \right \| \le \mathrm{C} |g|^{1/2}\sigma^{1/2} \sup_{z \in \Gamma_{\sigma,\mu}} \Bigg [ 1 + \sigma^{-1} \bigg \| \left | H^{\rm ren}_{\sigma,E'_\sigma,\tau} - E_\sigma - z \right |^{-1/2} \\ 
& \qquad \bigg ( - \Phi( k_3 g_{\tau,E'_\sigma}^\sigma ) - \frac{1}{2} ( k_3 g_{\tau,E'_\sigma}^\sigma , g_{\tau,E'_\sigma}^\sigma ) - g A_{3,\tau}^\sigma(x',0) + g \Re ( h_{3,\tau}(x') , g_{\tau,E'_\sigma}^\sigma )  \bigg ) \\
& \qquad \left [ \nabla H^{\rm ren}_{\sigma,E'_\sigma} - E'_\sigma \right ] \Phi^{\rm ren}_{\sigma,E'_\sigma,\tau} \bigg \|  \Bigg ].
\end{split}
\end{equation*}
Using again the exponential decay of $\Phi^{\mathrm{ren}}_{\sigma,E'_\sigma}$ in $x'$, we may replace $\Re ( f_{3,\tau}(x') , g_{\tau,E'_\sigma}^\sigma )$ with $\Re ( f_{3,\tau}(0) , g_{\tau,E'_\sigma}^\sigma )$ in the previous expression. Proceeding then as in \cite[Lemma A.3]{CFP}, since both
$$
( \Phi( k_3 g_{\tau,E'_\sigma}^\sigma ) + g A_{3,\tau}^\sigma(x',0) ) ( \nabla H^{\rm ren}_{\sigma,E'_\sigma} - E'_\sigma ) \Phi^{\rm ren}_{\sigma,E'_\sigma,\tau} \quad \text{and} \quad ( \nabla H^{\rm ren}_{\sigma,E'_\sigma} - E'_\sigma ) \Phi^{\rm ren}_{\sigma,E'_\sigma,\tau}
$$
are orthogonal to $\Phi^{\mathrm{ren}}_{\sigma,E'_\sigma,\tau}$, we obtain Inequality \eqref{eq:2nd_step} (notice in particular that $\sigma_0$ and $\mu$ must be fixed sufficiently small to pass from the last estimate to \eqref{eq:2nd_step}).
\end{proof}
\textsc{Proof of Proposition \ref{prop:E'tau-E'sigma}} 
To conclude the proof of Proposition \ref{prop:E'tau-E'sigma}, in view of Lemmata \ref{lm:E'_GS} and \ref{lm:2nd_step}, it suffices to show that
\begin{equation*}
\Big | \Big ( \left ( \nabla H^{\rm ren}_{\sigma,E'_\sigma} - E'_\sigma \right ) \Phi^{\rm ren}_{\sigma,E'_\sigma} ,  \big [ H^{\rm ren}_{\sigma,E'_\sigma} - E_\sigma - z \big ]^{-1} \left ( \nabla H^{\rm ren}_{\sigma,E'_\sigma} - E'_\sigma \right ) \Phi^{\rm ren}_{\sigma,E'_\sigma} \Big ) \Big | \le \frac{ \mathrm{C}_\delta }{ |g| \sigma^{2\delta} },
\end{equation*}
for any $z \in \Gamma_{\sigma,\mu}$ and any $\delta>0$. This corresponds to the bound (IV.68) in \cite{CFP} and can be proven in the same way as in \cite[Subsection IV.5, step (4)]{CFP}, using an induction procedure. We therefore refer the reader to \cite{CFP} for a proof.
\hfill $\square$ \\\\ 
\textsc{Proof of Theorem \ref{thm:regularity}}
Fix $P_3$ and $k_3$ such that $|P_3| \le P_0$, $|P_3+k_3| \le P_0$. One can see that there exist positive constants $\mathrm{C}_0$ and $\mathrm{C}$ such that, for any $0<\beta<1$ and $\sigma \ge \mathrm{C}_0 |k_3|^\beta$,
\begin{equation}
\left | E'_\sigma( P_3+k_3 ) - E'_\sigma(P_3) \right | \le \mathrm{C}  |k_3|^{\frac{1}{2}(1-\beta)}.
\end{equation}
This can be proven by estimating $| E'_\sigma( P_3+k_3 ) - E'_\sigma(P_3) |$ in terms of $ \| \Phi_\sigma( P_3 + k_3) - \Phi_\sigma(P_3) \|$, then using the second resolvent equation to estimate $\| [ H_\sigma(P_3+k_3) - z ]^{-1} - [H_\sigma(P_3) -z]^{-1} \|$. Now, for $\sigma \le \mathrm{C}_0 |k_3|^\beta$, we use Proposition \ref{prop:E'tau-E'sigma}, which yields
\begin{equation*}
\begin{split}
&\left | E'_\sigma( P_3+k_3 ) - E'_\sigma(P_3) \right | \\
&\le \left | E'_\sigma( P_3+k_3 ) - E'_{\mathrm{C}_0 |k_3|^\beta}(P_3 + k_3) \right | + \left | E'_{\mathrm{C}_0 |k_3|^\beta}( P_3+k_3 ) - E'_{\mathrm{C}_0 |k_3|^\beta}(P_3) \right | \\
&\quad+ \left | E'_{\sigma}( P_3 ) - E'_{\mathrm{C}_0|k_3|^\beta}(P_3) \right | \\
&\le \mathrm{C}_\delta \left [ |k_3|^{\frac{1}{2}(1-\beta)} + |k_3|^{\frac{1}{2}\beta(1-\delta)} \right ].
\end{split}
\end{equation*}
The theorem follows by choosing $\beta = [2-\delta]^{-1}$.
\hfill $\square$

\bibliographystyle{amsalpha}

\end{document}